\documentclass[11pt]{article} 
\usepackage{graphicx,subcaption}
\graphicspath{{figure_bounded/}}
\usepackage{hyperref}
\usepackage[dvipsnames]{xcolor}
\usepackage{amssymb,amsthm,amsmath}
\usepackage{float,longtable}
\usepackage[numbers]{natbib}
\usepackage{mathtools}

\usepackage[left=1in,top=1in,right=1in,bottom=1in]{geometry}
\newcommand{\blind}{1}

\DeclarePairedDelimiter{\ceil}{\lceil}{\rceil}

\DeclareMathOperator{\argmin}{\arg\min}

\DeclareMathOperator{\Dirac}{Dirac}
\DeclareMathOperator{\expit}{expit}
\DeclareMathOperator{\logit}{logit}

\newcommand{\As}[2]{\textbf{B#1}(#2)}

\newcommand{\Cdot}{\raisebox{-0.3ex}{\scalebox{1.2}{$\cdot$}}}

\newcommand{\CV}{\mathrm{CV}}
\newcommand{\bbG}{\mathbb{G}}
\newcommand{\bbH}{\mathbb{H}}
\newcommand{\bbK}{\mathbb{K}}
\newcommand{\bG}{G}
\newcommand{\bbP}{\mathbb{P}}
\newcommand{\bq}{\bar{q}}
\newcommand{\bQ}{\bar{Q}}
\newcommand{\bbQ}{\mathbb{Q}}
\newcommand{\bbbQ}{\bar{\bbQ}}
\newcommand{\floor}[1]{\lfloor #1 \rfloor}
\newcommand{\hnk}{h_{n,\kappa_{n}}}
\newcommand{\IF}{\mathrm{IF}}
\newcommand{\KL}{\mathrm{KL}}
\newcommand{\Rem}{\mathrm{Rem}}
\renewcommand{\th}{\tilde{h}}
\newcommand{\Var}{\mathrm{Var}}
\newcommand{\xC}{\mathcal{C}}
\newcommand{\xG}{\mathcal{G}}
\newcommand{\xH}{\mathcal{H}}
\newcommand{\xM}{\mathcal{M}}
\newcommand{\xO}{\mathcal{O}}
\newcommand{\xQ}{\mathcal{Q}}
\newcommand{\xR}{\mathbb{R}}
\newcommand{\xS}{\mathcal{S}}
\newcommand{\xT}{\mathcal{T}}

\makeatletter
\renewcommand\subsubsection{%
  \@startsection{subsubsection}{3}{\z@}%
  {-3.25ex\@plus -1ex \@minus -.2ex}%
  {-1.5ex \@plus .2ex}%
  {\normalfont\normalsize\bfseries}%
}
\makeatother

\newtheorem{theorem}{Theorem}
\newtheorem{lemma}[theorem]{Lemma}
\newtheorem{corollary}[theorem]{Corollary}

\allowdisplaybreaks

\title{Collaborative targeted minimum loss  inference from continuously indexed nuisance parameter estimators
}

\author{Cheng Ju$^1$, Antoine Chambaz$^2$, Mark J. van der Laan$^1$\\[1em] 
  $^1$ Division of Biostatistics, UC Berkeley\\
  $^2$ MAP5 (UMR 8145), Universit\'e Paris Descartes}

\begin{document}
\def\spacingset#1{\renewcommand{\baselinestretch}%
  {#1}\small\normalsize} \spacingset{1}%
\if1\blind {%
  \title{\textbf{Collaborative targeted 
      inference  from  continuously  indexed nuisance  parameter  estimators}}
  \author{Cheng Ju
    \hspace{.2cm}\\
    Division of Biostatistics, UC Berkeley\\
    and \\
    Antoine Chambaz\\
    MAP5 (UMR 8145), Universit\'e Paris Descartes\\
    and \\
    Mark J. van der Laan\\
    Division of Biostatistics, UC Berkeley }
  \maketitle
} \fi \if0\blind { \bigskip \bigskip \bigskip
  \begin{center}
    \LARGE{\textbf{Collaborative 
        inference from continuously indexed nuisance parameter estimators}}
  \end{center}
  \medskip } \fi

\bigskip
\begin{abstract}
  Suppose that we wish to infer the  value of a statistical parameter at a law
  from which we sample independent  observations.  Suppose that this parameter
  is   smooth   and   that    we   can   define   two   variation-independent,
  infinite-dimensional  features   of  the  law,   its  so  called   $Q$-  and
  $G$-components (comp.), such that if we estimate them consistently at a fast
  enough product of rates, then we can build a confidence interval (CI) with a
  given  asymptotic level  based on  a plain  targeted minimum  loss estimator
  (TMLE).  The  estimators of the  $Q$- and  $G$-comp.\ would typically  be by
  products of machine learning algorithms.  We focus on the case
  that 
  the  machine  learning algorithm  for  the  $G$-comp.\  is fine-tuned  by  a
  real-valued parameter $h$. 
  Then, a  plain TMLE with an  $h$ chosen by cross-validation  would typically
  not lend itself  to the construction of  a CI, because the  selection of $h$
  would  trade-off its  empirical bias  with something  akin to  the empirical
  variance of the estimator of the $G$-comp.\  as opposed to that of the TMLE.
  A  collaborative TMLE  (C-TMLE)  might, however,  succeed  in achieving  the
  relevant trade-off.  We prove that this is the case indeed.

  We construct  a C-TMLE and  show that, under high-level  empirical processes
  conditions, and if  there exists an oracle $h$ that  makes a bulky remainder
  term  asymptotically Gaussian,  then the  C-TMLE is  asymptotically Gaussian
  hence amenable to building a CI provided that its asymptotic variance can be
  estimated too.  The construction hinges  on guaranteeing that an additional,
  well chosen estimating equation is solved  on top of the estimating equation
  that a plain  TMLE solves. The optimal $h$ is  chosen by cross-validating an
  empirical criterion  that guarantees the wished  trade-off between empirical
  bias and variance.

  We illustrate the construction and main  result with the inference of the so
  called average treatment effect, where the $Q$-comp.\ consists in a marginal
  law and a conditional expectation, and  the $G$-comp.\ is a propensity score
  (a  conditional probability).   We also  conduct a  multi-faceted simulation
  study to investigate the empirical properties of the collaborative TMLE when
  the $G$-comp.\  is estimated by  the LASSO.  Here, $h$  is the bound  on the
  $\ell^{1}$-norm  of the  candidate coefficients.   The variety  of scenarios
  shed light  on small and  moderate sample properties,  in the face  of low-,
  moderate- or  high-dimensional baseline covariates, and  possibly positivity
  violation.
\end{abstract}

\noindent%
\textit{Keywords:} 
cross-validation, empirical process theory, semiparametric models\vfill

\newpage

\section{Introduction}
\label{sec:intro}

We wish to infer  the value of a statistical parameter at a  law from which we
sample independent  observations.  The parameter  is a smooth function  of the
data distribution.   We assume that  we can define  two variation-independent,
infinite-dimensional   features  of   the  law,   its  so   called  $Q$-   and
$G$-components, such  that if we estimate  them consistently at a  fast enough
joint  rate,  then we  can  build  a confidence  interval  (CI)  with a  given
asymptotic  level   based  on   a  plain   targeted  minimum   loss  estimator
(TMLE)~\citep{van2006targeted,van2011targeted}.    Typically,   the   parameter
depends  on the  law only  through  its $Q$-component,  whereas its  canonical
gradient depends  on the law  through both  its $Q$- and  $G$-components.  The
estimators of  the $Q$- and $G$-components  would typically be by  products of
machine learning algorithms.  We focus on the case that
the  machine  learning  algorithm  for  the  $G$-component  is  fine-tuned  by  a
real-valued parameter $h$.  Is it possible to construct an estimator that will
lend itself to the construction of a CI, by fine-tuning data-adaptively and in
a targeted fashion both the algorithm for the estimation of the $G$-component and
the resulting estimator of the parameter of interest?

\subsubsection*{Literature overview.}
\label{subsec:literature}

The  general problem  that we  address is  often encountered  in observational
studies of the  effect of an exposure,  for instance when one  wishes to infer
the average  effect of a two-level  exposure. It is then  necessary to account
for  the fact  that the  level  of exposure  is  not fully  randomized in  the
observed population.   A pivotal object  of interest  in such studies,  the so
called  exposure mechanism  (that is,  the conditional  law of  exposure given
baseline covariates) is  an example of what we generally  call a $G$-component
of the law of the experiment.

A  wide range  of estimators  of the  average effect  of a  two-level exposure
require the  estimation of  the propensity score:  Horvitz-Thompson estimators
\citep{horvitz1952generalization};   estimators  based   on  propensity   score
matching      \citep{rosenbaum1983central,ho2007matching,ho2006matchit}      or
stratification  \citep{cochran1968effectiveness,   rosenbaum1984reducing};  any
estimator relying on the efficient  influence curve, among which double-robust
inverse   probability  of   exposure   weighted  estimators   \citep{Robins00a,
  Robins00b, Robins00c} or estimators built based on the targeted minimum loss
estimation (TMLE) methodology~\citep{van2006targeted,van2011targeted}.

Common methods  for the  estimation of the  propensity score  are multivariate
logistic   regression~\citep{kurth2006results},  high-dimensional   propensity
score  adjustment~\citep{schneeweiss2009high,franklin2015regularized},  and  a
variety                  of                  machine                  learning
algorithms~\citep{lee2010improving,gruber2015ensemble,
  ju2016propensity}. Except in the so called \textit{collaborative} variant of
TMLE that we will discuss shortly,  the estimators of the propensity score can
be  derived at  a preliminary  step, regardless  essentially of  why they  are
needed  and how  they are  used at  the subsequent  step. This  is problematic
because  optimality at  the preliminary  step has  little if  any relation  to
optimality at the subsequent step.  For instance, the optimal estimator of the
propensity score at the preliminary step might take values very close to zero,
therefore disqualifying it  as a viable estimator at the  subsequent step, not
to mention an optimal one.  In a less dramatic scenario, using an instrumental
variable (which only influences exposure but  not the outcome) to estimate the
propensity score could concomitantly yield a better estimator thereof and only
increase  the   variance  of  the   resulting  estimator  of  the   effect  of
exposure~\citep{van2010collaborative, van2011targeted}.

This prompted the development of  the so called \textit{collaborative} version
of         the         targeted        minimum         loss         estimation
methodology~\citep{van2010collaborative,    van2011targeted},     where    the
estimation of  the $G$-component is not  separated from that of  the parameter of
main interest  anymore. More concretely, collaborative  TMLE (C-TMLE) consists
in building a sequence of estimators of the $G$-component and in selecting one of
them by  optimizing a criterion that  targets the parameter of  main interest.
For  instance,  in the  above  less  dramatic  scenario, covariates  that  are
strongly  predictive of  exposure but  not of  the outcome  would be  removed,
resulting  in  less   bias  for  the  estimator  of  the   parameter  of  main
interest. 

The C-TMLE methodology  has been adapted to a wide  range of fields, including
genomics~\citep{gruber2010application,  wang2011finding},   survival  analysis
\citep{stitelman2010collaborative}, and clinical studies\citep{ju2016scalable}.
Because  the   derivation  of  C-TMLE  estimators   is  often  computationally
demanding, scalable  versions have also  been developed~\citep{ju2016scalable}.

In~\citep{schnitzer2017collaborative}, the authors  propose a C-TMLE algorithm
that uses regression shrinkage of the exposure model for the estimation of the
propensity score.  It  sequentially reduces the parameter  that determines the
amount of  penalty placed on the  size of the coefficient  values, and selects
the   appropriate  parameter   by  cross-validation.    The  methodology   for
continuously  fine-tuned, collaborative  targeted learning  that we  develop in
this           article           encompasses           the           algorithm
of~\citep{schnitzer2017collaborative}. Its statistical analysis sheds light on
why,  and  under  which  assumptions,   it  would  provide  valid  statistical
inference.

The present  study builds  upon~\citep{chapterTLBII}.  The methodology  is also
studied in~\citep{ju2017adaptive,ju2017collaborative}, the latter an example of
real-life application.

At this point in the introduction, we wish to formalize what is the problem at
stake.  What follows  recasts the  introductory paragraph  in the  theoretical
framework that we adopt in the article.

\subsubsection*{Setting the scene.}
\label{subsec:scene}

Let $O_{1}, \ldots,  O_{n}$ be $n$ independent  draws from a law  $P_{0}$ on a
set $\xO$.  We  view $P_{0}$ as an  element of the statistical  model $\xM$, a
collection of  plausible laws for  $O_{1}, \ldots,  O_{n}$.  The more  we know
about $P_{0}$, the smaller  is $\xM$.  Our primary goal is  to infer the value
of    parameter    $\Psi   :    \xM    \to    \xR$   at    $P_{0}$,    namely,
$\psi_{0} \equiv \Psi(P_{0})$.  Our statistical  analysis is asymptotic in the
number of observations.

We  consider  the  case  that  $\Psi$  is  pathwise  differentiable  at  every
$P\in     \xM$    with     respect    to     (w.r.t.)     a     tangent    set
$\xS_{P} \subset  L_{0}^{2} (P)$: there  exists $D^{*} (P) \in  L_{0}^{2} (P)$
such   that,   for   every   $s   \in  S_{P}$,   there   exists   a   submodel
$\{P_{t}  :  t  \in  \xR,  |t| <  c\}  \subset  \xM$  satisfying  \textit{(i)}
$P_{t}|_{t=0}  = P$,  \textit{(ii)}  $P_{t} \ll  P$ for  all  $t \in  ]-c,c[$,
\textit{(iii)}
\begin{equation*}
  \left.\frac{d}{dt} \log \frac{dP_{t}}{dP} (O)\right|_{t=0} = s(O)
\end{equation*}
(the submodel's score function equals  $s$), and \textit{(iv)} the real valued
mapping $t \mapsto  \Psi(P_{t})$ is differentiable at $t=0$  with a derivative
equal to $P D^{*}(P) s$, where $Pf$ is a shorthand notation for $E_{P} (f(O))$
(any  measurable $f$).   It is  assumed  moreover that  every $P  \in \xM$  is
associated with  two possibly  infinite-dimensional features  $Q \in  \xQ$ and
$\bG  \in   \xG$  such   that  \textit{(i)}  $Q$   and  $\bG$   are  unrelated
(\textit{i.e.},  variation  independent:  knowing  anything  about  $Q$  tells
nothing about  $\bG$ and vice  versa), \textit{(ii)} $\Psi(P)$ depends  on $P$
only through $Q$,  \textit{(iii)} $ D^{*}(P)$ depends on $P$  only through $Q$
and $\bG$, and \textit{(iv)} $\bG$ is a  mapping from $\xO$ to $\xR$.  At this
early stage, we can introduce the pivotal
\begin{equation*}
  \label{eq:remainder:intro}
  \Rem_{20} (Q, \bG) \equiv \Psi(P) - \Psi(P_{0}) + P_{0} D^{*}(P)
\end{equation*}
for every $P\in\xM$.   The notation is justified \textit{(i)}  because we wish
to  think  of  the  right-hand-side   expression  as  a  remainder  term,  and
\textit{(ii)} by  the fact that  $\Psi(P)$ and  $D^{*}(P)$ depend on  $P$ only
through $Q$  and $\bG$.  We  consider the case  that parameter $\Psi$  is such
that, for some pseudo-distances $d_{\xQ}$ and $d_{\xG}$ on $\xQ$ and $\xG$,
\begin{equation}
  \label{eq:remainder:bound:intro}
  |\Rem_{20}  (Q,   \bG)|  \lesssim  d_{\xQ}(Q,  Q_{0})   \times  d_{\xG}(\bG,
  \bG_{0}), 
\end{equation}
where $a  \lesssim b$ stand for  ``there exists a universal  positive constant
$c>0$   such   that    $a   \leq   bc$''.   A    remainder   term   satisfying
\eqref{eq:remainder:bound:intro} is said double-robust. 

Let $\hat{Q}$ be an algorithm for the estimation of $Q_{0}$, the $Q$-component of
the true  law $P_{0}$.  Likewise,  let $\hat{\bG}_{h}$  ($h \in \xH$,  an open
interval of $\xR_{+}^{*}$ of which the  closure contains 0) be an $h$-specific
algorithm  for  the estimation  of  $\bG_{0}$,  the $\bG$-component  of  $P_{0}$.
Formally, we view $\hat{Q}$ and each $\hat{\bG}_{h}$ as mappings from
\begin{equation*}
  \bigcup_{N \geq 1} \left\{N^{-1}\sum_{i=1}^{N}  \Dirac(o_{i}) : o_{1}, \ldots, o_{N}
    \in \xO\right\} 
\end{equation*}
to  $\xQ$ and  $\xG$,  respectively,  that can  ``learn''  from the  empirical
measure $P_{n}$ some estimators $\hat{Q}(P_{n})$ and $\hat{\bG}_{h}(P_{n})$ of
$Q_{0}$ and $\bG_{0}$.  Set $Q_{n}^{0} \equiv \hat{Q}(P_{n})$ (the superscript
0 stands  for ``initial''),  $\bG_{n,h} \equiv \hat{\bG}_{h}(P_{n})$,  and let
$P_{n,h}^{0}  \in \xM$  be any  element of  the model  of which  the $Q$-  and
$\bG$-components  equal  $Q_{n}^{0}$  and   $\bG_{n,h}$.   Derived  by  the  mere
substitution of $P_{n}^{0}$ for  $P_{0}$ in $\Psi(P_{0})$, $\Psi(P_{n,h}^{0})$
is a natural  estimator of $\Psi(P_{0})$.  It is  \textit{not} targeted toward
the inference of $\Psi(P_{0})$ in the sense that none of the known features of
$P_{n,h}^{0}$ was derived  specifically for the sake  of ultimately estimating
$\Psi(P_{0})$.

It  is  well  documented  in  the  TMLE literature  that  one  way  to  target
$\Psi(P_{n,h}^{0})$  toward $\Psi(P_{0})$  is to  build $P_{n,h}^{*}  \in \xM$
from $P_{n,h}^{0}$ in such a way that
\begin{equation*}
  P_{n} D^{*} (P_{n,h}^{*}) = o_{P} (1/\sqrt{n})
\end{equation*}
and to infer $\Psi(P_{0})$ with  $\Psi(P_{n,h}^{*})$. This can be achieved, in
such a way that $\bG_{n,h}$ is not modified, by ``fluctuating'' $P_{n,h}^{0}$,
a procedure that we will develop in details in the specific example studied in
the article.  Then, by~\eqref{eq:remainder:intro}, the estimator satisfies the
asymptotic expansion:
\begin{equation}
  \label{eq:asym:exp:intro}
  \Psi(P_{n,h}^{*})  -  \Psi(P_{0})  =  (P_{n}  -  P_{0})  D^{*}  (P_{n,h}^{*})  +
  \Rem_{20} (Q_{n,h}^{*}, \bG_{n,h}) + o_{P} (1/\sqrt{n}).  
\end{equation}

By convention, we agree that small values of $h$ correspond with less bias for
$\bG_{n,h}$ as an estimator of $\bG_0$.  Moreover, we assume that there exists
$h_{n}      \in      \xH$,      $h_{n}       =      o(1)$,      such      that
$d_{\xG}(\bG_{n,h_{n}},    \bG_{0})   =    o_{P}   (\rho_{1,n})$    for   some
$\rho_{1,n}   =  o(1)$,   \textit{i.e.},  that   $\bG_{n,h_{n}}$  consistently
estimates $\bG_{0}$ at  rate $\rho_{1,n}$.  If $Q_{n,h_{n}}^{*}$  is also such
that   $d_{\xQ}(Q_{n,h_{n}}^{*},   Q_{0})   =  o_{P}(\rho_{2,n})$   for   some
$\rho_{2,n}  = o(1)$,  and if  $\rho_{1,n} \rho_{2,n}  = o(1/\sqrt{n})$,  then
\eqref{eq:remainder:bound:intro} and \eqref{eq:asym:exp:intro} yield
\begin{equation*}
  \Psi(P_{n,h_{n}}^{*})   -    \Psi(P_{0})   =    (P_{n}   -    P_{0})   D^{*}
  (P_{n,h_{n}}^{*})  + o_{P} (1/\sqrt{n})
\end{equation*}
which may in turn imply the asymptotic linear expansion
\begin{equation}
  \label{eq:asymp:exp:intro}
  \Psi(P_{n,h_{n}}^{*}) -  \Psi(P_{0}) = (P_{n}  - P_{0}) \IF  + o_{P}
  (1/\sqrt{n}), 
\end{equation}
with influence function $\IF \equiv D^{*} (P_{0})$, depending in particular on
how data-adaptive are algorithms $\hat{Q}$  and $\hat{\bG}_{h}$ ($h \in \xH$).
By  the  central  limit theorem,  \eqref{eq:asymp:exp:intro}  guarantees  that
$\sqrt{n} (\Psi(P_{n,h_{n}}^{*}) - \Psi(P_{0}))$ is asymptotically Gaussian.

\textit{We    focus     on    a    more    challenging     situation,    where
  $\rho_{1,n} \rho_{2,n}$ is not  necessarily $o(1/\sqrt{n})$.}  We anticipate
that our  analysis is also  very relevant at  small and moderate  sample sizes
when $\rho_{1,n} \rho_{2,n} = o(1/\sqrt{n})$.

In   order   to   derive   an   asymptotic   linear   expansion   similar   to
\eqref{eq:asymp:exp:intro} from  \eqref{eq:asym:exp:intro} in  this situation,
we     would    have     to    derive     an    asymptotic     expansion    of
$\Rem_{20} (Q_{n,h_{n}}^{*},  \bG_{n,h_{n}})$. Unfortunately, we  have reasons
to believe that this is not  possible without targeting (their presentation in
an example is deferred to Section~\ref{subsec:select:uncoop}).

Now,  observe that  the estimators  $\Psi(P_{n,h}^{*})$ $(h  \in \xH)$  do not
cooperate in  the sense that,  although $Q_{n,h}^{*}$ and  $Q_{n,h'}^{*}$ (for
any  two $h,  h' \in  \xH$,  $h \neq  h'$)  share the  same initial  estimator
$Q_{n}^{0}$, the construction of the latter does not capitalize on that of the
former.   In contrast,  we propose  to  build collaboratively  a continuum  of
estimators  of  the form  $\Psi(P_{n,h}^{*})$  ($h  \in  \xH$) and  to  select
data-adaptively one  among them  that will  be asymptotically  Gaussian, under
conditions often encountered in empirical process theory.

\subsubsection*{Organization of the article.}
\label{subsec:org}

In  Section~\ref{sec:general},  we  lay   out  a  high-level  presentation  of
collaborative    TMLE,     and    state     a    high-level     result.     In
Sections~\ref{sec:ctmle_con},  \ref{sec:software},  \ref{sec:experiments}  and
\ref{sec:transfer},     we     consider     a    specific     example.      In
Section~\ref{sec:ctmle_con}, we particularize the theoretical construction and
analysis.    In   Section~\ref{sec:software},   we  describe   two   practical
instantiations of  the estimator developed in  Section~\ref{sec:ctmle_con}. In
Sections~\ref{sec:experiments}   and  \ref{sec:transfer},   we  carry   out  a
mutli-faceted  simulation study  of their  performances and  comment upon  its
results.   In  Section~\ref{sec:discuss},  we  summarize the  content  of  the
article. All the proofs are gathered in the appendix.

\section{High-level presentation and result} 
\label{sec:general}

We  now  state and  prove  a  general  result about  continuously  fine-tuned,
collaborative    targeted   minimum    loss   estimation,    a   version    of
\citep[Theorem~10.1  in][]{chapterTLBII}.    Its  high-level   assumptions  are
clarified in the particular example that we study in the next sections.

From now on, we slightly abuse  notation and denote $D^{*}(Q, \bG)$ instead of
$D^{*} (P)$,  where $Q$ and  $\bG$ are the  $Q$- and $\bG$-components  of $P$.
Let  $G_{\Cdot}   \equiv  \{\bG_{t}  :   t  \in   \xT\}  \subset  \xG$   be  a
(one-dimensional) subset of  $\xG$ (indexed by a real parameter  ranging in an
open subset  $\xT$ of $\xH$)  such that $t  \mapsto D^{*} (Q,  \bG_{t})(O)$ is
twice differentiable over  $\xT$ for all $Q \in  \xQ$ ($P_{0}$-almost surely).
We  characterize $\partial  D^{*}$ and  $\partial^{2} D^{*}$  by setting,  for
every $h \in \xT$ and $Q \in \xQ$,
\begin{eqnarray}
  \label{eq:dD:general}
  \partial_{h} D^{*} (Q, G_{\Cdot})(O) 
  &\equiv& \frac{d}{dt} D^{*} (Q, \bG_{t})(O)|_{t=h}, \\
  \notag 
  \partial_{h}^{2} D^{*} (Q, G_{\Cdot})(O) 
  &\equiv& \frac{d^{2}}{dt^{2}} D^{*} (Q, \bG_{t})(O)|_{t=h}.
\end{eqnarray}

Consider the following  inter-dependent assumptions. The first  one is indexed
by $(Q, h, c)\in \xQ \times \xH \times \xR_{+}^{*}$.
\begin{description}
\item[\textbf{A1}$(Q,h,c)$]    There     exists    an     open    neighborhood
  $\xT    \subset   \xH$    of   $h    \in    \xH$   for    which   the    set
  $\bG_{n,\Cdot} \equiv \{\hat{\bG}_{h}(P_{n}) \equiv  \bG_{n,h} : h \in \xT\}
  \subset \xG$ is  such that $t \mapsto D^{*} (Q,  \bG_{n,\Cdot})(O)$ is twice
  differentiable over $\xT$ ($P_{0}$-almost surely).  Moreover, $P_{0}$-almost
  surely,
  \begin{equation*}
    \sup_{h \in \xT} |\partial_{h}^{2} D^{*} (Q, \bG_{n,\Cdot}) (O)| \leq c.  
  \end{equation*}
\item[\textbf{A2}]   For   all  $h   \in   \xH$,   we   know  how   to   build
  $P_{n,h}^{*}   \in  \xM$,   with  $Q$-   and  $\bG$-components   denoted  by
  $Q_{n,h}^{*}$     and     $\bG_{n,h}$,     in    such     a     way     that
  $P_{n} D^{*}  (Q_{n,h}^{*}, \bG_{n,h}) = o_{P}  (1/\sqrt{n})$.  Moreover, we
  know how to choose $h_{n} \in \xH$ such that
  \begin{equation}
    \label{eq:hla2:one}
    P_{n} D^{*} (Q_{n,h_{n}}^{*}, \bG_{n,h_{n}}) 
    = o_{P} (1/\sqrt{n}).
  \end{equation}
  and,       for       some       deterministic       $c_{2}       >       0$,
  \textbf{A1}$(Q_{n,h_{n}}^{*}, h_{n}, c_{2})$ is met and
  \begin{equation}
    \label{eq:hla2:two}
    P_{n} \partial_{h_{n}} D^{*} (Q_{n,h_{n}}^{*}, \bG_{n,\Cdot}) 
    = o_{P}     (1/n^{1/4}). 
  \end{equation}
\item[\textbf{A3}]                It                 holds                that
  $d_{\xG}(\bG_{n,h_{n}},   \bG_{0})   =   o_{P}  (1)$,   and   there   exists
  $Q_{1} \in \xQ$ such that $d_{\xQ}(Q_{n,h_{n}}^{*}, Q_{1}) = o_{P} (1)$.  In
  addition,
  \begin{eqnarray}
    \label{eq:hla3:one}
    (P_{n} - P_{0}) \left(D^{*}(Q_{n,h_{n}}^{*}, \bG_{n,h_{n}}) - D^{*}(Q_{1},
    \bG_{0})\right) 
    &=& o_{P}(1/\sqrt{n}),\\
    \label{eq:hla3:three}
    \Rem_{20} (Q_{n,h_{n}}^{*}, \bG_{n,h_{n}}) - \Rem_{20} (Q_{1}, \bG_{n,h_{n}}) 
    &=& o_{P} (1/\sqrt{n}). 
  \end{eqnarray} 
\item[\textbf{A4}]   Let   $\Phi_{0}   :   \xG    \to   \xR$   be   given   by
  $\Phi_{0}   (\bG)   \equiv  P_{0}   D^{*}   (Q_{1},   \bG)$.   There   exist
  $\th_{n} \in \xH$ and $\Delta(P_{1}) \in L_{0}^{2} (P_{0})$ such that
  \begin{equation}
    \label{eq:hla4}
    \Phi_{0}(\bG_{n,\th_{n}})   -   \Phi_{0}(\bG_{0})   =  (P_{n}   -   P_{0})
    \Delta(P_{1}) + o_{P} (1/\sqrt{n}).
  \end{equation}
\item[\textbf{A5}]   It   holds   that   $(h_{n}  -   \th_{n})^{2}   =   o_{P}
  (1/\sqrt{n})$. Moreover, there exists a  deterministic $c_{5} > 0$ such that
  \textbf{A1}$(Q_{1}, h_{n}, c_{5})$ is met, and
  \begin{eqnarray}
    \label{eq:hla5:one}
    (P_{n} - P_{0}) \left(D^{*}(Q_{1}, \bG_{n,h_{n}}) - D^{*}(Q_{1},
    \bG_{n,\th_{n}})\right) 
    &=& o_{P}(1/\sqrt{n}),\\
    \label{eq:hla5:two}
    (h_{n} - \th_{n}) \times P_{0} \left(\partial_{h_{n}}    D^{*}(Q_{n,h_{n}}^{*},
    \bG_{n,\Cdot}) - \partial_{h_{n}} D^{*}(Q_{1},     \bG_{n,\Cdot})\right) 
    &=& o_{P}(1/\sqrt{n}),\\
    \label{eq:hla5:three}
    (P_{n}    -     P_{0})    \left(\partial_{h_{n}}    D^{*}(Q_{n,h_{n}}^{*},
    \bG_{n,\Cdot}) - \partial_{h_{n}} D^{*}(Q_{1},     \bG_{n,\Cdot})\right) 
    &=& o_{P}(1/\sqrt{n}).
  \end{eqnarray}
\end{description}

Now  that we  have introduced  our high-level  assumptions, we  can state  the
corresponding high-level  result that they  entail. The proof is  relegated to
the appendix.

\begin{theorem}[Asymptotics of the collaborative TMLE -- a high-level result] 
  \label{theo:high:level}
  Under assumptions \textbf{A2} to \textbf{A5}, it holds that
  \begin{equation}
    \label{eq:theo:high:level}
    \Psi(P_{n,h_{n}}^{*}) - \Psi(P_{0}) =  (P_{n} - P_{0}) \left(D^{*} (Q_{1},
      \bG_{0}) + \Delta(P_{1})\right) + o_{P} (1/\sqrt{n}).
  \end{equation}
\end{theorem}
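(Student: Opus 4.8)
The plan is to run the argument off the algebraic identity built into the remainder: since $\Rem_{20}(Q,\bG) = \Psi(P) - \Psi(P_{0}) + P_{0} D^{*}(Q,\bG)$, evaluating at $(Q_{n,h_{n}}^{*}, \bG_{n,h_{n}})$ and writing $-P_{0} D^{*} = (P_{n}-P_{0})D^{*} - P_{n} D^{*}$ yields, after invoking \eqref{eq:hla2:one},
\begin{equation*}
  \Psi(P_{n,h_{n}}^{*}) - \Psi(P_{0}) = (P_{n}-P_{0}) D^{*}(Q_{n,h_{n}}^{*}, \bG_{n,h_{n}}) + \Rem_{20}(Q_{n,h_{n}}^{*}, \bG_{n,h_{n}}) + o_{P}(1/\sqrt{n}).
\end{equation*}
I would then match the two summands separately against the right-hand side of \eqref{eq:theo:high:level}. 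For the empirical-process term, \eqref{eq:hla3:one} replaces $D^{*}(Q_{n,h_{n}}^{*}, \bG_{n,h_{n}})$ by the fixed $D^{*}(Q_{1}, \bG_{0})$ at the cost of an $o_{P}(1/\sqrt{n})$ error, delivering the $(P_{n}-P_{0}) D^{*}(Q_{1},\bG_{0})$ piece.

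For the remainder term, I would first use \eqref{eq:hla3:three} to pass from $\Rem_{20}(Q_{n,h_{n}}^{*}, \bG_{n,h_{n}})$ to $\Rem_{20}(Q_{1}, \bG_{n,h_{n}})$. The pivotal observation is that double-robustness \eqref{eq:remainder:bound:intro}, read with the second argument at the truth, forces $\Rem_{20}(Q_{1},\bG_{0})=0$ because $d_{\xG}(\bG_{0},\bG_{0})=0$; hence, recalling $\Phi_{0}(\bG) = P_{0} D^{*}(Q_{1},\bG)$, one gets the exact identity $\Rem_{20}(Q_{1},\bG_{n,h_{n}}) = \Phi_{0}(\bG_{n,h_{n}}) - \Phi_{0}(\bG_{0})$. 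It remains to show this equals $(P_{n}-P_{0})\Delta(P_{1}) + o_{P}(1/\sqrt{n})$, which by \eqref{eq:hla4} of \textbf{A4} reduces to the \emph{oracle substitution} $\Phi_{0}(\bG_{n,h_{n}}) - \Phi_{0}(\bG_{n,\th_{n}}) = o_{P}(1/\sqrt{n})$.

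This substitution is where the real work sits. I would Taylor-expand $h \mapsto \Phi_{0}(\bG_{n,h}) = P_{0} D^{*}(Q_{1},\bG_{n,h})$ around $h_{n}$ and evaluate at $\th_{n}$, exchanging $P_{0}$ and differentiation under the uniform bound on $\partial_{h}^{2} D^{*}(Q_{1},\bG_{n,\Cdot})$ supplied by \textbf{A1}$(Q_{1},h_{n},c_{5})$ (available through \textbf{A5}). The second-order term carries the factor $(h_{n}-\th_{n})^{2} = o_{P}(1/\sqrt{n})$ times a quantity bounded by $c_{5}$, so it is negligible. The first-order term $(h_{n}-\th_{n})\,P_{0}\partial_{h_{n}} D^{*}(Q_{1},\bG_{n,\Cdot})$ is the heart of the matter: I would rewrite $P_{0}\partial_{h_{n}} D^{*}(Q_{1},\bG_{n,\Cdot})$ as $P_{n}\partial_{h_{n}} D^{*}(Q_{n,h_{n}}^{*},\bG_{n,\Cdot})$ minus three corrections, the shift from $P_{n}$ to $P_{0}$ and from $Q_{n,h_{n}}^{*}$ to $Q_{1}$ producing the increment $\partial_{h_{n}} D^{*}(Q_{n,h_{n}}^{*},\bG_{n,\Cdot}) - \partial_{h_{n}} D^{*}(Q_{1},\bG_{n,\Cdot})$. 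The leading piece $P_{n}\partial_{h_{n}} D^{*}(Q_{n,h_{n}}^{*},\bG_{n,\Cdot})$ is $o_{P}(1/n^{1/4})$ by \eqref{eq:hla2:two}, so times $(h_{n}-\th_{n}) = o_{P}(1/n^{1/4})$ it is $o_{P}(1/\sqrt{n})$; the two correction pieces, multiplied by $(h_{n}-\th_{n})$, are exactly what \eqref{eq:hla5:two} and \eqref{eq:hla5:three} annihilate.

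The single residual piece $(h_{n}-\th_{n})(P_{n}-P_{0})\partial_{h_{n}} D^{*}(Q_{1},\bG_{n,\Cdot})$ is the genuine obstacle, since none of \textbf{A2}--\textbf{A4} bounds it. Here I would invoke \eqref{eq:hla5:one}: a pointwise Taylor expansion of $h\mapsto D^{*}(Q_{1},\bG_{n,h})(O)$ about $h_{n}$ gives
\begin{equation*}
  D^{*}(Q_{1},\bG_{n,h_{n}}) - D^{*}(Q_{1},\bG_{n,\th_{n}}) = (h_{n}-\th_{n})\partial_{h_{n}} D^{*}(Q_{1},\bG_{n,\Cdot}) + R,
\end{equation*}
with $R$ uniformly bounded by $\tfrac12 c_{5}(h_{n}-\th_{n})^{2}$ via \textbf{A1} (the intermediate index lying in the neighborhood $\xT$ of $h_{n}$ once $\th_{n}$ is close enough, which holds since $|h_{n}-\th_{n}| = o_{P}(1)$). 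Applying $(P_{n}-P_{0})$ and using that $(P_{n}-P_{0})$ of a uniformly bounded function is $O_{P}(1/\sqrt{n})$ makes $(P_{n}-P_{0})R = o_{P}(1/n)$ negligible, so the residual piece equals $(P_{n}-P_{0})\bigl(D^{*}(Q_{1},\bG_{n,h_{n}}) - D^{*}(Q_{1},\bG_{n,\th_{n}})\bigr) + o_{P}(1/\sqrt{n})$, which is $o_{P}(1/\sqrt{n})$ by \eqref{eq:hla5:one}. Assembling the empirical-process and remainder parts gives \eqref{eq:theo:high:level} with limit $D^{*}(Q_{1},\bG_{0}) + \Delta(P_{1})$. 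The part most likely to hide a subtle gap is the bookkeeping of the first-order Taylor term: every product of a rate against an empirical-process increment must be verified to land strictly below $1/\sqrt{n}$, and the coordination of the $o_{P}(1/n^{1/4})$ estimating-equation rate of \textbf{A2} with the $o_{P}(1/n^{1/4})$ selection rate $|h_{n}-\th_{n}|$ of \textbf{A5} is what makes the cancellation possible in the first place.
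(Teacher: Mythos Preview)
Your proposal is correct and follows essentially the same route as the paper: the same starting expansion, the same reduction to $\Phi_{0}(\bG_{n,h_{n}}) - \Phi_{0}(\bG_{n,\th_{n}}) = o_{P}(1/\sqrt{n})$, and the same combination of a Taylor expansion with \eqref{eq:hla2:two} and \eqref{eq:hla5:one}--\eqref{eq:hla5:three}; the only difference is cosmetic (the paper first passes from $P_{0}$ to $P_{n}$ via \eqref{eq:hla5:one} and then Taylor-expands, whereas you Taylor-expand first and invoke \eqref{eq:hla5:one} at the end). One small fix: your claim that $(P_{n}-P_{0})R = o_{P}(1/n)$ because ``$(P_{n}-P_{0})$ of a uniformly bounded function is $O_{P}(1/\sqrt{n})$'' is not justified, since $R$ is a \emph{random} function and no Donsker control is assumed for it---but the trivial bound $|(P_{n}-P_{0})R| \leq 2\|R\|_{\infty} \leq c_{5}(h_{n}-\th_{n})^{2} = o_{P}(1/\sqrt{n})$ already suffices, and this is exactly what the paper uses.
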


\subsubsection*{Commenting on the high-level assumptions.}

Assumption  \textbf{A1}$(Q,h,c)$  concerns  both  $D^{*}$  (specifically,  how
$D^{*}(Q,  \bG)(O)$  depends  on  $\bG(O)$)  and  algorithms  $\hat{\bG}_{t}$,
$t \in \xH$  (specifically, how smooth is  $t \mapsto \hat{\bG}_{t}(P_{n})(O)$
around $h$).  In the particular example studied in the following sections, the
counterpart  \textbf{C1}  of  \textbf{A1}$(Q,h,c)$  concerns  only  algorithms
$\hat{\bG}_{t}$, $t \in \xH$.

In the example, we show how  $P_{n,h_{n}}^{*}$ can be built collaboratively in
such a way that \textbf{A2} is met, under a series of nested assumptions about
the  smoothness  of  data-dependent,  real-valued functions  over  $\xH$,  the
construction of which notably involve algorithms $\hat{\bG}_{t}$, $t \in \xH$.
To understand why achieving \eqref{eq:hla2:two}  is relevant, observe that the
following                  oracle                  version                  of
$P_{n} \partial_{h_{n}} D^{*} (Q_{n,h_{n}}^{*}, \bG_{n,\Cdot})$,
\begin{equation*}
  \lim\limits_{\substack{t\to    0\\t\neq    0}}   \frac{1}{t}   P_{0}
  \left(D^{*}   (Q_{n,h_{n}}^{*},   \bG_{n,h_{n}+t})   -   D^{*}   (Q_{n,h_{n}}^{*},
    \bG_{n,h_{n}})\right), 
\end{equation*}
can be rewritten as 
\begin{equation*}
  \lim\limits_{\substack{t\to    0\\t\neq     0}}    \frac{1}{t}    \left(\Rem_{20}
    (Q_{n,h_{n}}^{*},    \bG_{n,h_{n}+t})   -    \Rem_{20}   (Q_{n,h_{n}}^{*},
    \bG_{n,h_{n}})\right) 
\end{equation*}
in  view of  \eqref{eq:remainder:intro}.  Thus,  achieving \eqref{eq:hla2:one}
relates         to         finding          critical         points         of
$h \mapsto \Rem_{20} (Q_{n,h_{n}}^{*}, \bG_{n,h})$.  

Assumption \textbf{A3}  formalizes the  convergence of $\bG_{n,h_{n}}$  to its
target $\bG_{0}$ w.r.t. $d_{\xG}$, and that of $Q_{n,h_{n}}^{*}$ to some limit
$Q_{1} \in \xQ$  w.r.t. $d_{\xQ}$.  It does not require  that $Q_{1}$ be equal
to the target $Q_{0}$ of  $Q_{n,h_{n}}^{*}$, but \textbf{A4} may be impossible
to meet when $Q_{1} \neq Q_{0}$ (see below).  Condition~\eqref{eq:hla3:one} in
\textbf{A3}   is   met   for    instance   if   the   $L^{2}(P_{0})$-norm   of
$D^{*}(Q_{n,h_{n}}^{*}, \bG_{n,h_{n}})  - D^{*}(Q_{1}, \bG_{0})$ goes  to zero
in probability  and if the  difference falls  in a $P_{0}$-Donsker  class with
probability tending to one.  As  for \eqref{eq:hla3:three}, it typically holds
whenever  the product  of the  rates of  convergence of  $Q_{n,h_{n}}^{*}$ and
$\bG_{n,h_{n}}$  to their  limits is  $o_{P}(1/\sqrt{n})$. The  counterpart of
\textbf{A3} in the example studied in the following sections is \textbf{C2}.

With  \textbf{A4},  we  assume  the  existence of  an  oracle  $\th_{n}$  that
undersmoothes $\bG_{n,h}$  enough so  that $\Phi_{0} (\bG_{n,\th_{n}})$  is an
asymptotically  linear estimator  of $\Phi_{0}(\bG_{0})$,  where we  note that
$\Phi_{0}$ is pathwise differentiable in a  similar way as $\Psi$. We say that
$\th_{n}$ is an  oracle because the definition of  $\Phi_{0}$ involves $P_{0}$
and $Q_{1}$. It happens that
\begin{lemma}
  \label{lem:A4}
  Under    \textbf{A2}    and    \textbf{A3},     if    $Q_{1}    =    Q_{0}$,
  $d_{\xG}(\bG_{n,h_{n}},  \bG_{0}) \times  d_{\xQ}(Q_{n,h_{n}}^{*}, Q_{0})  =
  o_{P}  (1/\sqrt{n})$,   and  if   \eqref{eq:asymp:exp:intro}  is   met  with
  $\IF  = D^{*}(P_{0})$,  then \textbf{A4}  holds with  $h_{n} =  \th_{n}$ and
  $\Delta(P_{1}) = 0$.
\end{lemma}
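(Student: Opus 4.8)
The plan is to show that the hypothesis $Q_{1} = Q_{0}$ forces the functional $\Phi_{0}$ to vanish identically on the relevant domain, so that \eqref{eq:hla4} holds trivially with $\th_{n} = h_{n}$ and $\Delta(P_{1}) = 0$. First I would recast $\Phi_{0}$ in terms of the pivotal remainder. Since $Q_{1} = Q_{0}$, the definition $\Phi_{0}(\bG) = P_{0} D^{*}(Q_{1}, \bG)$ reads $\Phi_{0}(\bG) = P_{0} D^{*}(Q_{0}, \bG)$ for every $\bG \in \xG$. Let $P \in \xM$ be any law whose $Q$-component is $Q_{0}$ and whose $\bG$-component is $\bG$; such a law exists because the $Q$- and $\bG$-components are variation independent. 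As $\Psi(P)$ depends on $P$ only through its $Q$-component, $\Psi(P) = \Psi(P_{0}) = \psi_{0}$, and hence by the definition of $\Rem_{20}$ in \eqref{eq:remainder:intro},
\[
  \Rem_{20}(Q_{0}, \bG) = \Psi(P) - \Psi(P_{0}) + P_{0} D^{*}(Q_{0}, \bG) = P_{0} D^{*}(Q_{0}, \bG) = \Phi_{0}(\bG).
\]

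Next I would invoke the double-robustness bound \eqref{eq:remainder:bound:intro} at $Q = Q_{0}$. Because $d_{\xQ}$ is a pseudo-distance, $d_{\xQ}(Q_{0}, Q_{0}) = 0$, whence
\[
  |\Rem_{20}(Q_{0}, \bG)| \lesssim d_{\xQ}(Q_{0}, Q_{0}) \times d_{\xG}(\bG, \bG_{0}) = 0
\]
for every $\bG \in \xG$. Combined with the identity of the previous step, this gives $\Phi_{0}(\bG) = 0$ for all $\bG$, and in particular $\Phi_{0}(\bG_{n,h_{n}}) = \Phi_{0}(\bG_{0}) = 0$. Taking $\th_{n} = h_{n}$ and $\Delta(P_{1}) = 0$, the left-hand side of \eqref{eq:hla4} is exactly $0$ while the right-hand side equals $(P_{n} - P_{0}) \cdot 0 + o_{P}(1/\sqrt{n}) = o_{P}(1/\sqrt{n})$, so \textbf{A4} is met. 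As a sanity check, the special case $\bG = \bG_{0}$ recovers the familiar mean-zero property $P_{0} D^{*}(Q_{0}, \bG_{0}) = P_{0} D^{*}(P_{0}) = 0$ of the canonical gradient.

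The conceptual crux, and the step I expect to require the most care, is recognizing that double robustness makes the remainder vanish \emph{exactly} (not merely at rate $o_{P}(1/\sqrt{n})$) once one component is set to its truth, because the bound is a product of pseudo-distances one factor of which is identically zero. The only genuine verification is that \eqref{eq:remainder:bound:intro} legitimately applies to the pair $(Q_{0}, \bG_{n,h_{n}})$, which is exactly where variation independence of the $Q$- and $\bG$-components is used, to guarantee that this pair is realized by some law in $\xM$. The remaining hypotheses of the lemma (\textbf{A2}, \textbf{A3}, the fast product rate $d_{\xG}(\bG_{n,h_{n}}, \bG_{0}) \times d_{\xQ}(Q_{n,h_{n}}^{*}, Q_{0}) = o_{P}(1/\sqrt{n})$, and the asymptotic linearity \eqref{eq:asymp:exp:intro} with $\IF = D^{*}(P_{0})$) are not strictly needed for this computation; they serve to place the lemma in the regime where $Q_{1} = Q_{0}$ is the natural limit identified by \textbf{A3} and where the plain TMLE is already asymptotically linear, so that the choice $\Delta(P_{1}) = 0$ is consistent with the conclusion of Theorem~\ref{theo:high:level}.
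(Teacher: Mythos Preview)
Your argument is correct and is genuinely different from the paper's. You identify the key structural fact that, once $Q_{1}=Q_{0}$, the functional $\Phi_{0}$ is identically zero: $\Phi_{0}(\bG)=P_{0}D^{*}(Q_{0},\bG)=\Rem_{20}(Q_{0},\bG)$, and the double-robustness bound \eqref{eq:remainder:bound:intro} with $d_{\xQ}(Q_{0},Q_{0})=0$ forces this to vanish for every $\bG\in\xG$. Hence $\Phi_{0}(\bG_{n,h_{n}})-\Phi_{0}(\bG_{0})=0$ exactly, and \textbf{A4} holds with $\th_{n}=h_{n}$, $\Delta(P_{1})=0$. You are also right that none of the remaining hypotheses (\textbf{A2}, \textbf{A3}, the product-rate condition, or \eqref{eq:asymp:exp:intro}) is actually needed for this conclusion.

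The paper, by contrast, does not exploit that $\Phi_{0}\equiv 0$. It writes $-\Phi_{0}(\bG_{n,h_{n}})+\Phi_{0}(\bG_{0})$ by adding and subtracting $P_{n}D^{*}(Q_{n,h_{n}}^{*},\bG_{n,h_{n}})$ and $P_{0}D^{*}(Q_{n,h_{n}}^{*},\bG_{n,h_{n}})$, then uses \eqref{eq:hla2:one}, \eqref{eq:hla3:one}, \eqref{eq:hla3:three} and the assumed asymptotic linearity \eqref{eq:asymp:exp:intro} to reduce everything to $(P_{n}-P_{0})\big(D^{*}(P_{0})-D^{*}(Q_{1},\bG_{0})\big)+o_{P}(1/\sqrt{n})$, which vanishes because $Q_{1}=Q_{0}$ makes the two influence functions coincide. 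This longer route mirrors the decomposition used in the proof of Theorem~\ref{theo:high:level} and explains why those hypotheses appear in the lemma's statement, but your approach shows they are, strictly speaking, superfluous here; the advantage of the paper's route is only that it parallels the general analysis where $Q_{1}\neq Q_{0}$, a regime in which your shortcut is unavailable.
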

It is  difficult to  assess whether or  not \textbf{A4} is  a tall  order when
$d_{\xG}(\bG_{n,h_{n}},  \bG_{0}) \times  d_{\xQ}(Q_{n,h_{n}}^{*}, Q_{0})$  is
not necessarily $o_{P} (1/\sqrt{n})$, or if $Q_{1} \neq Q_{0}$.

Finally, \textbf{A5} states  that the distance between  $\th_{n}$ and $h_{n}$,
introduced  in \textbf{A2},  is of  order  $o_{P} (1/n^{1/4})$  at most.   Its
conditions \eqref{eq:hla5:one} and \eqref{eq:hla5:three} are of similar nature
as  \eqref{eq:hla3:one}.   As   for  \eqref{eq:hla5:two},  the  Cauchy-Schwarz
inequality   reveals  that   it   is  met   if   the  $L^{2}(P_{0})$-norm   of
$\partial_{h_{n}}  D^{*}(Q_{n,h_{n}}^{*},  \bG_{n,\Cdot})  -  \partial_{h_{n}}
D^{*}(Q_{1}, \bG_{n,\Cdot})$ is $o_{P}(1/n^{1/4})$.

\section{Collaborative TMLE  for continuous tuning when  inferring the average
  treatment effect: presentation and analysis}
\label{sec:ctmle_con}

In this section,  we specialize the discussion to the  inference of a specific
statistical   parameter,    the   so   called   average    treatment   effect.
Section~\ref{subsec:prelim} introduces the parameter  and recalls what are the
corresponding   $D^{*}$   and    $\Rem_{20}$   from   Section~\ref{sec:intro}.
Section~\ref{subsec:continuum:TMLEs}   describes  the   \textit{un}cooperative
construction      of     a      continuum     of      uncooperative     TMLEs.
Section~\ref{subsec:select:uncoop}  argues why  the  selection of  one of  the
uncooperative  TMLEs  is unlikely  to  yield  a well  behaved  (\textit{i.e.},
asymptotically  Gaussian)  estimator   when  the  product  of   the  rates  of
convergence of the estimators of $Q_{0}$  and $\bG_{0}$ to their limits is not
fast        enough       (\textit{i.e.},        $o(1/\sqrt{n})$).        Then,
Sections~\ref{subsec:continuum:CTMLEs}  and \ref{subsec:select:colla}  present
the collaborative  construction of collaborative  TMLEs and how to  select one
among them  that is well  behaved, under assumptions  that are spelled  out in
Section~\ref{subsec:asymptot},          where          the          high-level
Theorem~\ref{theo:high:level} and its assumptions are specialized.

\subsection{Preliminary}
\label{subsec:prelim}

We  observe  $n$  independent  draws $O_{1}  \equiv  (W_{1},  A_{1},  Y_{1})$,
$\ldots$, $O_{n} \equiv  (W_{n}, A_{n}, Y_{n})$ from $P_{0}$, the  true law of
$O\equiv (W,A,Y)$.   It is  known that  $Y$ takes its  values in  $[0,1]$.  We
consider the statistical model $\xM$ that leaves unspecified the law $Q_{W,0}$
of $W$ and the conditional law of  $Y$ given $(A,W$), while we might know that
the conditional expectation $\bG_0$ of $A$ given $W$ belongs to a set $\xG$.

Introduce
\begin{equation*}
  \bQ_0(A,W)\equiv E_{P_{0}}(Y|A,W), \quad \bG_0(W)\equiv P_{0}(A=1|W).
\end{equation*}
The parameter of interest is the average treatment effect, 
\begin{equation*}
  \psi_0 \equiv E_{Q_{W,0}} \left(\bQ_0(1, W) - \bQ_{0} (0, W)\right).
\end{equation*}
We choose it because its study provides  a wealth of information and paves the
way for the analysis of a variety of other parameters often encountered in the
statistical literature.

More generally,  every $P\in\xM$ gives  rise to $Q_{W}$,  $\bQ(A,W)$, $\bG(W)$
and $Q\equiv  (Q_{W}, \bQ)$, which  are respectively  the marginal law  of $W$
under $P$,  the conditional expectation  of $Y$  given $(A,W)$ under  $P$, the
conditional  probability  that $A=1$  given  $W$  under  $P$, and  the  couple
consisting of  $Q_{W}$ and  $\bQ$.  For  each of  them, the  average treatment
effect is $\Psi(P)$, where $\Psi : \xM \to [0,1]$ is given by
\begin{equation*}
  \Psi(P) \equiv E_{Q_{W}} \left(\bQ(1, W) - \bQ (0, W)\right). 
\end{equation*}

For notational conciseness, we let $\ell\bG$ be given by
\begin{equation}
  \label{eq:ellbG}
  \ell\bG(A,W) \equiv A \bG(W) + (1-A) (1-\bG(W))
\end{equation}
for  every  $\bG  \in  \xG$.   Note that  $\ell\bG(A,W)$  is  the  conditional
likelihood of $A$ given $W$ when $A$ given $W$ is drawn from the Bernoulli law
with  parameter $\bG(W)$,  hence the  ``$\ell$'' in  the notation.   Parameter
$\Psi$ viewed as  a real-valued mapping over $\xM$  is pathwise differentiable
at every $P\in\xM$ w.r.t.  the maximal  tangent set $\xS_{P} = L_{0}^{2} (P)$.
The efficient influence curve $D^*(P)$ of $\Psi$ at $P\in \xM$ is given by
\begin{eqnarray}
  \label{eq:EIC}
  D^*(P)(O) 
  &\equiv&  D_{2}^{*}   (\bQ,  \bG)   (O)  +   \left(\bQ(1,W)  -   \bQ(0,W)  -
           \Psi(P)\right) \quad \text{where}\\
  \notag
  D_{2}^{*} (\bQ, \bG) (O) 
  &\equiv& \frac{2A-1}{\ell\bG(A,W)}(Y-\bQ(A,W)).
\end{eqnarray}

Recall definition  \eqref{eq:remainder:intro}. It is  easy to check  that, for
every $P\in\xM$,
\begin{equation} 
  \label{eq:remainder}
  \Rem_{20}    (\bQ,    \bG)    =    E_{P_{0}}    \left[(2A-1)    \left(1    -
  \frac{\ell\bG_{0}(A,W)}{\ell\bG(A,W)}\right)         \left(\bQ(A,W)        -
  \bQ_0(A,W)\right) \right]. 
\end{equation}
Writing $\Rem_{20} (\bQ, \bG)$ instead of $\Rem_{20} (Q, \bG)$ slightly abuses
notation,  but  is  justified  because  integrating out  $A$  in  the  RHS  of
\eqref{eq:remainder} reveals that it only depends on $P_{0}$, $\bQ$ and $\bG$.
Furthermore, by the Cauchy-Schwartz inequality, it holds that
\begin{equation}
  \label{eq:bound:R20}
  \Rem_{20}(\bQ,    \bG)^{2}    \leq    P_{0}    (\bQ-\bQ_0)^{2}    \times    P_{0}
  \left(\frac{\bG-\bG_0}{\ell\bG}\right)^{2}.  
\end{equation}

\subsection{Uncooperative construction of a continuum of uncooperative TMLEs} 
\label{subsec:continuum:TMLEs}

\subsubsection*{Prerequisites.}

Let $\bQ_{n}^{0} \equiv \hat{\bQ} (P_{n})$  be an initial estimator of $\bQ_0$
and  $\{\bG_{n,h}\equiv \hat{\bG}_h(P_n)  : h  \in  \xH\}$ be  a continuum  of
candidate  estimators of  $\bG_0$ indexed  by a  real-valued tuning  parameter
$h \in \xH$, an open interval  of $\xR_{+}^{*}$.  By convention, we agree that
small values of $h$ correspond with  less bias for $\bG_{n,h}$ as an estimator
of $\bG_0$.   Specifically, denoting $L_{1}$  the valid loss function  for the
estimation of $\bG_{0}$ given by
\begin{equation}
  \label{eq:L1:loss}
  L_{1}(\bG)(A,W)  \equiv  -  \log  \ell\bG(A,W)  =  -A  \log  \bG(W)  -  (1-A)
  \log(1-\bG(W)),
\end{equation}
for every $\bG  \in \xG$, where $\ell\bG$ was  defined in~\eqref{eq:ellbG}, we
assume from  now on  that the  empirical risk  $h \mapsto  P_n L_{1}(\bG_{n,h})$
increases.

For  example, $\hat{\bG}_h$  could  correspond to  fitting  a logistic  linear
regression maximizing the log-likelihood under  the constraint that the sum of
the absolute values of the coefficients is smaller than or equal to $1/h$ with
$h \in \xH\equiv  \xR_{+}^{*}$.  We will refer to this  algorithm as the LASSO
logistic regression algorithm.  

\subsubsection*{Uncooperative TMLEs.}

Let  $Q_{W,n}$  be the  empirical  law  of  $\{W_{1}, \ldots,  W_{n}\}$.   Set
arbitrarily $h  \in \xH$ and let  $P_{n,h}^{0} \in \xM$ denote  any element of
$\xM$ such that  the marginal law of $W$ under  $P_{n,h}^{0}$ equals $Q_{W,n}$
and  the conditional  expectation of  $Y$ given  $(A,W)$ under  $P_{n}^{0}$ is
equal to $\bQ_{n}^{0}$, hence $Q_{n}^{0}  = (Q_{W,n}, \bQ_{n}^{0})$ on the one
hand; and  the conditional  expectation of $A$  given $W$  under $P_{n,h}^{0}$
coincide  with   $\bG_{n,h}$  on  the   other  hand.   Evaluating   $\Psi$  at
$P_{n,h}^{0}$ yields an estimator of $\Psi(P_{0})$,
\begin{equation*}
  \Psi(P_{n,h}^{0}) = \frac{1}{n}  \sum_{i=1}^{n} \left(\bQ_{n}^{0} (1, W_{i})
    - \bQ_{n}^{0} (0, W_{i}) \right),
\end{equation*}
which is not targeted toward the  inference of $\Psi(P_{0})$ in the sense that
none of  the known features  of $P_{n}^{0}$  was derived specifically  for the
sake of ultimately estimating $\Psi(P_{0})$. 

One  way  to  target  $\Psi(P_{n,h}^{0})$ toward  $\Psi(P_{0})$  is  to  build
$P_{n,h}^{*} \in \xM$ from $P_{n,h}^{0}$ in such a way that
\begin{equation*}
  P_{n} D^{*} (P_{n,h}^{*}) = o_{P} (1/\sqrt{n})
\end{equation*}
and to infer  $\Psi(P_{0})$ with $\Psi(P_{n,h}^{*})$. This can  be achieved by
``fluctuating'' $P_{n,h}^{0}$ in the following sense. For every $\bG \in \xG$,
introduce the so called ``clever covariate'' $\xC(\bG)$ given by
\begin{equation}
  \label{eq:clever}
  \xC(\bG) (A,W) \equiv \frac{2A-1}{\ell\bG(A,W)}.
\end{equation}
Now,  for  every $\varepsilon  \in  \xR$,  let $\bQ_{n,h,\varepsilon}^{0}$  be
characterized by
\begin{equation}
  \label{eq:fluct:ref}
  \logit    \left(\bQ_{n,h,\varepsilon}^{0}    (A,W)\right)   \equiv    \logit
  \left(\bQ_{n}^{0} (A,W)\right) + \varepsilon \xC(\bG_{n,h}) (A,W) 
\end{equation}
and  $P_{n,h,\varepsilon}^{0} \in  \xM$ be  defined like  $P_{n,h}^{0}$ except
that   the    conditional   expectation    of   $Y$   given    $(A,W)$   under
$P_{n,h,\varepsilon}^{0}$   equals    $\bQ_{n,h,\varepsilon}^{0}$   (and   not
$\bQ_{n}^{0}$).    Clearly,  $P_{n,h,\varepsilon}^{0}   =  P_{n,h}^{0}$   when
$\varepsilon=0$.  Moreover, denoting $L_{2}$ the loss function given by
\begin{equation*}
  L_{2}  (\bQ)   (O)  \equiv  -Y  \log   \bQ(A,W)  -  (1-Y)  \log   \left(1  -
    \bQ(A,W)\right) 
\end{equation*}
for every $\bQ$ induced by a $P \in \xM$, it holds that
\begin{equation*}
  \frac{d}{d\varepsilon}          L_{2}
  (\bQ_{n,h,\varepsilon}^{0}) (O) = - D_{2} (\bQ_{n,h,\varepsilon}^{0}, \bG_{n,h}) (O),
\end{equation*}
a  property  that  prompts  us   to  say  that  the  one-dimensional  submodel
$\{P_{n,h,\varepsilon}^{0} : \varepsilon \in \xR\} \subset \xM$ ``fluctuates''
$P_{n,h}^{0}$ ``in the direction of'' $D_{2} (\bQ_{n}^{0}, \bG_{n,h})$.

The optimal fluctuation  of $P_{n,h}^{0}$ along the above  submodel is indexed
by the minimizer of the empirical risk
\begin{equation}
  \label{eq:opt:eps:ref}
  \varepsilon_{n,h} \equiv \mathop{\argmin}_{\varepsilon  \in \xR} P_{n} L_{2}
  (\bQ_{n,h,\varepsilon}^{0}), 
\end{equation}
of      which      the      existence     is      assumed      (note      that
$\varepsilon  \mapsto   P_{n}  L_{2}  (\bQ_{n,h,\varepsilon}^{0})$   is  twice
differentiable        and       strictly        convex).        We        call
$P_{n,h}^{*} \equiv  P_{n,h,\varepsilon_{n,h}}^{0}$ the  TMLE of  $P_{0}$, and
the resulting estimator
\begin{equation}
  \label{eq:TMLE}
  \psi_{n,h}^{*}   \equiv    \Psi(P_{n,h}^{*})   =   \frac{1}{n}
  \sum_{i=1}^{n} \left(\bQ_{n,h,\varepsilon_{n,h}}^{0} (1, W_{i}) 
    - \bQ_{n,h,\varepsilon_{n,h}}^{0} (0, W_{i}) \right)
\end{equation}
the  TMLE  of $\Psi(P_{0})$.   It  is  readily  seen that  \eqref{eq:TMLE}  is
equivalent to
\begin{equation*}
  P_{n} \left(D^{*} (P_{n,h}^{*}) - D_{2}^{*} (\bQ_{n,h}^{*}, \bG_{n,h}) \right)
  = 0
\end{equation*}
where    $\bQ_{n,h}^{*}   \equiv    \bQ_{n,h,\varepsilon_{n,h}}^{0}$.    Since
$\varepsilon_{n,h}$       minimizes      the       differentiable      mapping
$\varepsilon \mapsto P_{n} L_{2} (Q_{n,h,\varepsilon}^{0})$, it holds moreover
that
\begin{equation}
  \label{eq:EIC:2:eqn:solved}
  P_{n} D_{2}^{*} (\bQ_{n,h}^{*}, \bG_{n,h}) = 0 
\end{equation}
which, combined with the previous display, yields 
\begin{equation}
  \label{eq:EIC:eqn:solved}
  P_{n} D^{*} (P_{n,h}^{*}) = 0;
\end{equation}
in   words,  $\psi_{n,h}^{*}$   is  targeted   toward  $\Psi(P_{0})$   indeed.
Furthermore,  in view  of \eqref{eq:remainder}  and \eqref{eq:EIC:eqn:solved},
$\psi_{n,h}^{*}$ satisfies
\begin{equation}
  \label{eq:TMLE:expansion}
  \psi_{n,h}^{*} - \Psi(P_{0}) = (P_{n}  - P_{0}) D^{*} (P_{n,h}^{*}) + \Rem_{20}
  (\bQ_{n,h}^{*}, \bG_{n,h}). 
\end{equation}
Finally,  the TMLEs  $\psi_{n,h}^{*}$  ($h \in  \xH$)  are said  uncooperative
because, although they share the same initial estimator $\bQ_{n}^{0}$, for any
two $h,  h' \in \xH$, $h  \neq h'$, the construction  of $\psi_{n,h}^{*}$ does
not capitalize on that of $\psi_{n,h'}^{*}$.  

\subsection{Selecting one of the uncooperative TMLEs}
\label{subsec:select:uncoop}

At this stage  of the procedure, a  crucial question is to select  one TMLE in
the  collection  of  uncooperative  TMLEs,   one  that  lends  itself  to  the
construction of a  CI for $\Psi(P_{0})$ with a given  asymptotic level. Such a
TMLE  necessarily   writes  as  $\psi_{n,h_{n}}^{*}$  for   some  well  chosen
$h_{n}\in \xH$.   This could possibly be  a deterministic (fixed in  $n$) or a
data-driven (random and $n$-dependent) element of $\xH$.

The risk $R_{1}$ generated by $L_{1}$ \eqref{eq:L1:loss} is given by
\begin{equation*}
  R_{1} (\bG, \bG_{0}) \equiv E_{Q_{0,W}} \left[\KL(\bG_{0}(W), \bG(W))\right],
\end{equation*}
where $\KL(p,q)$ is the Kullback-Leibler divergence between the Bernoulli laws
with parameters $p,q\in [0,1]$. By Pinsker's inequality, it holds that
\begin{equation*}
  0 \leq 2 P_{0} \left(\bG - \bG_{0}\right)^{2} \leq R_{1} (\bG, \bG_{0})
\end{equation*}
for all $\bG\in \xG$.  Therefore, if $\bG$ is bounded away  from zero and one,
then \eqref{eq:bound:R20} implies 
\begin{equation}
  \label{eq:bound:R20:bis}
  \Rem_{20}  (\bQ, \bG)^{2}  \lesssim P_{0}  (\bQ -  \bQ_{0})^{2} \times  R_{1}(\bG,
  \bG_{0}).  
\end{equation}

If the deterministic $h_{n} \in \xH$ is such that
\textit{(i)} there exist two rates $\rho_{1,n} = o(1)$ and $\rho_{2,n} = o(1)$
such  that  $R_{1}  (\bG_{n,h_{n}},   \bG_{0})  =  o_{P}(\rho_{1,n}^{2})$  and
$P_{0}   (\bQ_{n,h_{n}}^{*}   -   \bQ_{0})^{2}  =   o_{P}   (\rho_{2,n}^{2})$,
\textit{(ii)}
$P_{0} \left(D^{*} (P_{n,h_{n}}^{*}) -  D^{*} (P_{0})\right)^{2} = o_{P} (1)$,
\textit{(iii)} $D^{*} (P_{n,h_{n}}^{*})$ falls in a $P_{0}$-Donsker class with
$P_{0}$-probability        tending         to        one,        \textit{(iv)}
$o_{P}     (\rho_{1,n}\rho_{2,n})     =      o_{P}     (1/\sqrt{n})$,     then
\citep[][Lemma~19.24]{VdV98},           \eqref{eq:TMLE:expansion}          and
\eqref{eq:bound:R20:bis}  guarantee  that  \eqref{eq:asymp:exp:intro}  is  met
(with $\IF = D^{*}(P_{0})$).
[This argument will be used repeatedly  throughout the article.]  Thus, by the
central limit theorem, $\sqrt{n} (\psi_{n,h_{n}}^{*} - \Psi(P_{0}))$ converges
in     law    to     the     centered    Gaussian     law    with     variance
$\Var_{P_{0}}(D^{*}(P_{0})(O))$.
So,  if  the  synergy  between the  convergences  of  $\bQ_{n,h_{n}}^{*}$  and
$\bG_{n,h_{n}}$  to  their  respective   limits  $\bQ_{0}$  and  $\bG_{0}$  is
sufficient, then the TMLE $\psi_{n,h_{n}}^{*}$ can be used to build CIs.

The   argument  falls   apart   if  $o_{P}   (\rho_{1,n}\rho_{2,n})$  is   not
$o_{P}  (1/\sqrt{n})$ (or,  worse,  if the  $L^{2}(P_{0})$-limit $\bQ_{1}$  of
$\bQ_{n,h_{n}}^{*}$  is  not   $\bQ_{0}$,  because  we  do   not  expect  that
$R_{1} (\bG_{n,h_{n}}, \bG_{0}) = o_{P}(1/n)$).
In that  case, whether  or not it  is possible to  derive a  useful asymptotic
linear expansion of a TMLE $\psi_{n,h_{n}}^{*}$ similar to
\eqref{eq:asymp:exp:intro}  will depend  on whether  or not  we can  derive an
asymptotic                 linear                expansion                 for
$\sqrt{n}\,\Rem_{20} (\bQ_{n,h_{n}}^*,\bG_{n,h_{n}})$.

If $\bG_{n,h_{n}}$ was  derived by maximizing the likelihood  over a correctly
specified, finite-dimensional  and fine-tune-parameter-free  parametric model,
then    $\sqrt{n}\,\Rem_{20}    (\bQ_{n,h_{n}}^*,\bG_{n,h_{n}})$   would    be
asymptotically linear. Because of how we estimate $\bG_{0}$, we now argue that
there  is little  chance that  we can  select $h_{n}  \in \xH$  such that  the
remainder   term   $\sqrt{n}\,\Rem_{20}  (\bQ_{n,h_{n}}^*,\bG_{n,h_{n}})$   is
asymptotically linear.  A natural choice  would be to use the likelihood-based
cross-validation selector  $h_{n,\CV}$. Let  us recall how  it is  derived and
explain why we do not believe it will solve our problem.

Let $B_n\in  \{0,1\}^n$ be a  cross-validation scheme.  For  instance, $B_{n}$
could be  a $V$-fold cross-validation  scheme, \textit{i.e.}, a  random vector
taking $V$ different values $b_{1},  \ldots, b_{V} \in \{0,1\}^{n}$, each with
probability     $1/V$,    such     that     \textit{(i)}    the     proportion
$n^{-1}\sum_{i=1}^{n} b_{v}(i)$ of ones among  the coordinates of each $b_{v}$
is close  to $1/V$, and  \textit{(ii)} $\sum_{v=1}^{V}  b_{v}(i) = 1$  for all
$1 \leq i \leq n$.  Let $P_{n,B_n}^0$  be the empirical probability law of the
training subsample $\{O_i : B_n(i)=0, 1  \leq i \leq n\}$ and $P_{n,B_n}^1$ be
the    empirical    probability    law    of    the    validation    subsample
$\{O_{i}: B_n(i)=1, 1 \leq i \leq n\}$.  The likelihood-based cross-validation
selector $h_{n,\CV}$ of $h\in\xH$ is given by
\begin{equation}
  \label{eq:CV:selector}
  h_{n,\CV}  \equiv \mathop{\arg\min}_{h  \in  \xH} E_{B_n}  \left[P_{n,B_n}^1
    L_{1}(\hat{\bG}_h(P_{n,B_n}^0))\right].  
\end{equation}

Unfortunately,         we          do         not          expect         that
$\sqrt{n}\,\Rem_{20}   (\bQ_{n,h}^*,\bG_{n,h})$   is  asymptotically   linear.
Heuristically, $h_{n,\CV}$ trades off the  bias and variance of $\bG_{n,h}$ as
an estimator  of $\bG_0$,  whereas we  wish to  trade off  this bias  with the
variance  of   $\psi_{n,h}^*$.   Clearly,   the  variance  of   the  estimator
$\psi_{n,h}^* =  \Psi(P_{n,h}^{*})$, where $\Psi$  is a smooth  functional, is
significantly   smaller  than   that   of   the  infinite-dimensional   object
$\bG_{n,h}$.

\subsection{Collaborative construction of finitely many collaborative TMLEs } 
\label{subsec:continuum:CTMLEs}

The   take-home    message   of    Sections~\ref{subsec:continuum:TMLEs}   and
\ref{subsec:select:uncoop} is that  the \textit{uncooperative} construction of
a  continuum   of  standard   TMLEs  will  typically   fail  to   produce  one
asymptotically linear TMLE  if the product of the rates  of convergence of the
estimators  of $\bQ_{0}$  and $\bG_{0}$  to their  limits is  not fast  enough
(\textit{i.e.},  $o(1/\sqrt{n})$).  In  Sections~\ref{subsec:continuum:CTMLEs}
and  \ref{subsec:select:colla}, we  demonstrate  how a  \textit{collaborative}
construction of a  continuum of standard TMLEs can  produce one asymptotically
linear TMLE in this challenging situation, under appropriate assumptions.

\subsubsection*{Recursive construction.}


We  now  present  the  collaborative construction  of  \textit{finitely  many}
TMLEs.  In the  forthcoming theoretical  presentation, we  make on  the fly  a
series of assumptions. The most important ones will be emphasized.

We argued that the cross-validated selector $h_{n,\CV}$ \eqref{eq:CV:selector}
does    not    sufficiently    undersmooth     $\bG_{n,h}$    to    make    of
$\sqrt{n}\,\Rem_{20} (\bQ_{n,h}^*,\bG_{n,h})$  an asymptotically  linear term.
Since we have  assumed that $h \mapsto P_{n} L_{1}  (\bG_{n,h})$ increases, we
can focus  on those tuning parameters  $h$ in $\xH \cap  ]0,h_{n,\CV}]$, a set
assumed non-empty from now (an assumption that we call \As{1}{$P_{n},1$}).

The construction is recursive. It unfolds as follows.
\begin{description}
\item[Initialization.]  We  begin as  in Section~\ref{subsec:continuum:TMLEs}:
  for every  $h \in  \xH \cap ]0,h_{n,\CV}]$,  we build  $\bQ_{n,h}^{(*)}$ and
  $P_{n,h}^{(*)}$ using $\bQ_{n}^{0}$ as an initial estimator of $\bQ_{0}$ and
  $\bG_{n,h}$  as the  estimator of  $\bG_{0}$.   Note that  placing the  star
  symbol   between    parentheses   suggests   that    $\bQ_{n,h}^{(*)}$   and
  $P_{n,h}^{(*)}$  are   the  \textit{tentative}  $h$-specific   estimator  of
  $\bQ_{0}$ and TMLE.  Specifically, for every $h \in \xH \cap ]0,h_{n,\CV}]$,
  we   define   $\bQ_{n,h,\varepsilon}^{0}$    as   in   \eqref{eq:fluct:ref},
  $\varepsilon_{n,h,1}$ as in \eqref{eq:opt:eps:ref},  assuming that it exists
  (an   assumption    that   we    call   \As{2}{$P_{n},   1$}),    then   set
  $\bQ_{n,h}^{(*)}   \equiv    \bQ_{n,h,\varepsilon_{n,h,1}}^{0}$   and   find
  $P_{n,h}^{(*)}  \in   \xM$  such  that   the  marginal  law  of   $W$  under
  $P_{n,h}^{(*)}$ is the empirical law $Q_{W,n}$ of $\{W_{1}, \ldots, W_{n}\}$
  and the conditional  expectation of $Y$ given  $(A,W)$ under $P_{n,h}^{(*)}$
  equals $\bQ_{n,h}^{(*)}$, hence $Q_{n,h}^{(*)} = (Q_{W,n}, \bQ_{n,h}^{(*)})$
  on the  one hand;  and the  conditional expectation of  $A$ given  $W$ under
  $P_{n,h}^{(*)}$ coincides with $\bG_{n,h}$ on the other hand.

  We assume that $h \mapsto P_{n} L_{2} (Q_{n,h}^{(*)})$ is minimized globally
  at $h_{n,1}$ in the interior of $\xH \cap ]0,h_{n,\CV}]$ (an assumption that
  we call \As{3}{$P_{n},1$}).  If there are several minimizers, then $h_{n,1}$
  is   the   largest  of   them   by   choice.    Observe  that,   for   every
  $h \in \xH \cap ]0,h_{n,\CV}]$,
  \begin{equation*} 
    P_{n}   L_{2}  (\bQ_{n,h_{n,1}}^{(*)})  \leq   P_{n}  L_{2}
    (\bQ_{n,h}^{(*)}) \leq P_{n} L_{2} (\bQ_{n,h}^{0})
  \end{equation*} 
  and, in particular,
  \begin{equation*} 
    P_{n}   L_{2}   (\bQ_{n,h_{n,1}}^{(*)})  <   P_{n}   L_{2}
    (\bQ_{n,h_{n,\CV}}^{*}) \leq P_{n} L_{2} (\bQ_{n,h_{n,\CV}}^{0}).
  \end{equation*} 

  Let  us  assume now  that,  in  addition, $h  \mapsto  \varepsilon_{n,h,1}$,
  $h \mapsto  1/\bG_{n,h}(W_{i})$ and $h \mapsto  1/(1-\bG_{n,h}(W_{i}))$ (all
  $1 \leq i  \leq n$) are differentiable in an  open neighborhood of $h_{n,1}$
  (an assumption that we  call \As{4}{$P_{n},1$}).  Consequently, \textit{(i)}
  $\partial_{h_{n,1}} D^{*} (\bQ_{n,h_{n,1}}^{(*)}, \bG_{n,\Cdot}) (O_{i})$ is
  well defined  for each $1  \leq i  \leq n$ (see  \eqref{eq:dD:general}), and
  \textit{(ii)} $h \mapsto P_{n} L_{2} (\bQ_{n,h}^{(*)})$ is differentiable in
  that  neighborhood.   Moreover,  since   $h_{n,1}$  minimizes  the  previous
  mapping, we have
  \begin{eqnarray*} 
    0 
    &=&      -\left.\frac{d}{dt}     P_{n}     L_{2}
        (\bQ_{n,t}^{(*)})\right|_{t=h_{n,1}} \\ 
    &=&  \left(\left.\frac{d}{dt}
        \varepsilon_{n,t,1} \right|_{t=h_{n,1}}\right)  \times P_{n} D_{2}^{*}
        (\bQ_{n,h_{n,1}}^{(*)}, \bG_{n,h_{n,1}}) + \varepsilon_{n,h,1} \times
        P_{n} \partial_{h_{n,1}}  D^{*} (\bQ_{n,h_{n,1}}^{(*)},  \bG_{n,\Cdot}) \\
    &=&          \varepsilon_{n,h,1}          \times
        P_{n}      \partial_{h_{n,1}}      D^{*}
        (\bQ_{n,h_{n,1}}^{(*)}, \bG_{n,\Cdot}),
  \end{eqnarray*} 
  where the third equality holds because
  \begin{equation*} 
    P_{n} D_{2}^{*} (\bQ_{n,h_{n,1}}^{(*)}, \bG_{n,h_{n,1}}) =
    P_{n} D_{2}^{*} (\bQ_{n,h, \varepsilon_{n,h,1}}^{0}, \bG_{n,h_{n,1}}) = 0
  \end{equation*}     in     light    of     \eqref{eq:EIC:2:eqn:solved}.     If
  $\varepsilon_{n,h,1} \neq 0$ (an assumption  that we call \As{5}{$P_{n}, 1$}),
  then we thus have proven that the following equation is solved
  \begin{equation*}
    P_{n} \partial_{h_{n,1}}  D^{*} (\bQ_{n,h_{n,1}}^{(*)}, \bG_{n,\Cdot})
    = 0. 
  \end{equation*}

  To  complete  the  initialization,  we define  $h_{n,0}  \equiv  h_{n,\CV}$,
  $\bQ_{n,h_{n,1}}^{*}              \equiv             \bQ_{n,h_{n,1}}^{(*)}$,
  $Q_{n,h_{n,1}}^{*}                \equiv               Q_{n,h_{n,1}}^{(*)}$,
  $P_{n,h_{n,1}}^{*}                \equiv               P_{n,h_{n,1}}^{(*)}$,
  $\psi_{n,h_{n,1}}^{*}  \equiv \Psi(P_{n,h_{n,1}}^{*})$,  and note  that they
  satisfy
  \begin{gather*}
    P_{n} \partial_{h_{n,1}} D^{*} (\bQ_{n,h_{n,1}}^{*}, \bG_{n,\Cdot})
    = P_{n} D^{*} (P_{n,h_{n,1}}^{*}) = 0 \qquad \text{and}\\
    P_{n} L_{2} (\bQ_{n,h_{n,1}}^{*}) < P_{n} L_{2} (\bQ_{n,h_{n,0}}^{*})
  \end{gather*}
  (recall  how  \eqref{eq:EIC:2:eqn:solved} implied  \eqref{eq:EIC:eqn:solved}
  earlier).
\item[Recursion.]  Let $k \geq 2$ be arbitrarily chosen. Suppose that, for all
  $1   \leq    \ell   <   k$,    we   have   already   built    the   5-tuples
  $(h_{n,\ell},\bQ_{n,h_{n,\ell}}^{*},                   Q_{n,h_{n,\ell}}^{*},
  P_{n,h_{n,\ell}}^{*},     \psi_{n,h_{n,\ell}}^{*})$    under     assumptions
  \As{1}{$P_{n},\ell$}    to     \As{5}{$P_{n},\ell$},    and     also    that
  $\xH  \cap  ]0,h_{n,k-1}]  \neq  \emptyset$  (an  assumption  that  we  call
  \As{1}{$P_{n},k$}).    Let    us   now    present   the    construction   of
  $(h_{n,k}, \bQ_{n,h_{n,k}}^{*}, Q_{n,h_{n,k}}^{*}, P_{n,h_{n,k}}^{*})$ under
  assumptions   \As{1}{$P_{n},k$}    to   \As{5}{$P_{n},k$}.     Because   the
  presentation  is very  similar to  that of  the initialization,  it is  more
  laid out more directly.

  For  every  $h   \in  \xH  \cap  ]0,h_{n,k-1}]$,   we  build  \textit{again}
  $\bQ_{n,h}^{(*)}$      and      $P_{n,h}^{(*)}$      \textit{but}      using
  $\bQ_{n,h_{n,k-1}}^{*}$ as an initial estimator of $\bQ_{0}$ and $\bG_{n,h}$
  as    the    estimator    of    $\bG_{0}$.     Specifically,    for    every
  $h \in  \xH \cap ]0,h_{n,k-1}]$, we  define $\bQ_{n,h,\varepsilon}^{k-1}$ as
  in   \eqref{eq:fluct:ref}  with   $\bQ_{n,h_{n,k-1}}^{*}$  substituted   for
  $\bQ_{n}^{0}$,  $\varepsilon_{n,h,k}$  as   in  \eqref{eq:opt:eps:ref}  with
  $\bQ_{n,h,\varepsilon}^{k-1}$  substituted  for  $\bQ_{n,h,\varepsilon}^{0}$
  (\As{2}{$P_{n}, k$}  assumes the  existence of  $\varepsilon_{n,h,k}$), then
  set  $\bQ_{n,h}^{(*)}  \equiv  \bQ_{n,h,\varepsilon_{n,h,k}}^{k}$  and  find
  $P_{n,h}^{(*)}  \in   \xM$  such  that   the  marginal  law  of   $W$  under
  $P_{n,h}^{(*)}$ is the empirical law $Q_{W,n}$ of $\{W_{1}, \ldots, W_{n}\}$
  and the conditional  expectation of $Y$ given  $(A,W)$ under $P_{n,h}^{(*)}$
  equals $\bQ_{n,h}^{(*)}$, hence $Q_{n,h}^{(*)} = (Q_{W,n}, \bQ_{n,h}^{(*)})$
  on the  one hand;  and the  conditional expectation of  $A$ given  $W$ under
  $P_{n,h}^{(*)}$ coincides with $\bG_{n,h}$ on the other hand.

  We assume that $h \mapsto P_{n} L_{2} (Q_{n,h}^{(*)})$ is minimized globally
  at $h_{n,k}$ in the interior of $\xH \cap ]0,h_{n,k-1}]$ (an assumption that
  we call \As{3}{$P_{n},k$}).  If there are several minimizers, then $h_{n,k}$
  is  the  largest  of  them  by   choice.   Moreover,  we  also  assume  that
  $h   \mapsto  \varepsilon_{n,h,k}$,   $h  \mapsto   1/\bG_{n,h}(W_{i})$  and
  $h   \mapsto  1/(1-\bG_{n,h}(W_{i}))$   (all  $1   \leq  i   \leq  n$)   are
  differentiable in an  open neighborhood of $h_{n,k}$ (an  assumption that we
  call                    \As{4}{$P_{n},k$}).                    Consequently,
  $\partial_{h_{n,k}} D^{*}  (\bQ_{n,h_{n,k}}^{(*)}, \bG_{n,\Cdot})(O_{i})$ is
  well  defined  for each  $1  \leq  i  \leq n$  (see  \eqref{eq:dD:general}),
  $h  \mapsto  P_{n}  L_{2}   (\bQ_{n,h}^{(*)})$  is  differentiable  in  that
  neighborhood and, since $h_{n,k}$ minimizes the previous mapping,
  \begin{equation*}
    \varepsilon_{n,h,k}    \times     P_{n}    \partial_{h_{n,k}}    D^{*}
    (\bQ_{n,h_{n,k}}^{(*)}, \bG_{n,\Cdot}) = 0.
  \end{equation*}
  If   $\varepsilon_{n,h,k}   \neq   0$    (an   assumption   that   we   call
  \As{5}{$P_{n}, k$}), then it holds that
  \begin{equation*}
    P_{n} \partial_{h_{n,k}}  D^{*} (\bQ_{n,h_{n,k}}^{(*)}, \bG_{n,\Cdot})
    = 0. 
  \end{equation*}

  To   complete    the   presentation    and   the   recursion,    we   define
  $\bQ_{n,h_{n,k}}^{*}              \equiv             \bQ_{n,h_{n,k}}^{(*)}$,
  $Q_{n,h_{n,k}}^{*}                \equiv               Q_{n,h_{n,k}}^{(*)}$,
  $P_{n,h_{n,k}}^{*}                \equiv               P_{n,h_{n,k}}^{(*)}$,
  $\psi_{n,h_{n,k}}^{*}  \equiv \Psi(P_{n,h_{n,k}}^{*})$,  and note  that they
  satisfy
  \begin{gather}
    \label{eq:two:eqns:solved:recur}
    P_{n} \partial_{h_{n,k}} D^{*}  (\bQ_{n,h_{n,k}}^{*}, \bG_{n,\Cdot}) =
    P_{n} D^{*} (P_{n,h_{n,k}}^{*}) = 0
    \qquad    \text{and}\\
    \notag 
    P_{n}      L_{2}      (\bQ_{n,h_{n,\ell}}^{*})     <      P_{n}      L_{2}
    (\bQ_{n,h_{n,\ell-1}}^{*})
  \end{gather}
  for all $1 \leq \ell \leq k$.
\end{description}
We  discuss when  to  stop the  loop  in the  next  paragraph. The  collection
$\{P_{n,h_{n,k}}^{*}:  0 \leq  k  \leq  K_{n}\}$ of  TMLEs  is arguably  built
collaboratively,  as the  derivation of  every $P_{n,h_{n,\ell}}^{*}$  heavily
depends on $P_{n,h_{n,\ell-1}}^{*}$.

The loop is iterated until a stopping criterion is met.  The instantiations of
the  collaborative TMLE  laid out  in Section~\ref{sec:software}  rely on  the
LASSO logistic  regression algorithm.  It  is thus possible to  pre-specify an
upper  bound on  $K_{n}$. In  general,  we may  decide to  stop the  recursive
construction whenever a maximal number $K$  of iterations has been reached, or
$h_{n,k}  \leq  \hbar$,  or   $M$  successive  TMLEs  $\psi_{n,h_{n,k+m}}^{*}$
($0  \leq  m  <  M$)  all  belong  to  an  interval  of  length  smaller  than
$\eta_{n,k}$,  for  some  user-supplied  integers $K_{\max}$,  $M$  and  small
positive  numbers $h_{\min}$  and $\eta_{n,k}$,  the former  chosen such  that
$\xH \cap ]0,h_{\min}[$ is non-empty  and the latter possibly sample-size- and
data-driven.   The  choice   of  $K_{\max}$  would  typically   be  driven  by
considerations about the  computational time.  The choice  of $h_{\min}$ would
typically  depend  on  the  collection   $\{\hat{\bG}_h  :  h  \in  \xH\}$  of
$h$-specific algorithms, $h\leq h_{\min}$ meaning that too much undersmoothing
is certainly  at play when  using $\hat{\bG}_{h}$.  We would  suggest choosing
$M\equiv       3$        and       characterizing        $\eta_{n,k}$       by
$\eta_{n,k}^{2}   \equiv  \Upsilon_{P_{n}}(P_{n,h_{n,k}}^{*})   /  10n$   with
$\Upsilon_{P_{n}} : \xM \to \xR_{+}^{*}$ given by
\begin{equation*}
  \Upsilon_{P_{n}}(P)  \equiv E_{P_{n}}\left[D^{*}(P)(O)^{2}\right]  = \frac{1}{n}
  \sum_{i=1}^{n} D^{*}(P)(O_{i})^{2}.
\end{equation*}
The  definition   of  $\Upsilon_{P_{n}}$  is   justified  by  the   fact  that
$\Upsilon_{P_{n}} (D^{*}(P_{n,h_{n}}^{*}))$ estimates  the asymptotic variance
of  the   TMLE  $\Psi(P_{n,h_{n}}^{*})$   in  the   context  where   we  prove
\eqref{eq:asymp:exp:intro}    (with     $\IF    =    D^{*}     (P_{0})$)    in
Section~\ref{subsec:select:uncoop}.

\subsection{Selecting one of the finitely many collaborative TMLEs }
\label{subsec:select:colla}

It  remains  to  determine  which  TMLE to  select  among  the  collection  of
collaborative        TMLEs        that         we        constructed        in
Section~\ref{subsec:continuum:CTMLEs}.   Again, the  selection  hinges on  the
cross-validation principle.

The recursive construction  described in Section~\ref{subsec:continuum:CTMLEs}
can  be applied  to the  empirical  measure $\bbP_{n}$  of any  subset of  the
complete   data    set.    Starting   from   $h_{n,\CV}$    (as   defined   in
\eqref{eq:CV:selector}  even when  $\bbP_{n}$ differs  from $P_{n}$),  let the
5-tuple
$(\bbH_{n,1},        \bbbQ_{n,\bbH_{n,1}}^{*},        \bbQ_{n,\bbH_{n,1}}^{*},
\bbP_{n,\bbH_{n,1}}^{*}, \Psi(\bbP_{n,\bbH_{n,1}}^{*}))$  be defined  like the
5-tuple
$(h_{n,1},\bQ_{n,h_{n,1}}^{*},      Q_{n,h_{n,1}}^{*},      P_{n,h_{n,1}}^{*},
\psi_{n,h_{n,1}}^{*})$   with  $\bbP_{n}$   substituted  for   $P_{n}$,  under
assumptions   \As{1}{$\bbP_{n},   1$}   to   \As{5}{$\bbP_{n},   1$}.    Then,
recursively,                                                               let
$(\bbH_{n,k},        \bbbQ_{n,\bbH_{n,k}}^{*},        \bbQ_{n,\bbH_{n,k}}^{*},
\bbP_{n,\bbH_{n,k}}^{*},   \Psi(\bbP_{n,\bbH_{n,k}}^{*}))$  be   defined  like
$(h_{n,k},\bQ_{n,h_{n,k}}^{*},      Q_{n,h_{n,k}}^{*},      P_{n,h_{n,k}}^{*},
\psi_{n,h_{n,k}}^{*})$   with  $\bbP_{n}$   substituted  for   $P_{n}$,  under
assumptions  \As{1}{$\bbP_{n}, k$}  to \As{5}{$\bbP_{n},  k$}.  The  recursive
construction  is stopped  when $\bbK_{n}$  5-tuples have  been derived,  where
$\bbK_{n}$ is defined like $K_{n}$ with $\bbP_{n}$ substituted for $P_{n}$.

The                                                                 collection
\begin{equation}
  \label{eq:bbCTMLE}
  \left\{(\bbH_{n,k},       \bbbQ_{n,\bbH_{n,k}}^{*},       \bbQ_{n,\bbH_{n,k}}^{*},
    \bbP_{n,\bbH_{n,k}}^{*},  \Psi(\bbP_{n,\bbH_{n,k}}^{*})) :  1 \leq  k \leq
    \bbK_{n}\right\} 
\end{equation}
of  $\bbK_{n}$  collaborative   TMLEs  is  used  to  define   a  continuum  of
collaborative TMLEs in the following straightforward way.  The challenge is to
associate                               a                              4-tuple
$(\bbbQ_{n,h}^{*},  \bbQ_{n,h}^{*}, \bbP_{n,h}^{*},  \Psi(\bbP_{n,h}^{*}))$ to
any $h \in \xH \cap ]0, h_{n,\CV}]$.  To do so, we simply let $\bbH_{n}(h)$ be
the element of  $\{\bbH_{n,k} : 1 \leq  k \leq \bbK_{n}\}$ that  is closest to
$h$ (with a preference for the larger of the two closer ones when $h$ is right
in the middle), that is, formally, we set 
\begin{equation}
  \label{eq:bbH}
  \bbH_{n}(h) \equiv \max  \Big\{\bbH_{n,k} : |h - \bbH_{n,k}| =  \min \{|h -
  \bbH_{n,\ell}| : 1 \leq \ell \leq \bbK_{n}\}\Big\}
\end{equation}
and associate to $h$ the corresponding 4-tuple
\begin{equation*}
  (\bbbQ_{n,\bbH_{n}(h)}^{*},                        \bbQ_{n,\bbH_{n}(h)}^{*},
  \bbP_{n,\bbH_{n}(h)}^{*}, \Psi(\bbP_{n,\bbH_{n}(h)}^{*})). 
\end{equation*}

Let    $B_{n}$    be    the     cross-validation    scheme    introduced    in
Section~\ref{subsec:select:uncoop}.   By convention,  let  the  $\max$ of  the
empty set be 0. The collaborative TMLE that we select is
\begin{equation}
  \label{eq:CTMLE}
  (\bQ_{n,\hnk}^{*}, Q_{n,\hnk}^{*}, P_{n,\hnk}^{*}, \psi_{n,\hnk}^{*}) 
\end{equation}
where $\kappa_{n}$ is given by 
\begin{equation}
  \label{eq:kappa}
  \kappa_{n} \equiv  1 \vee  \max \left\{1  \leq k \leq  K_{n} :  h_{n,k} \geq
    \mathop{\arg\min}_{h    \in    \xH    \cap   ]0,    h_{n,\CV}]}    E_{B_n}
    \left[P_{n,B_n}^1     L_{2}(\bbbQ_{n,\bbH_{n}(h)}^{*}\big|_{\bbP_{n}     =
        P_{n,B_n}^0}))\right]\right\}. 
\end{equation}
In words,  $\kappa_{n}$ is the unique  element of $\{1, \ldots,  K_{n}\}$ such
that $\hnk$ is the smallest element of $\{h_{n,1}, \ldots, h_{n,K_{n}}\}$ that
is  larger than  the  minimizer  of the  cross-validated  $L_{2}$-risk of  the
collaborative TMLE,  if there  exists such  an element,  and 1  otherwise.  In
\eqref{eq:kappa},  $\bbbQ_{n,\bbH_{n}(h)}^{*}\big|_{\bbP_{n}  =  P_{n,B_n}^0}$
equals $\bbbQ_{n,\bbH_{n}(h)}^{*}$ when $\bbP_{n} = P_{n,B_n}^0$.

The contrast  between $\hnk$ and  $h_{n,\CV}$ is  stark. At first  glance, the
main  difference is  that the  role play  by cross-validated  $L_{1}$-risks of
algorithms  to  estimate  $\bG_{0}$  in \eqref{eq:CV:selector}  is  played  by
cross-validated  $L_{2}$-risks   of  algorithms   to  estimate   $\bQ_{0}$  in
\eqref{eq:kappa}.  A closer examination reveals that the difference is deeper.
Replacing       $L_{1}       (\hat{\bG}_{h}       (P_{n,B_{n}}^{0}))$       by
$L_{2}  (\bQ_{n,B_{n}h}^{0*})$   (with  $\bQ_{n,B_{n},h}^{0*}$   defined  like
$\bQ_{n,h}^{*}$   in   Section~\ref{subsec:continuum:TMLEs}   but   based   on
$P_{n,B_{n}}^{0}$  instead  of  $P_{n}$)  would  not  make  of  the  resulting
alternative cross-validated selector  of $h$ a good candidate:  because of the
inherent lack of cooperation between the uncooperative TMLEs $\psi_{n,h}^{*}$,
the resulting estimator  of $\bG_{0}$ would not even be  consistent. This fact
motivates the general  C-TMLE methodology, of which  the present instantiation
includes  a   twist  consisting  in   solving  two  critical   equations,  see
\eqref{eq:two:eqns:solved:recur}.

\subsection{Asymptotics}
\label{subsec:asymptot}

The   study  of   the  asymptotic   properties  of   the  collaborative   TMLE
$\psi_{n,\hnk}^{*}$ hinges on  Theorem~\ref{theo:high:level}. We first specify
two pseudo-distances  $d_{\xG}$ and $d_{\xQ}$ on  $\xG$ and $\xQ$ in  light of
requirement  \eqref{eq:remainder:bound:intro}.  On  the one  hand, because  we
will  eventually assume  that $\bG_{0}$  is bounded  away from  zero and  one,
\eqref{eq:remainder} yields that  we can choose $d_{\xG}$ such  that, for each
$\bG_{1}, \bG_{2} \in \xG$,
\begin{equation*}
  d_{\xG} (\bG_{1},\bG_{2})^{2} \equiv P_{0} (\bG_{1}-\bG_{2})^{2}.  
\end{equation*}
On     the      other     hand,      note     that      any     data-dependent
$Q_{n}  \equiv  (Q_{W,n},  \bQ_{n})  \in   \xQ$  naturally  gives  rise  to  a
substitution estimator $\psi_{n}$ of $\psi_{0}$:
\begin{equation*}
  \psi_{n} \equiv  E_{Q_{W,n}} \left(\bQ_{n} (1,  W) - \bQ_{n}  (0,W)\right) =
  \frac{1}{n}    \sum_{i=1}^{n}   \left(\bQ_{n}    (1,   W_{i})    -   \bQ_{n}
    (0,W_{i})\right). 
\end{equation*}
It is easy to check (see the appendix) that the following result holds.
\begin{lemma}
  \label{lem:easy}
  Assume  that  $\bG_{0}$  is  bounded  away   from  zero  and  one  and  that
  $\bQ_{n}(1,\cdot) - \bQ_{n}(0,\cdot)$ falls  in a $P_{0}$-Donsker class with
  $P_{0}$-probability          tending           to          one.           If
  $P_{0}  (\bQ_{n}  - \bQ_{1})^{2}  =  o_{P}  (1)$  for some  $\bQ_{1}$,  then
  $\psi_{n} = P_{0} (\bQ_{1}(1,\cdot) - \bQ_{1}(0,\cdot)) + o_{P} (1)$.
\end{lemma}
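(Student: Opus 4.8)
The plan is to write $\psi_{n}$ as an empirical mean and compare it to the target through the standard decomposition into an empirical-process term, a law-of-large-numbers term, and a bias term. Introduce the shorthands $f_{n} \equiv \bQ_{n}(1,\cdot) - \bQ_{n}(0,\cdot)$ and $f_{1} \equiv \bQ_{1}(1,\cdot) - \bQ_{1}(0,\cdot)$, viewed as functions of $W$ alone, so that $\psi_{n} = P_{n} f_{n}$ and the claimed limit is $P_{0} f_{1}$. The starting point is the identity
\begin{equation*}
  \psi_{n} - P_{0} f_{1} = (P_{n} - P_{0})(f_{n} - f_{1}) + (P_{n} - P_{0}) f_{1} + P_{0}(f_{n} - f_{1}),
\end{equation*}
and the goal is to show that each of the three summands is $o_{P}(1)$.

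The crux of the argument, and the step I expect to be the main obstacle, is to transfer the hypothesis $P_{0}(\bQ_{n} - \bQ_{1})^{2} = o_{P}(1)$ --- an $L^{2}$ statement over the joint law of $(A,W)$ --- into control of $P_{0}(f_{n} - f_{1})^{2}$, an $L^{2}$ statement over the marginal law $Q_{W,0}$ of $W$. Here the positivity assumption on $\bG_{0}$ is essential. I would condition on $W$ and sum over $A \in \{0,1\}$ to write $P_{0}(\bQ_{n} - \bQ_{1})^{2}$ as
\begin{equation*}
  E_{Q_{W,0}}\!\left[\bG_{0}(W) \big(\bQ_{n}(1,W) - \bQ_{1}(1,W)\big)^{2} + \big(1 - \bG_{0}(W)\big) \big(\bQ_{n}(0,W) - \bQ_{1}(0,W)\big)^{2}\right].
\end{equation*}
Since $\bG_{0}$ is bounded away from zero and one, say $c_{0} \leq \bG_{0} \leq 1 - c_{0}$ for some $c_{0} > 0$, each of the two $a$-specific mean-squared errors $E_{Q_{W,0}}[(\bQ_{n}(a,\cdot) - \bQ_{1}(a,\cdot))^{2}]$ ($a \in \{0,1\}$) is at most $c_{0}^{-1} P_{0}(\bQ_{n} - \bQ_{1})^{2}$; combining them through $(x-y)^{2} \leq 2x^{2} + 2y^{2}$ yields $P_{0}(f_{n} - f_{1})^{2} \lesssim P_{0}(\bQ_{n} - \bQ_{1})^{2} = o_{P}(1)$.

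With this $L^{2}$ bound in hand, the three terms are routine. The bias term $P_{0}(f_{n} - f_{1})$ is $o_{P}(1)$ by Cauchy-Schwarz. The empirical-process term $(P_{n} - P_{0})(f_{n} - f_{1})$ is handled exactly as in Section~\ref{subsec:select:uncoop}: $f_{n}$ lies in a $P_{0}$-Donsker class with probability tending to one and its $L^{2}(P_{0})$-distance to the fixed limit $f_{1}$ vanishes in probability, so \citep[][Lemma~19.24]{VdV98} gives $(P_{n} - P_{0})(f_{n} - f_{1}) = o_{P}(1/\sqrt{n})$, hence $o_{P}(1)$. Finally $f_{1}$ is a fixed square-integrable function (indeed valued in $[-1,1]$, since $Y \in [0,1]$ forces $\bQ_{1} \in [0,1]$ almost surely), so $(P_{n} - P_{0}) f_{1} = O_{P}(1/\sqrt{n}) = o_{P}(1)$ by the central limit theorem. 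Summing the three contributions gives $\psi_{n} - P_{0} f_{1} = o_{P}(1)$, which is precisely the claim $\psi_{n} = P_{0}(\bQ_{1}(1,\cdot) - \bQ_{1}(0,\cdot)) + o_{P}(1)$.
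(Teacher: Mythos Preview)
Your proof is correct and follows essentially the same route as the paper's: the identical three-term decomposition of $\psi_{n}-P_{0}f_{1}$, the same use of positivity of $\bG_{0}$ to pass from $P_{0}(\bQ_{n}-\bQ_{1})^{2}$ to $P_{0}(f_{n}-f_{1})^{2}$, and the same tools (Cauchy--Schwarz, \citep[][Lemma~19.24]{VdV98}, and the CLT) for the three terms. The only cosmetic difference is that the paper packages the positivity step via $1/\ell\bG_{0}$ where you spell out the conditioning on $W$ explicitly.
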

Since we  always estimate the  marginal law  of $W$ under  $P_{0}$, $Q_{W,0}$,
with   its  empirical   counterpart  $Q_{W,n}$,   we  can   thus  define   the
pseudo-distance $d_{\xQ}$ by setting, for each $Q_{1}, Q_{2} \in \xQ$,
\begin{equation*}
  d_{\xQ} (Q_{1}, Q_{2})^{2} \equiv P_{0}  (\bQ_{1} - \bQ_{2})^{2},
\end{equation*}
an expression  that does  not depend  on the first  components of  $Q_{1}$ and
$Q_{2}$.

Consider the following inter-dependent assumptions.   The first one is related
to \textbf{A1}$(Q,h,c)$ and completes \As{5}{$P_{n}, h_{n,\kappa_{n}}$}.
\begin{description}
\item[\textbf{C1}] There exists a universal  constant $C_{1} \in ]0,1/2[$ such
  that $\bG_{0}$  and any by-product $\bG_{n,h}$  of algorithm $\hat{\bG}_{h}$
  (any $h \in \xH$) trained on the empirical measure $P_{n}$ take their values
  in  $[C_{1},  1-C_{1}]$.   Moreover,   there  exists  an  open  neighborhood
  $\xT \subset \xH$ of $h_{n,\kappa_{n}}$ and a universal constant $C_{2} > 0$
  such  that  $t \mapsto  \bG_{n,t}(W)$  is  twice differentiable  over  $\xT$
  ($P_{0}$-almost surely) and, $P_{0}$-almost surely,
  \begin{equation*}
    \sup_{h  \in  \xT}  \left|\frac{d}{dt}  \bG_{n,t}  (W)|_{t=h}\right|  \vee
    \sup_{h \in \xT} \left|\frac{d^{2}}{dt^{2}} \bG_{n,t} (W)|_{t=h}\right| 
    \leq C_{2}.  
  \end{equation*}
\end{description}
When \textbf{C1}  is met,  we denote $\bG_{n,h}'(W)$  the first  derivative of
$t \mapsto \bG_{n,t}(W)$ at $h \in \xH$.
\begin{description}
\item[\textbf{C2}]  Both $\bG_{n,  \hnk}$ and  $\bQ_{n,\hnk}^{*}$ converge  in
  $L^{2}(P_{0})$, to $\bG_{0}$ and $\bQ_{1}$ respectively.  Moreover, it holds
  that   $P_{0}  (\bG_{n,\hnk}   -  \bG_{0})^{2}   =  o_{P}(1/\sqrt{n})$   and
  $P_{0}  (\bG_{n,\hnk}  -  \bG_{0})^{2}   \times  P_{0}  (\bQ_{n,\hnk}^{*}  -
  \bQ_{1})^{2} = o_{P}(1/n)$.
\item[\textbf{C3}]        Assumption         \textbf{A4}        is        met,
  $(\hnk       -       \th_{n})^{2}       =       o_{P}(1/\sqrt{n})$       and
  $(\hnk  -  \th_{n})^{2}  \times  P_{0} (\bQ_{n,\hnk}^{*}  -  \bQ_{1})^{2}  =
  o_{P}(1/n)$.
\item[\textbf{C4}]     With     $P_{0}$-probability    tending     to     one,
  $\bQ_{n,\hnk}^{*}$,  $\bG_{n,\hnk}$,  $\bG_{n,\th_{n}}$ and  $\bG_{n,\hnk}'$
  fall in $P_{0}$-Donsker classes.
\end{description}

We    are    now   in    a    position    to    state   the    corollary    of
Theorem~\ref{theo:high:level} that describes the  asymptotic properties of the
collaborative TMLE targeting the average treatment effect.

\begin{corollary}[Asymptotics of the collaborative TMLE -- targeting the average
  treatment effect]
  \label{theo:specific}
  Suppose that  assumptions \As{1}{$\cdot,  \cdot$} to  \As{5}{$\cdot, \cdot$}
  that     we    made     in    Sections~\ref{subsec:continuum:CTMLEs}     and
  \ref{subsec:select:colla} when constructing the  collaborative TMLE given in
  \eqref{eq:CTMLE}  are  met.   In   addition,  suppose  that  \textbf{C1}  to
  \textbf{C4} are satisfied. Then
  \begin{equation*}
    \psi_{n,\hnk}^{*}  - \Psi(P_{0})  =  (P_{n} -  P_{0}) \left(D^{*}  (Q_{1},
      \bG_{0}) + \Delta(P_{1})\right) + o_{P} (1/\sqrt{n}). 
  \end{equation*}
\end{corollary}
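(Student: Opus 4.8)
The plan is to deduce Corollary~\ref{theo:specific} directly from the high-level Theorem~\ref{theo:high:level}, by checking that assumptions \textbf{A2}--\textbf{A5} hold in the average-treatment-effect instantiation under the identifications $h_{n} \equiv \hnk$, $Q_{n,h_{n}}^{*} \equiv Q_{n,\hnk}^{*}$, $\bG_{n,h_{n}} \equiv \bG_{n,\hnk}$, $D^{*}$ the efficient influence curve of~\eqref{eq:EIC}, $Q_{1} \equiv (Q_{W,0}, \bQ_{1})$ the limit furnished by \textbf{C2}, and $\th_{n}$, $\Delta(P_{1})$ the oracle objects imported from \textbf{A4}/\textbf{C3}. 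Once \textbf{A2}--\textbf{A5} are in force, the conclusion is verbatim~\eqref{eq:theo:high:level}. Hence all the work is verification, organized around three recurring argument types: exact estimating-equation identities coming from the construction, empirical-process (Donsker) controls, and Cauchy--Schwarz product bounds.

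First I would dispatch \textbf{A2}, which is where the collaborative twist pays off. Because $\hnk$ is one of the indices $h_{n,k}$ produced by the recursion of Section~\ref{subsec:continuum:CTMLEs} under \As{1}{$P_{n},\kappa_{n}$} to \As{5}{$P_{n},\kappa_{n}$}, the double identity~\eqref{eq:two:eqns:solved:recur} gives \emph{simultaneously} $P_{n} D^{*}(P_{n,\hnk}^{*}) = 0$, which is~\eqref{eq:hla2:one} with room to spare, and $P_{n} \partial_{\hnk} D^{*}(\bQ_{n,\hnk}^{*}, \bG_{n,\Cdot}) = 0$, which yields~\eqref{eq:hla2:two} trivially since $0 = o_{P}(1/n^{1/4})$. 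The remaining ingredient of \textbf{A2} is \textbf{A1}$(Q_{n,\hnk}^{*}, \hnk, c_{2})$, namely twice-differentiability of $t \mapsto D^{*}(\bQ_{n,\hnk}^{*}, \bG_{n,t})(O)$ with uniformly bounded second derivative; this I would obtain by differentiating the explicit form~\eqref{eq:EIC} (which depends on $\bG$ only through $1/\ell\bG$) and bounding $\partial_{h}D^{*}$, $\partial_{h}^{2}D^{*}$ using the positivity constant $C_{1}$ and the derivative bounds $C_{2}$ of \textbf{C1}.

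Next I would verify \textbf{A3}--\textbf{A5}. The consistency clauses (the opening sentences of \textbf{A3} and \textbf{A5}) follow from the $L^{2}(P_{0})$-convergences of \textbf{C2} together with the chosen pseudo-distances $d_{\xG}(\bG_{1},\bG_{2})^{2} = P_{0}(\bG_{1}-\bG_{2})^{2}$ and $d_{\xQ}(Q_{1},Q_{2})^{2} = P_{0}(\bQ_{1}-\bQ_{2})^{2}$, while $(\hnk - \th_{n})^{2} = o_{P}(1/\sqrt{n})$ and \textbf{A4} itself are imported directly from \textbf{C3}. The empirical-process conditions~\eqref{eq:hla3:one},~\eqref{eq:hla5:one} and~\eqref{eq:hla5:three} all have the shape $(P_{n}-P_{0})(f_{n}-f) = o_{P}(1/\sqrt{n})$; I would settle each with the lemma already used in Section~\ref{subsec:select:uncoop} (\citep[Lemma~19.24]{VdV98}), checking its two hypotheses: that the integrand lies in a $P_{0}$-Donsker class with probability tending to one---supplied by \textbf{C4} for $\bQ_{n,\hnk}^{*}$, $\bG_{n,\hnk}$, $\bG_{n,\th_{n}}$ and $\bG_{n,\hnk}'$, combined with the Lipschitz dependence of $D^{*}$ and $\partial_{h}D^{*}$ on their arguments granted by \textbf{C1}---and that its $L^{2}(P_{0})$-norm vanishes, which reduces to the $L^{2}$-convergences of \textbf{C2} and, for~\eqref{eq:hla5:one}, to $|\hnk - \th_{n}| \to 0$. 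The two product conditions I would handle by Cauchy--Schwarz: for~\eqref{eq:hla3:three}, the same argument underlying~\eqref{eq:bound:R20} applied to the difference (which is linear in $\bQ$) gives a bound $\lesssim d_{\xG}(\bG_{n,\hnk}, \bG_{0})\, d_{\xQ}(Q_{n,\hnk}^{*}, Q_{1})$, which is $o_{P}(1/\sqrt{n})$ by the product rate in \textbf{C2}; and for~\eqref{eq:hla5:two}, bounding $\|\partial_{\hnk} D^{*}(\bQ_{n,\hnk}^{*}, \bG_{n,\Cdot}) - \partial_{\hnk} D^{*}(\bQ_{1}, \bG_{n,\Cdot})\|_{L^{2}(P_{0})} \lesssim \|\bQ_{n,\hnk}^{*} - \bQ_{1}\|_{L^{2}(P_{0})}$ (again via \textbf{C1}) and invoking $(\hnk - \th_{n})^{2} P_{0}(\bQ_{n,\hnk}^{*} - \bQ_{1})^{2} = o_{P}(1/n)$ from \textbf{C3}.

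The main obstacle is the bookkeeping around smoothness: computing $\partial_{h}D^{*}$ and $\partial_{h}^{2}D^{*}$ from~\eqref{eq:EIC} and showing they inherit boundedness and the appropriate Lipschitz/Donsker behaviour from \textbf{C1} and \textbf{C4}. In particular, establishing the \textbf{A1}-type second-derivative bounds at \emph{both} $Q_{n,\hnk}^{*}$ (for \textbf{A2}) and $Q_{1}$ (for \textbf{A5}), and confirming that the derivative process $\bG_{n,\hnk}'$ enters~\eqref{eq:hla5:three} within the scope of the Donsker hypothesis of \textbf{C4}, are the steps demanding genuine care; the remainder is a mechanical matching of \textbf{C1}--\textbf{C4} onto the slots of \textbf{A2}--\textbf{A5}, after which Theorem~\ref{theo:high:level} closes the argument.
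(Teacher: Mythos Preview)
Your proposal is correct and follows essentially the same route as the paper: verify \textbf{A2} from the construction via~\eqref{eq:two:eqns:solved:recur} together with an explicit second-derivative bound coming from \textbf{C1}, check \textbf{A3} and \textbf{A5} by combining Donsker/$L^{2}$ arguments (the paper packages the Donsker-preservation facts you mention as a separate lemma) with Cauchy--Schwarz product bounds fed by the rate assumptions in \textbf{C2} and \textbf{C3}, import \textbf{A4} from \textbf{C3}, and then invoke Theorem~\ref{theo:high:level}. The only small detail you leave implicit is that the $L^{2}(P_{0})$ convergence needed for~\eqref{eq:hla3:one} also requires $\psi_{n,\hnk}^{*} \to P_{0}(\bQ_{1}(1,\cdot)-\bQ_{1}(0,\cdot))$, which the paper extracts from Lemma~\ref{lem:easy}.
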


By    the    central   limit    theorem,    the    corrolary   implies    that
$\sqrt{n} (\psi_{n,\hnk}^{*} - \Psi(P_{0}))$ converges  in law to the centered
Gaussian              law             with              a             variance
$\sigma^{2}  \equiv  P_{0}  (D^{*}  (Q_{1},  \bG_{0})  +  \Delta(P_{1}))^{2}$.
Therefore,  provided  that  we  can  estimate  $\sigma^{2}$  consistently  (or
conservatively), we  can build CIs  for $\Psi(P_{0})$ with a  given asymptotic
level.       Sections~\ref{sec:software},       \ref{sec:experiments}      and
\ref{sec:transfer}   investigate   the   practical   implementation   of   the
collaborative TMLE and its performances in a simulation study.

\section{Collaborative TMLE  for continuous tuning when  inferring the average
  treatment effect: practical implementation}
\label{sec:software}

In  this  section,  we  describe  the  practical  implementation  of  the  two
instantiations of  the collaborative TMLE  algorithm presented and  studied in
Section~\ref{sec:ctmle_con}.     In    both    of   them,    the    collection
$\{\hat{\bG}_{h} : h  \in \xH\}$ is embodied in \verb|R|  by the \verb|glmnet|
algorithm~\citep{glmnet2010}.   The nature  of  algorithm  $\hat{\bQ}$ is  left
unspecified. As for $\bQ_{n}^{0}$, it is obtained once and for all by training
$\hat{\bQ}$ on $P_{n}$ at the  beginining of the procedure. More specifically,
we never evaluate $\hat{\bQ}(\bbP_{n})$ for $\bbP_{n} \neq P_{n}$.

\subsection{LASSO-C-TMLE}
\label{sec:lasso:ctmle}

We now  describe our LASSO-C-TMLE  algorithm.  Recall that  $\bbP_{n}$ denotes
the  empirical measure  of a  generic subset  of the  complete data  set.  The
following  algorithm   implements  the  theoretical  procedure   laid  out  in
Sections~\ref{subsec:continuum:CTMLEs} and \ref{subsec:select:colla}.
\begin{enumerate}
\item\label{enum:one}              Build               a              sequence
  $\{\bG_{n, h} \equiv \hat{\bG}_{h} (P_{n}):  h \in \xH_{100}\}$ by computing
  a discretized version of the path of the LASSO logistic regression of $A$ on
  $W$  with a  regularization parameter  $h$  ranging in  the set  $\xH_{100}$
  provided   by  \verb|cv.glmnet|   with  options   \verb|nlambda=100|  (hence
  $\text{card}(\xH_{100})     =    100$)     and    \verb|nfolds=10|.      Set
  $h_{\min}  \equiv   \min  \xH_{100}$  and   let  $h_{n,\CV}$  be   equal  to
  \verb|lambda.min|.
\item                    Build                   a                    sequence
  $\{\bbG_{n,  h}  \equiv  \hat{\bG}_{h}  (\bbP_{n}):  h  \in  \xH_{100}  \cap
  [h_{\min}, h_{n,\CV})\}$ by  computing a discretized version of  the path of
  the LASSO logistic regression of $A$  on $W$ with a regularization parameter
  $h$ ranging in $\xH_{100}  \cap [h_{\min}, h_{n,\CV})$ using \texttt{glmnet}
  with a  \texttt{lambda} set to  $\xH_{100} \cap [h_{\min},  h_{n,\CV})$ from
  step~\ref{enum:one}.
\item[] Set $k\equiv 1$ and $\bbH_{n,k-1} \equiv h_{n,\CV}$.
\item                \label{enum:three}               For                every
  $h \in \xH_{100} \cap [h_{\min}, \bbH_{n,k-1})$, determine $\bbbQ_{n,h}^{k}$
  by fluctuating  $\bbbQ_{n}^{k-1}$ based on $\bbG_{n,h}$  (and $\bbP_{n}$) as
  Section~\ref{subsec:continuum:TMLEs}.
\item\label{enum:four}    Identify     the    minimizer     $\bbH_{n,k}$    of
  $h      \mapsto       \bbP_{n}      L_{2}       (\bbbQ_{n,h}^{k})$      over
  $\xH_{100}    \cap    [h_{\min},    \bbH_{n,k-1})$,   define    and    store
  $\bbbQ_{n,h}^{*}        \equiv       \bbbQ_{n,h}^{k}$        for       every
  $h  \in  \xH_{100}  \cap  [\bbH_{n,k}, \bbH_{n,k-1})$,  and  finally  define
  $\bbbQ_n^{k} \equiv \bbbQ_{n,h_{n,k}}^{k}$.
\item\label{enum:five}   As    long   as   $\bbH_{n,k}   >    h_{\min}$,   set
  $k  \leftarrow k+1$  and repeat  steps~\ref{enum:three} and  \ref{enum:four}
  recursively.
\end{enumerate}
The  algorithm necessarily  converges in  a  finite number  of repetitions  of
step~\ref{enum:four}. Let $\bbK_{n}$  be the number of  repetitions. For every
$1           \leq           k          \leq           \bbK_{n}$,           set
$\bbQ_{n,\bbH_{n,k}}^{*}  \equiv  (\bbQ_{W,n},  \bbbQ_{n,\bbH_{n,k}}^{*})  \in
\xQ$ (its  first component is the  empirical law of $W$  under $\bbP_{n}$) and
let $\bbP_{n,\bbH_{n,k}}^{*} \in  \xM$ be any element of model  $\xM$ of which
the   $Q$-component   equals    $\bbQ_{n,\bbH_{n,k}}^{*}$.    The   collection
\eqref{eq:bbCTMLE}   of    $\bbK_{n}$   collaborative   TMLEs    and   mapping
$h  \mapsto  \bbH_{n}  (h)$  over  $\xH_{100}  [h_{\min},  h_{n,\CV})$  as  in
\eqref{eq:bbH} are thus now well defined.

Recall the  definition of  the cross-validation  scheme $B_{n}$  introduced in
Section~\ref{subsec:select:uncoop}. Set
\begin{equation*}
  \hbar(P_{n})  \equiv  \mathop{\arg\min}_{h  \in  \xH_{100}  \cap  [h_{\min},
    h_{n,\CV})}         E_{B_{n}}          \left[P_{n,B_{n}}^{1}         L_{2}
    (\left.\bbbQ_{n,\bbH_{n}(h)}\right|_{\bbP_{n} = P_{n,B_{n}}^{0}})\right]
\end{equation*}
and     run    once     steps~\ref{enum:one}    to     \ref{enum:five}    with
$\bbP_{n} \equiv P_{n}$, hence the collection 
\begin{equation*}
  \left\{(h_{n,k},       \bQ_{n,h_{n,k}}^{*},       Q_{n,h_{n,k}}^{*},
    P_{n,h_{n,k}}^{*},  \Psi(P_{n,h_{n,k}}^{*})) :  1 \leq  k \leq
    K_{n}\right\} 
\end{equation*}
of collaborative TMLEs. Finally, set
\begin{equation*}
  \kappa_{n} \equiv  1 \vee  \max \left\{1  \leq k \leq  K_{n} :  h_{n,k} \geq
    \hbar(P_{n})\right\}. 
\end{equation*}
The  collaborative  TMLE  that  we  select,  our  LASSO-C-TMLE  estimator,  is
$\Psi(P_{n,h_{n,\kappa_{n}}}^{*})$, as in \eqref{eq:CTMLE}.

\subsection{LASSO-PSEUDO-C-TMLE}
\label{sec:lasso:pseudo:ctmle}

The LASSO-C-TMLE procedure described in Section~\ref{sec:lasso:ctmle} is quite
demanding  computationally.   It  is  thus  tempting to  try  and  develop  an
alternative algorithm that would mimick LASSO-C-TMLE but be simpler.  

In  Section~\ref{subsec:select:colla},  we   emphasized  (see  comment  before
statement of theorem)  that one of the  keys of LASSO-C-TMLE is  to ensure the
existence of $h_{n} \in \xH$ and $\bQ_{n,h_{n}}^{*}$ such that
\begin{equation*}
  P_{n} \partial_{h_{n}} D^{*} (\bQ_{n,h_{n}}^{*}, \bG_{n,\Cdot}) = 0. 
\end{equation*}
If   we   knew   how   to    compute   the   derivative   $\bG_{n,h}'(W)$   of
$t  \mapsto \bG_{n,t}(W)$  at $t=h$,  then this  could be  easily achieved  by
enriching         the        fluctuation         of        the         initial
$\bQ_{n}^{0}   \equiv  \hat{\bQ}   (P_{n})$.    Specifically,   in  light   of
\eqref{eq:clever}  and \eqref{eq:fluct:ref},  given any  $h\in \xH$,  we would
define
\begin{equation}
  \label{eq:clever:two}
  \xC_{h}^{+}  (\bG_{n,\Cdot})   (A,W)  \equiv   \xC  (\bG_{n,h})   (A,W)  \left(1,
    \bG_{n,h}'(W)\right), 
\end{equation}
introduce     $\bQ_{n,h,\varepsilon^{+}}^{0}$     characterized    for     any
$\varepsilon^{+} \in \xR^{2}$ by
\begin{equation*}
  \logit    \left(\bQ_{n,h,\varepsilon^{+}}^{0}    (A,W)\right)   \equiv    \logit
  \left(\bQ_{n}^{0} (A,W)\right) + \xC_{h}^{+}(\bG_{n,\Cdot}) (A,W) \varepsilon^{+}
\end{equation*}
and      $P_{n,h,\varepsilon^{+}}^{0}      \in     \xM$      defined      like
$P_{n,h,\varepsilon}^{0}$ except that the conditional expectation of $Y$ given
$(A,W)$          under           $P_{n,h,\varepsilon^{+}}^{0}$          equals
$\bQ_{n,h,\varepsilon^{+}}^{0}$ (and  not $\bQ_{n,h,\varepsilon}^{0}$).  Then,
the optimal  fluctuation would be  indexed by  the minimizer of  the empirical
risk
\begin{equation*}
  \varepsilon_{n,h}^{+} \equiv  \mathop{\argmin}_{\varepsilon^{+} \in \xR^{2}}
  P_{n}   L_{2}   (\bQ_{n,h,\varepsilon^{+}}^{0}). 
\end{equation*}
It  would  result  in $\bQ_{n,h}^{*+}  \equiv  \bQ_{n,h,\varepsilon^{+}}^{0}$,
$P_{n,h}^{*+} \equiv P_{n,h,\varepsilon_{n,h}^{+}}^{0}$ and the TMLE
\begin{equation*}
  \psi_{n,h}^{*+}   \equiv    \Psi(P_{n,h}^{*+})   =   \frac{1}{n}
  \sum_{i=1}^{n}  \left(\bQ_{n,h}^{*+} (1,  W_{i}) -  \bQ_{n,h}^{*+}
    (0, W_{i}) \right)
\end{equation*}
where, by construction, we would have
\begin{equation*}
  P_{n}  D^{*}   (\bQ_{n,h}^{*+},  \bG_{n,h})   =  P_{n}   \partial_{h}  D^{*}
  (\bQ_{n,h}^{*+}, \bG_{n,h}) = 0. 
\end{equation*}

The  LASSO-PSEUDO-C-TMLE  algorithm that  we  now  describe adapts  the  above
procedure. It unfolds as follows.
\begin{enumerate}
\item                    Build                   a                    sequence
  $\{\bG_{n, h} \equiv \hat{\bG}_{h} (P_{n}):  h \in \xH_{100}\}$ by computing
  a discretized version of the path of the LASSO logistic regression of $A$ on
  $W$  with a  regularization parameter  $h$  ranging in  the set  $\xH_{100}$
  provided   by  \verb|cv.glmnet|   with   option  \verb|nlambda=100|   (hence
  $\text{card}(\xH_{100}) = 100$).  Let $h_{n}$ be equal to \verb|lambda.min|.
  Our estimator of $\bG_{0}$ is $\bG_{n,h_{n}}$.
\item                            Choose                            arbitrarily
  $h_{n}^{+} \in  \mathop{\arg\min} \{|h -  h_{n}| :  h \in \xH_{100},  h \neq
  h_{n}\}$ and, for every $1 \leq i \leq n$, define
  \begin{equation*}
    \bG_{n,h_{n}}^{\prime +} (W_{i}) \equiv \frac{\bG_{n,h_{n}^{+}} (W_{i}) -
      \bG_{n,h_{n}} (W_{i})}{h_{n}^{+} - h_{n}}, 
  \end{equation*}
  a     rudimentary    numerical     approximation    of     the    derivative
  $\bG_{n,h_{n}}' (W_{i})$ of $t \mapsto \bG_{n,t} (W_{i})$ at $t = h_{n}$.
\item  Determine  $\bQ_{n,h_{n}}^{*+}$  and  $P_{n,h_{n}}^{*+}$  as  described
  above,  with  $h=h_{n}$  and   $\bG_{n,h_{n}}^{\prime  +}$  substituted  for
  $\bG_{n,h_{n}}'$ in \eqref{eq:clever:two}.
\end{enumerate}
The LASSO-PSEUDO-C-TMLE estimator is $\psi_{n,h_{n}}^{*+}$.

\section{Main simulation study}
\label{sec:experiments}

In this section, we present the results of a multi-faceted simulation study of
the behaviors and performances of  the two instantiations of the collaborative
TMLE   described   in  Section~\ref{sec:software}.    Section~\ref{subsec:sim}
specifies  the synthetic  data-generating  distribution $P_{0}$  that we  use,
Section~\ref{subsec:compet}     introduces    the     competing    estimators,
Section~\ref{subsec:strategy} outlines the structure  of the simulation study,
and   Section~\ref{subsec:results}   gathers    its   results.    Written   in
\verb|R|~\citep{R},  our  code  makes   extensive  use  of  the  \verb|C-TMLE|
package~\citep{gruber2010ctmle}.

\subsection{Synthetic data-generating distribution}
\label{subsec:sim}

Our synthetic data-generating distribution  $P_{0} = \Pi_{0,p,\delta}$ depends
on two fine-tune  parameters: the dimension~$p$ of the  baseline covariate $W$
and a nonnegative constant $\delta \geq 0$.  Sampling $O \equiv (W,A,Y)$ under
$\Pi_{0,p,\delta}$ unfolds sequentially along the following steps.
\begin{enumerate}
\item  Sample  $\tilde{W}$  from  the  centered  Gaussian  law  on  $\xR^{M}$,
  $M  =  \ceil{p/10}$,  of  which   the  covariance  matrix  $\Sigma$  is  the
  block-diagonal matrix $(A_{kl})_{1 \leq k, l \leq M}$ where: $A_{11}$ is the
  $10\times  10$  identity  matrix;  each  $A_{kk}$ for  $1<k\leq  M$  is  the
  block-diagonal matrix $(B_{k,st})_{1 \leq s,t \leq 4}$ with
  \begin{equation*}
    B_{k,11} = \left(
      \begin{array}{ccc}
        1&0&.25\\
        0&1&.25\\
        .25&.25&1\\
      \end{array}
    \right), \quad
    B_{k,22} = B_{k,33} = \left(
      \begin{array}{cc}
        1&.5\\
        .5&1\\
      \end{array}
    \right), \quad
    B_{k,44} = \left(
      \begin{array}{ccc}
        1&.5&0\\
        .5&1&0\\
        0&0&1\\
      \end{array}
    \right)
  \end{equation*}
  and $B_{k,st}$ is a zero matrix for $1  \leq s \neq t \leq 4$; each $A_{kl}$
  for  $1  \leq  k\neq  l  \leq  M$   is  a  zero  matrix.   If  $p=10$,  then
  $\Sigma = A_{11}$  and we set $W  \equiv \tilde{W}$.  If $M >  10p$, then we
  set $W \equiv (\tilde{W}_{1}, \ldots, \tilde{W}_{p})^{\top}$.
\item Sample $A$ conditionally on $W$ from the Bernoulli law with paramater
  \begin{equation*}
    \bG_{0}  (W)  \equiv  \expit \left(\delta  +  \sum_{k=1}^{p}  \beta_{k}
      W_{k}\right),  
  \end{equation*}
  where $(\beta_1, \cdots, \beta_p) = (1, 1, 3/(p-2), \ldots, 3/(p-2))$.
\item Sample $\tilde{Y}$  conditionally on $(A,W)$ from the  Gaussian law with
  (conditional) variance 1/25 and expectation
  \begin{equation*}
    f_{0}(A,W) \equiv \frac{2}{5} (1 + W_{1} + W_{2} + W_{5} + W_{6} + W_{8} +
    A), 
  \end{equation*}
  then define $Y \equiv \expit(\tilde{Y})$.
\end{enumerate}

The  covariance matrix  $\Sigma$ induces  a loose  dependence structure.   The
components of  $\tilde{W}$ can  be gathered  in $1+4\times  (M-1)$ independent
groups, one group  consisting of $10+(M-1)$ independent  random variables, and
the other  groups consisting of  either two  or three mildly  dependent random
variables (with correlations equal to either 0.25 or 0.5).  Neither
\begin{equation*}
  \bQ_{0}     (A,W)     \equiv     \int_{[0,1]}     \frac{e^{-[\logit(u)     -
      f_{0}(A,W)]^{2}/50}}{10\sqrt{\pi} (1 - u)} du 
\end{equation*}
nor $\Psi(\Pi_{0,p,\delta})$  has a closed form  expression.  Independently of
$p$ and $\delta$, $\Psi(\Pi_{0,p,\delta}) \approx 0.0799$.

\subsection{Competing estimators}
\label{subsec:compet}

Let  $O_{1}, \ldots,  O_{n}$ be  independent draws  from $P_{0}$.   Recall the
characterization  of $\hat{\bG}_{h}$  ($h  \in \xH_{100}$)  and definition  of
$h_{n,\CV}$ given  in Section~\ref{sec:lasso:ctmle},  step~\ref{enum:one}. Let
the algorithm $\hat{\bQ}$  for the estimation of $\bQ_{0}$  consist in fitting
the                                working                               model
$\{\bQ_{\theta} :  \theta =  (\theta_{0}, \theta_{1}),  \theta_{0}, \theta_{1}
\in \xR^{8}\}$ where $\bQ_{\theta}$ is given by
\begin{equation*}
  \bQ_{\theta}  (A,W) \equiv  \Phi\left((A \theta_{1}^{\top}  + (1-A)\theta_{0}^{\top}))
    (W_{3}, \ldots, W_{10})^{\top}\right)
\end{equation*}
with $\Phi$ the  distribution function of the standard normal  law.  Note that
the working model is necessarily mis-specified, notably because of the absence
of $W_{1}$ and $W_{2}$ in the  above definition.  Recall that $\bQ_{n}^{0}$ is
obtained by training $\hat{\bQ}$  on the whole data set once  and for all.  To
emphasize,  $\hat{\bQ}$  is  never   re-trained  during  the  cross-validation
procedure.  This is consistent  with implementation the original instantiation
of the C-TMLE algorithm and of its scalable instantiations.

We   compare   the   LASSO-C-TMLE  and   LASSO-PSEUDO-C-TMLE   estimators   of
$\Psi(\Pi_{0,p,\delta})$  from Section~\ref{sec:software}  with the  following
commonly used competitors:
\begin{itemize}
\item the unadjusted estimator:
  \begin{equation*}
    \psi_n^{\text{unadj}}\equiv   \frac{\sum_{i=1}^n    A_iY_i}{\sum_{i=1}^n   A_i}   -
    \frac{\sum_{i=1}^n (1-A_i) Y_i}{\sum_{i=1}^n (1-A_i)};
  \end{equation*}
\item the so called G-comp estimator \citep{robins1986new}:
  \begin{equation*}
    \psi_n^{\text{G-comp}} \equiv \frac{1}{n} \sum_{i=1}^n \left(\bQ_n^0(1,W_i) - 
      \bQ_n^0(0,W_i)\right);
  \end{equation*}
\item          the          so         called          IPTW          estimator
  \citep{horvitz1952generalization,robins2000marginal}:
  \begin{equation*}
    \psi_n^{\text{IPTW}}  \equiv  \frac{1}{n}   \sum_{i=1}^n  \frac{  (2A_i  -
      1)Y_{i}}{\bG_{n,h_{\CV}}(A_i,W_i)};
  \end{equation*}
\item the so-called A-IPTW estimator \citep{robins1995semiparametric}:
  \begin{equation*}
    \psi_n^{\text{A-IPTW}} \equiv \frac{1}{n} \sum_{i=1}^n
    \frac{(2A_i-1)}{\bG_{n,h_{\CV}}(A_i,W_i)}
    \left(Y_i-\bQ^0_n(W_i,A_i)\right)      +     \frac{1}{n}      \sum_{i=1}^n
    \left(Q^0_n(1,W_i)-Q^0_n(0,W_i)\right); 
  \end{equation*}
\item   and   the   plain   TMLE   estimator   $\psi_{n,h_{n,\CV}}^{*}$,   see
  \eqref{eq:TMLE}.
\end{itemize}

\subsection{Outline of the structure of the simulation study}
\label{subsec:strategy}

We consider six different scenarios. In  each of them, we repeat independently
$B=200$ times the  following steps: for each $(n, p,  \delta)$ in a collection
of scenario-specific triplets,
\begin{enumerate}
\item  simulate  a  data  set  of  $n$  independent  observations  drawn  from
  $\Pi_{0,p,\delta}$;
\item   derive  the   LASSO-C-TMLE  and   LASSO-PSEUDO-C-TMLE  estimators   of
  Sections~\ref{sec:lasso:ctmle} and  \ref{sec:lasso:pseudo:ctmle} as  well as
  the competing estimators presented in Section~\ref{subsec:compet};
\item    for    the     double-robust    estimators    only,    \textit{i.e.},
  $\psi_n^{\text{A-IPTW}}$, $\psi_{n,h_{n,\CV}}^{*}$ and our two collaborative
  TMLEs,  construct  95\%   CIs  and  check  whether  or  not   each  of  them
  contains~$\Psi(\Pi_{0,p,\delta})$.
\end{enumerate}

\subsubsection*{Building  confidence  intervals  based  on  the  collaborative
  TMLEs.}

By   Corollary~\ref{theo:specific},   the    asymptotic   variances   of   our
collaborative TMLEs both write as
\begin{equation}
  \label{eq:asymp:var}
  \Var_{P_{0}}          \left[\left(D^{*}(Q_{1},           \bG_{0})          +
      \Delta(P_{1})\right)(O)\right]. 
\end{equation}
Because   $\Delta(P_{1})$    is   difficult    to   estimate,    we   estimate
\eqref{eq:asymp:var} with  the empirical variance  of $D^{*}(P_{n,\hnk}^{*})$,
\textit{i.e.}, with
\begin{equation*}
  P_{n}     D^{*}(P_{n,\hnk}^{*})^{2}     =     \frac{1}{n}     \sum_{i=1}^{n}
  D^{*}(P_{n,\hnk}^{*}) (O_{i})^{2} 
\end{equation*}
(recall that  $P_{n} D^{*}(P_{n,\hnk}^{*}) = 0$  by construction).  Therefore,
the 95\% CIs based on our collaborative TMLEs take the form
\begin{equation*}
  \psi_{n,\hnk}^{*} \pm 1.96 \sqrt{P_{n}     D^{*}(P_{n,\hnk}^{*})^{2}/n}.
\end{equation*}
We anticipate that  the asymptotic variances are  over-estimated, resulting in
CIs  that  are  too  wide.   However, we  also  anticipate  that  the  omitted
correction term  is of second  order relative to main  term, or, put  in other
words,    that    the     difference    between    \eqref{eq:asymp:var}    and
$\Var_{P_{0}} (D^{*}(Q_{1}, \bG_{0})(O))$ is small.

\subsubsection*{Six scenarios.}

The three first  scenarios investigate what happens when $\delta  = 0$ and the
number  of covariates  $p$ increases  as a  function of  sample size  $n$.  In
scenario~1,  $p  =  0.2  \times  n$  and  we  increase  $n$.   In  scenario~2,
$p  = \floor{2.83  \times  \sqrt{n}}$  and we  increase  $n$.  In  scenario~3,
$p = \floor{7.6 \times \log n}$ and  we increase~$n$.  The values of the pairs
$(p,n)$ used  in these scenarios  are presented in  Table~\ref{table:np}.  The
constants  0.2, 2.83  and 7.6  are chosen  so  that $p  = 40$  at sample  size
$n = 200$ in the three scenarios.

\setcounter{table}{-1}
\begin{table}[H]
  \centering
  \caption{Values of $p$ and $n$ in scenarios 1, 2 and 3.}
  \label{table:np}
  \begin{tabular}{l|rrrrrrrrrr}
    \hline
    $n$ & 200 & 400 & 600 & 800 & 1000 & 1200 & 1400 & 1600 & 1800 & 2000 \\
    \hline
    $p$ in scenario~1 & 40 & 80 & 120 & 160 & 200 & 240 & 280 & 320 & 360 & 400 \\ 
    $p$ in scenario~2 & 40 & 56 & 69 & 80 & 89 & 98 & 105 & 113 & 120 & 126 \\ 
    $p$ in scenario~3 & 40 & 45 & 48 & 50 & 52 & 53 & 55 & 56 & 56 & 57 \\ 
    \hline
  \end{tabular}
\end{table}

In scenarios~4 and 5, we still set $\delta  = 0$ and either keep $p$ fixed and
increase $n$  (scenario~4) or  keep $n$ fixed  and increase  $p$ (scenario~5).
Finally, in scenario~6, we keep $n$ and $p$ fixed and challenge the positivity
assumptions  that $\bG_{0}$  is bounded  away from  0 and  1 by  progressively
increasing $\delta$.

In each scenario and for all estimators, we report in a table the average bias
(bias,  multiplied by  10), standard  error (SE,  multiplied by  10) and  mean
squared    error   (MSE,    multiplied    by   100)    across   the    $B=200$
repetitions. Specifically, if  $\{\phi_{n}^{(b)} : 1 \leq b \leq  B\}$ are the
$B$ realizations of an estimator  of $\psi_{0} = \Psi(\Pi_{0,p,\delta})$ based
on   $n$   independent   draws   from   $\Pi_{0,p,\delta}$,   then   we   call
$\overline{\phi}_{n}^{1:B}  \equiv  B^{-1}  \sum_{b=1}^{B}  (\phi_{n}^{(b)}  -
\psi_{0})$                  the                 average                  bias,
$(B^{-1}             \sum_{b=1}^{B}              (\phi_{n}^{(b)}             -
\overline{\phi}_{n}^{1:B})^{2})^{1/2}$     the     standard     error,     and
$(\overline{\phi}_{n}^{1:B})^{2}  +  B^{-1} \sum_{b=1}^{B}  (\phi_{n}^{(b)}  -
\overline{\phi}_{n}^{1:B})^{2}$ the mean squared error.

We also represent  in a series of  figures how MSE, the  empirical coverage of
the 95\% CIs and their widths evolve  as the sample size (scenarios~1 to 4) or
number  of   covariates  (scenario~5)   or  parameter   $\delta$  (scenario~6)
increase. To  ease comparisons, all  similar figures  share the same  $x$- and
$y$-axes.

\subsection{Results}
\label{subsec:results}

\subsubsection*{Scenarios~1, 2, and 3: increasing $\boldsymbol{n}$ and setting
  $\boldsymbol{p = 0.2 n, \floor{2.83 \sqrt{n}}, \floor{7.6 \log n}}$.}
\label{sec:sim:scenario:one:two:three}

The results  of the three  simulation studies under  scenarios~1, 2 and  3 are
best  presented   and  commented  upon  altogether.    Figure~\ref{fig:N}  and
Table~\ref{table:N}  summarize   the  numerical  findings   under  scenario~1;
Figure~\ref{fig:Nsqrt}  and  Table~\ref{table:Nsqrt} summarize  the  numerical
findings  under scenario~2;  Figure~\ref{fig:Nlog} and  Table~\ref{table:Nlog}
summarize the numerical findings under scenario~3.

\begin{figure}[p]
  \centering
  \begin{subfigure}[t]{0.45\textwidth}
    \includegraphics[width=\textwidth]{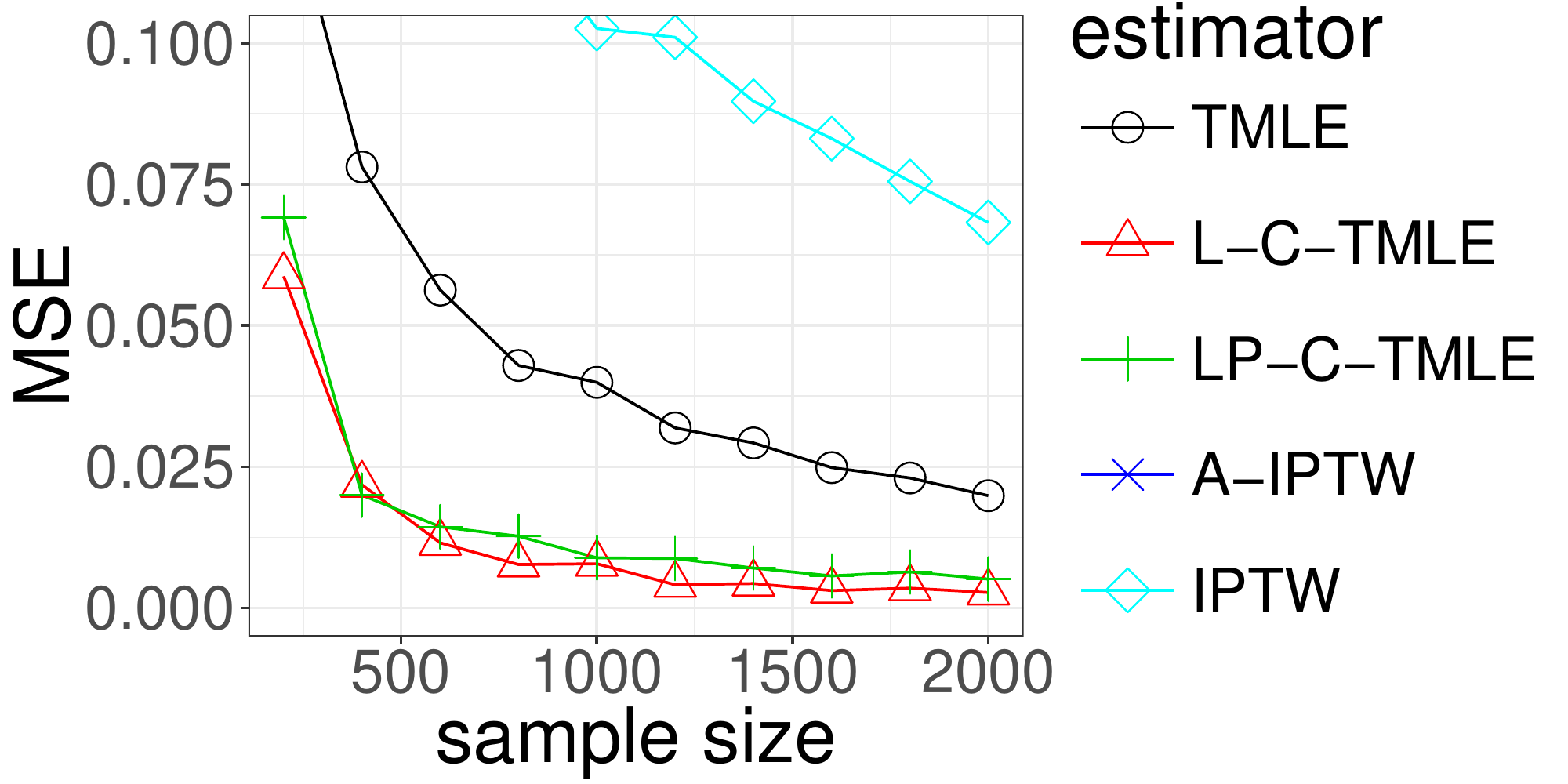}
    \caption{MSE for  five of  the seven  estimators.  The  MSE of  the A-IPTW
      estimator is so  large that it does not fit  in the picture. \textit{MSE
        is multiplied by 100.} }
    \label{fig:plot_N}
  \end{subfigure}\\
  \begin{subfigure}[t]{0.45\textwidth}
    \includegraphics[width=\textwidth]{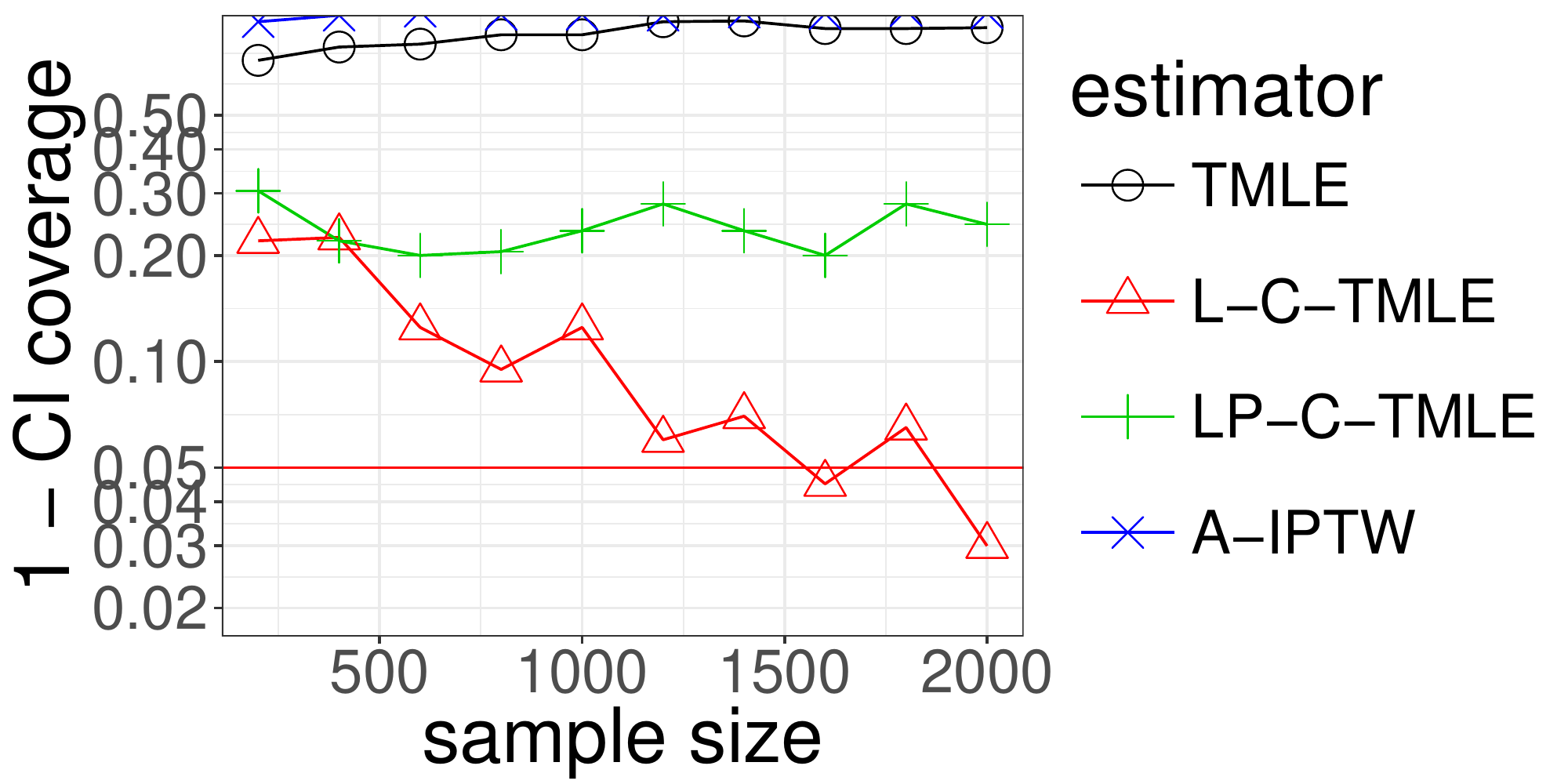}
    \caption{Coverage of 95\% CIs based on the double-robust estimators.}
    \label{fig:plot_N_ci1}
  \end{subfigure}
  \hspace{10mm}
  \begin{subfigure}[t]{0.45\textwidth}
    \includegraphics[width=\textwidth]{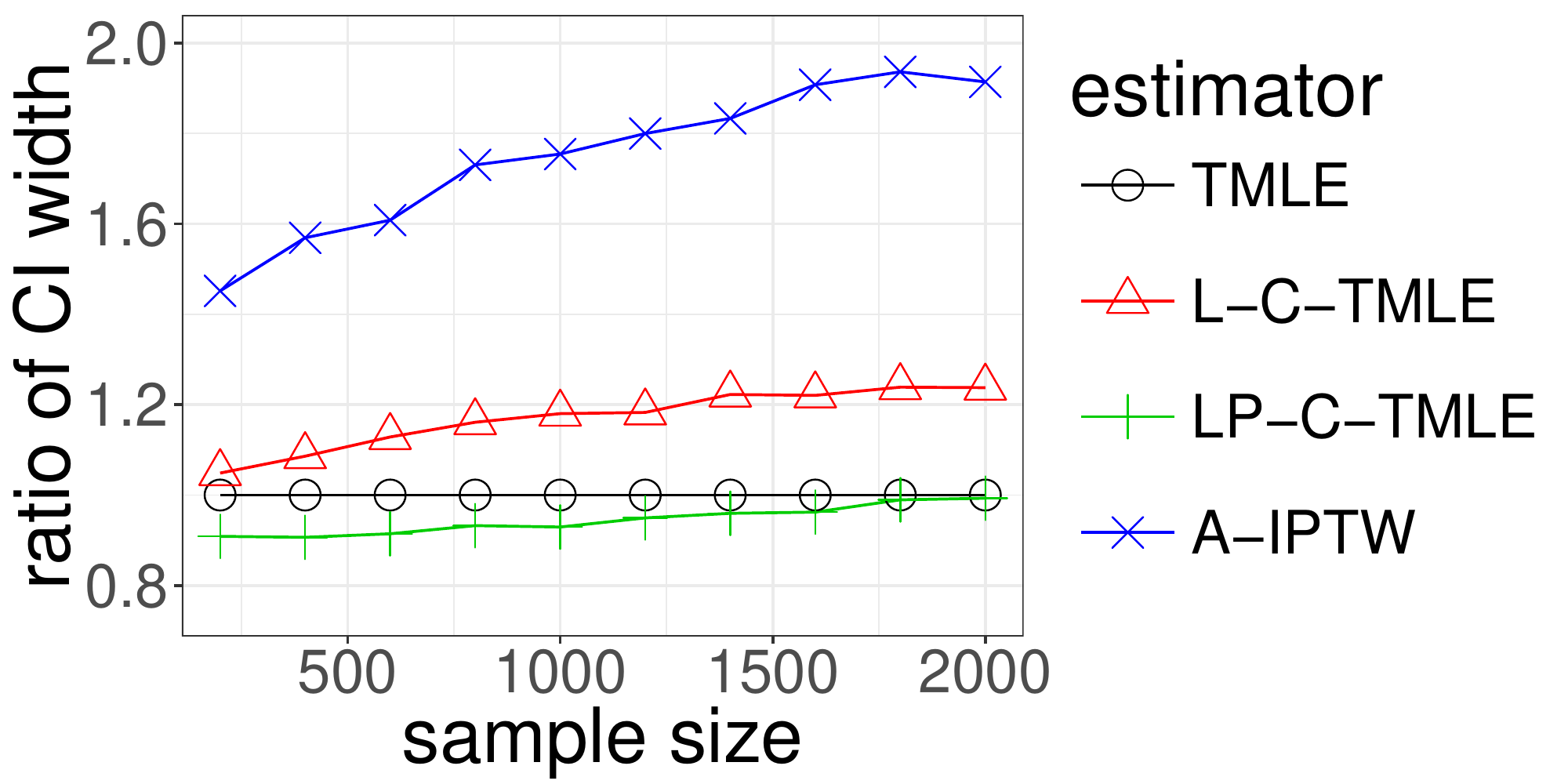}
    \caption{Relative width of 95\% CIs  based on the double-robust estimators
      w.r.t.  that of the plain TMLE, $\psi_{n,h_{n,\CV}}^{*}$.}
    \label{fig:plot_N_ci2}
  \end{subfigure}
  \caption{\textbf{Scenario~1.}  We  fix the ratio  $p/n = 0.2$,  and increase
    the sample size $n$ from 200 to 2000.}
  \label{fig:N}
\end{figure}

\begin{table}[p]
  \centering
   \begin{tabular}{l|l|rrrrrrr}
    \hline $n$  & &  $\psi_{n}^{\text{unadj}}$ &  $\psi_{n}^{\text{G-comp}}$ &
                                                                            $\psi_{n}^{\text{IPTW}}$
     &  $\psi_{n}^{\text{A-IPTW}}$ &  $\psi_{n,h_{n,\CV}}^{*}$  &  \scriptsize{L-C-TMLE} &  \scriptsize{LP-C-TMLE}
     \\\hline 
     200 & bias & 1.259 & 1.212 & 0.435 & 0.632 & 0.327 & -0.007 & 0.039 \\ 
                & SE & 0.236 & 0.152 & 0.320 & 0.137 & 0.151 & 0.242 & 0.260 \\ 
                & MSE & 1.641 & 1.491 & 0.291 & 0.418 & 0.130 & 0.059 & 0.069 \\
                & ratio &&& & 1.414 & 0.882 & 0.577 & 0.466 \\ \hline
     1000   & bias & 1.171 & 1.206 & 0.279 & 0.407 & 0.189 & 0.032 & 0.026 \\ 
                & SE & 0.110 & 0.066 & 0.157 & 0.061 & 0.064 & 0.083 & 0.091 \\ 
                & MSE & 1.382 & 1.459 & 0.103 & 0.169 & 0.040 & 0.008 & 0.009 \\
                & ratio &&& & 1.765 & 0.969 & 0.882 & 0.632 \\ \hline
     2000   & bias & 1.175 & 1.217 & 0.232 & 0.339 & 0.134 & 0.014 & 0.020 \\ 
                & SE & 0.076 & 0.047 & 0.120 & 0.050 & 0.045 & 0.050 & 0.069 \\ 
                & MSE & 1.386 & 1.483 & 0.068 & 0.118 & 0.020 & 0.003 & 0.005 \\
                & ratio &&& & 1.666 & 0.959 & 1.062 & 0.626 \\ \hline
  \end{tabular}
  \caption{\textbf{Scenario~1.} The  performance of  each estimator  at sample
    size $n  \in \{200, 1000,  2000\}$, with ratio  $p/n = 0.2$.   The columns
    named  L-C-TMLE   and  LP-C-TMLE   correspond  to  the   LASSO-C-TMLE  and
    LASSO-PSEUDO-C-TMLE estimators, respectively.   Rows \textit{ratio} report
    the ratios of  the average of the SE estimates  across the $B$ repetitions
    to the empirical SE. \textit{Bias and SE  are multiplied by 10, and MSE is
      multiplied by 100.}}
  \label{table:N}
\end{table}

\begin{figure}[p]
  \centering
  \begin{subfigure}[t]{0.45\textwidth}
    \includegraphics[width=\textwidth]{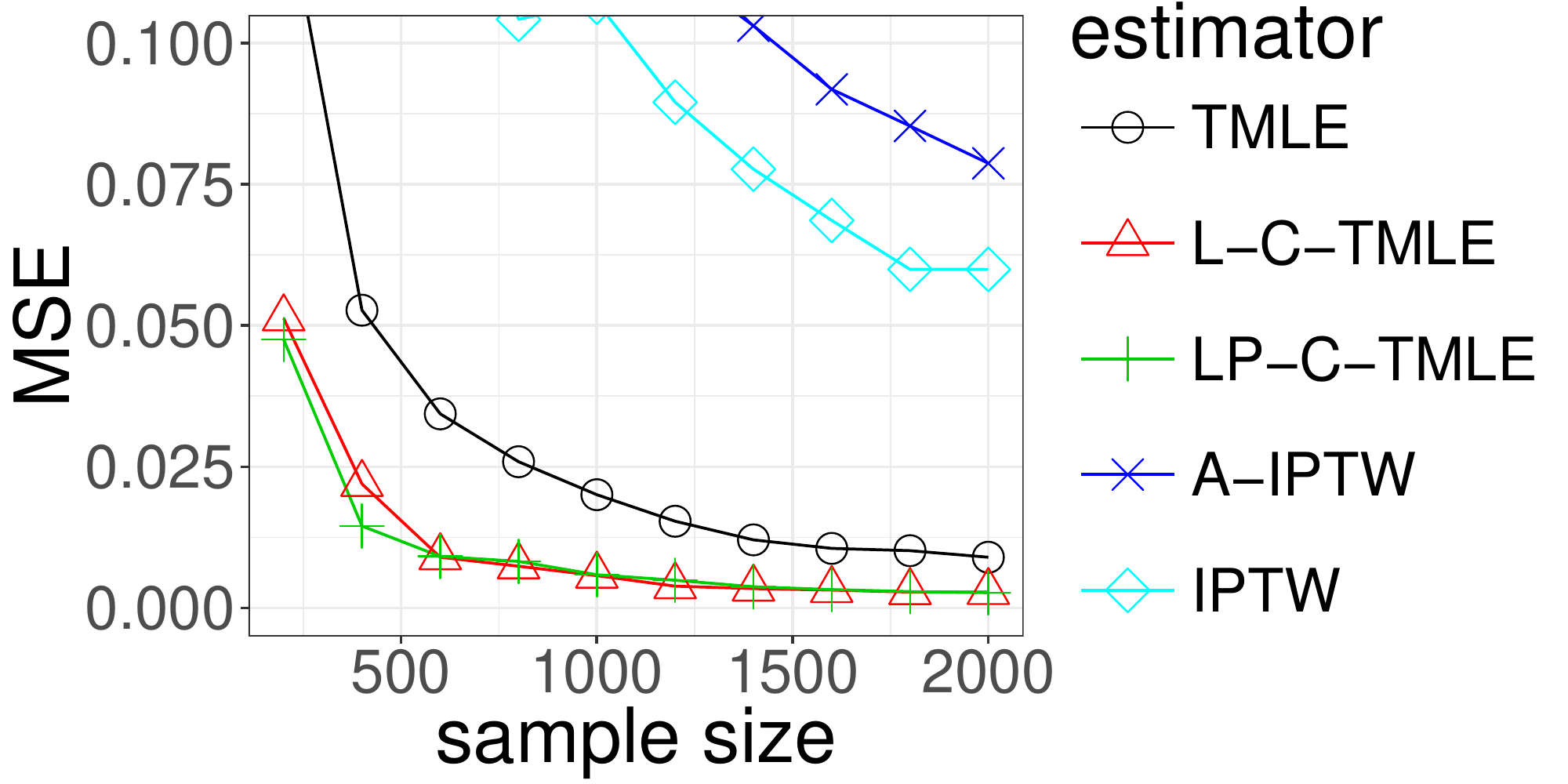}
    \caption{MSE for five of the seven estimators.}
    \label{fig:plot_Nsqrt}
  \end{subfigure}\\
  \begin{subfigure}[t]{0.45\textwidth}
    \includegraphics[width=\textwidth]{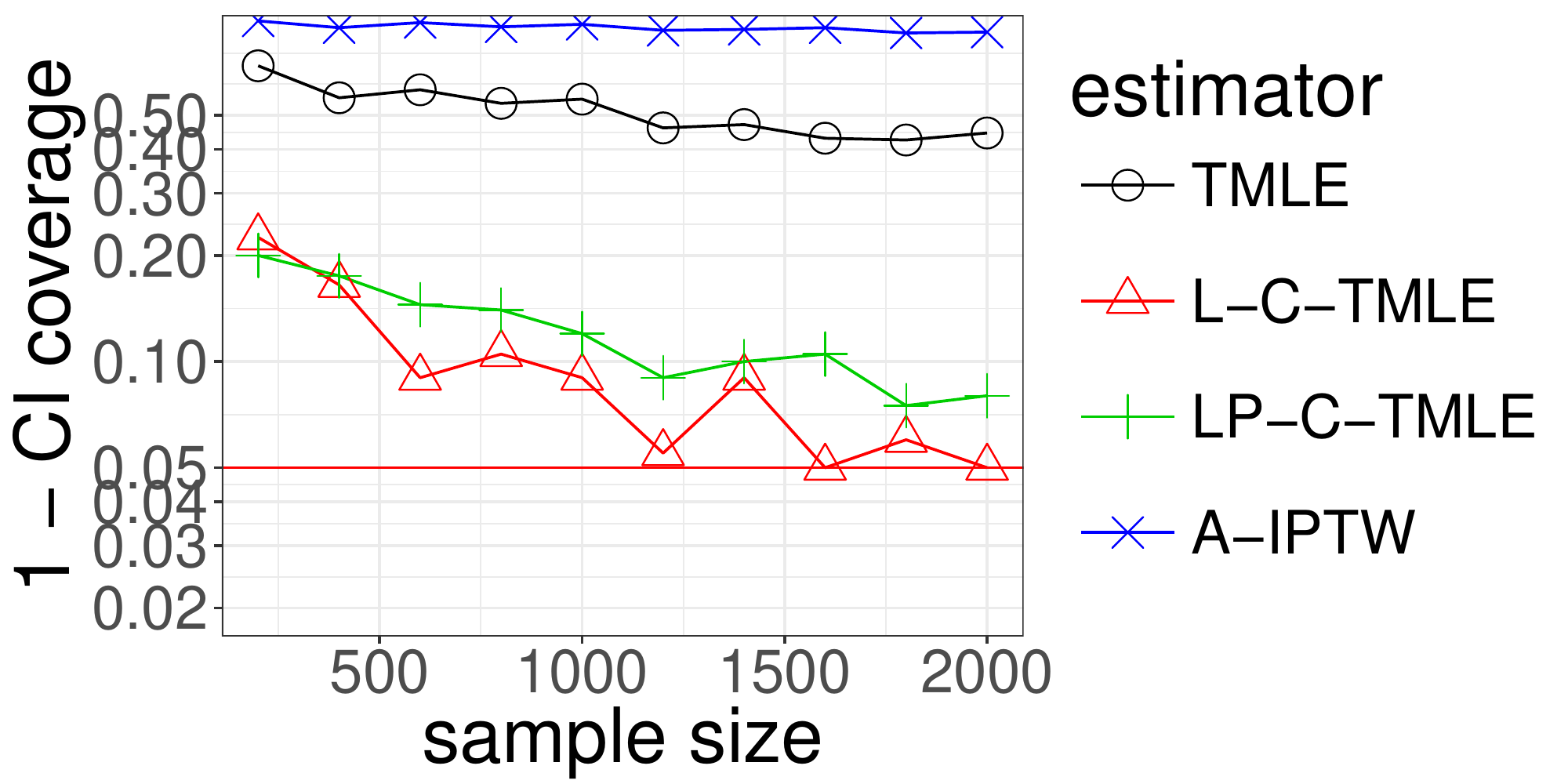}
    \caption{Coverage    of   95\%    CIs   based    on   the    double-robust
      estimators. \textit{MSE is multiplied by 100.} }
    \label{fig:plot_Nsqrt_ci1}
  \end{subfigure}
  \hspace{10mm}
  \begin{subfigure}[t]{0.45\textwidth}
    \includegraphics[width=\textwidth]{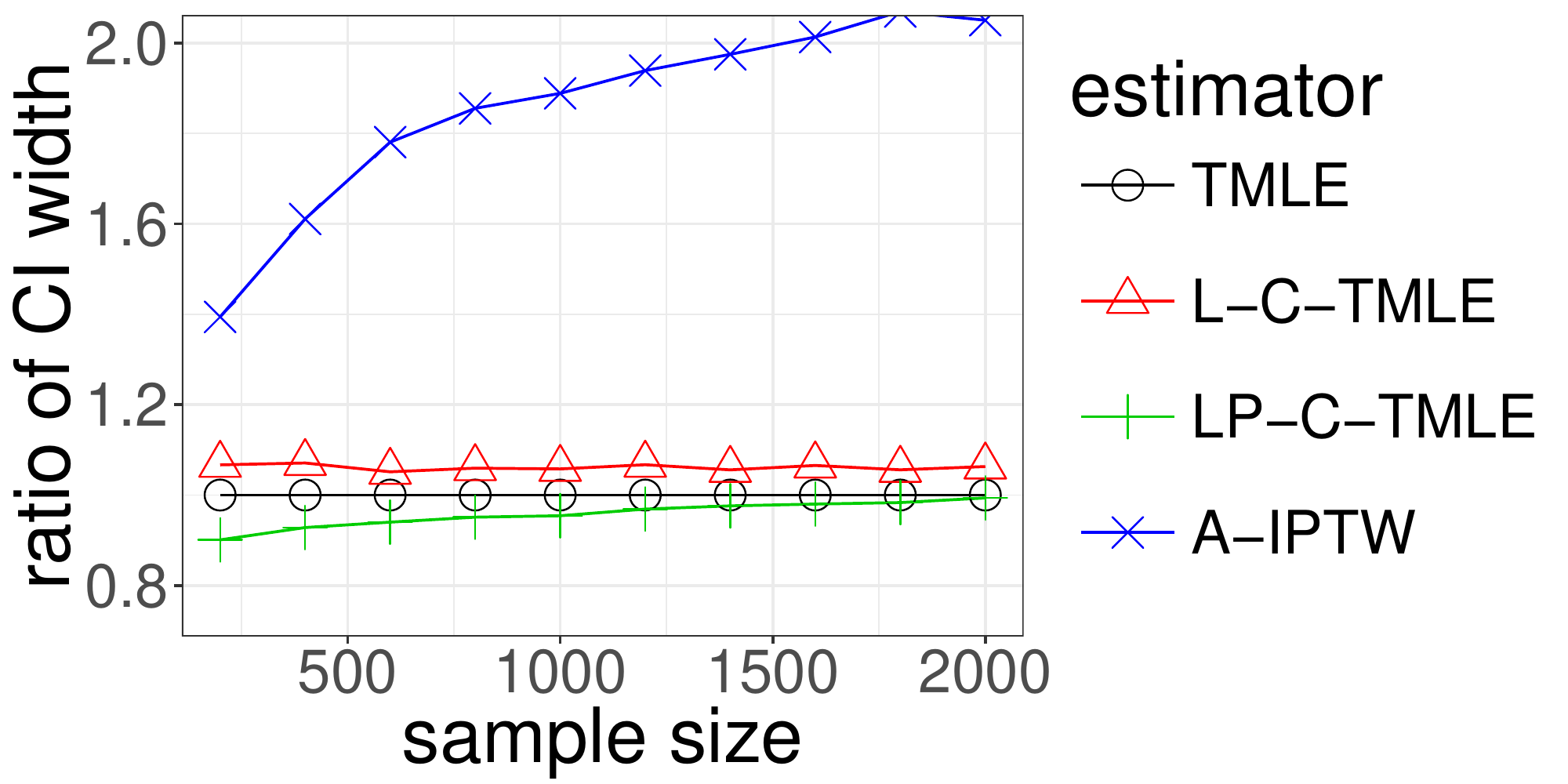}
    \caption{Relative width of 95\% CIs  based on the double-robust estimators
      w.r.t.  that of the plain TMLE, $\psi_{n,h_{n,\CV}}^{*}$.}
    \label{fig:plot_Nsqrt_ci2}
  \end{subfigure}
  \caption{\textbf{Scenario~2.} We  increase the sample  size $n$ from  200 to
    2000 and set $p = \floor{2.83\sqrt{n}}$.}
  \label{fig:Nsqrt}
\end{figure}

\begin{table}[p]
  \centering
   \begin{tabular}{l|l|rrrrrrr}
     \hline $n$  & & $\psi_{n}^{\text{unadj}}$ &  $\psi_{n}^{\text{G-comp}}$ &
                                                                               $\psi_{n}^{\text{IPTW}}$ & $\psi_{n}^{\text{A-IPTW}}$ & $\psi_{n,h_{n,\CV}}^{*}$ & \scriptsize{L-C-TMLE} &  \scriptsize{LP-C-TMLE} \\\hline
  200  & bias & 1.242 & 1.173 & 0.469 & 0.614 & 0.322 & 0.014 & 0.018 \\ 
   & SE & 0.221 & 0.157 & 0.278 & 0.135 & 0.156 & 0.226 & 0.217 \\ 
 & MSE & 1.592 & 1.401 & 0.297 & 0.395 & 0.128 & 0.051 & 0.048 \\
  & ratio &&&  & 1.377 & 0.857 & 0.630 & 0.553 \\ \hline
 1000 & bias & 1.216 & 1.214 & 0.271 & 0.361 & 0.126 & 0.003 & 0.019 \\
    & SE & 0.104 & 0.068 & 0.184 & 0.077 & 0.064 & 0.076 & 0.074 \\ 
 & MSE & 1.489 & 1.479 & 0.107 & 0.136 & 0.020 & 0.006 & 0.006 \\
  & ratio &&&  & 1.505 & 0.965 & 0.862 & 0.790 \\ \hline
  2000 & bias & 1.192 & 1.214 & 0.201 & 0.274 & 0.080 & 0.007 & 0.018 \\ 
   & SE & 0.075 & 0.051 & 0.140 & 0.061 & 0.051 & 0.053 & 0.049 \\ 
  & MSE & 1.426 & 1.477 & 0.060 & 0.079 & 0.009 & 0.003 & 0.003 \\
   & ratio &&& &  1.488 & 0.870 & 0.898 & 0.904 \\ \hline
  \end{tabular}
  \caption{\textbf{Scenario~2.} The  performance of  each estimator  at sample
    size    $n    \in   \{200,    1000,    2000\}$,    with   ratio    $p    =
    \floor{2.83\sqrt{n}}$. The columns named L-C-TMLE and LP-C-TMLE correspond
    to  the  LASSO-C-TMLE  and LASSO-PSEUDO-C-TMLE  estimators,  respectively.
    Rows \textit{ratio} report  the ratios of the average of  the SE estimates
    across the $B$  repetitions to the empirical SE.  \textit{Bias  and SE are
      multiplied by 10, and MSE is multiplied by 100.}}
  \label{table:Nsqrt}
\end{table}

\begin{figure}[p]
  \centering
  \begin{subfigure}[t]{0.45\textwidth}
    \includegraphics[width=\textwidth]{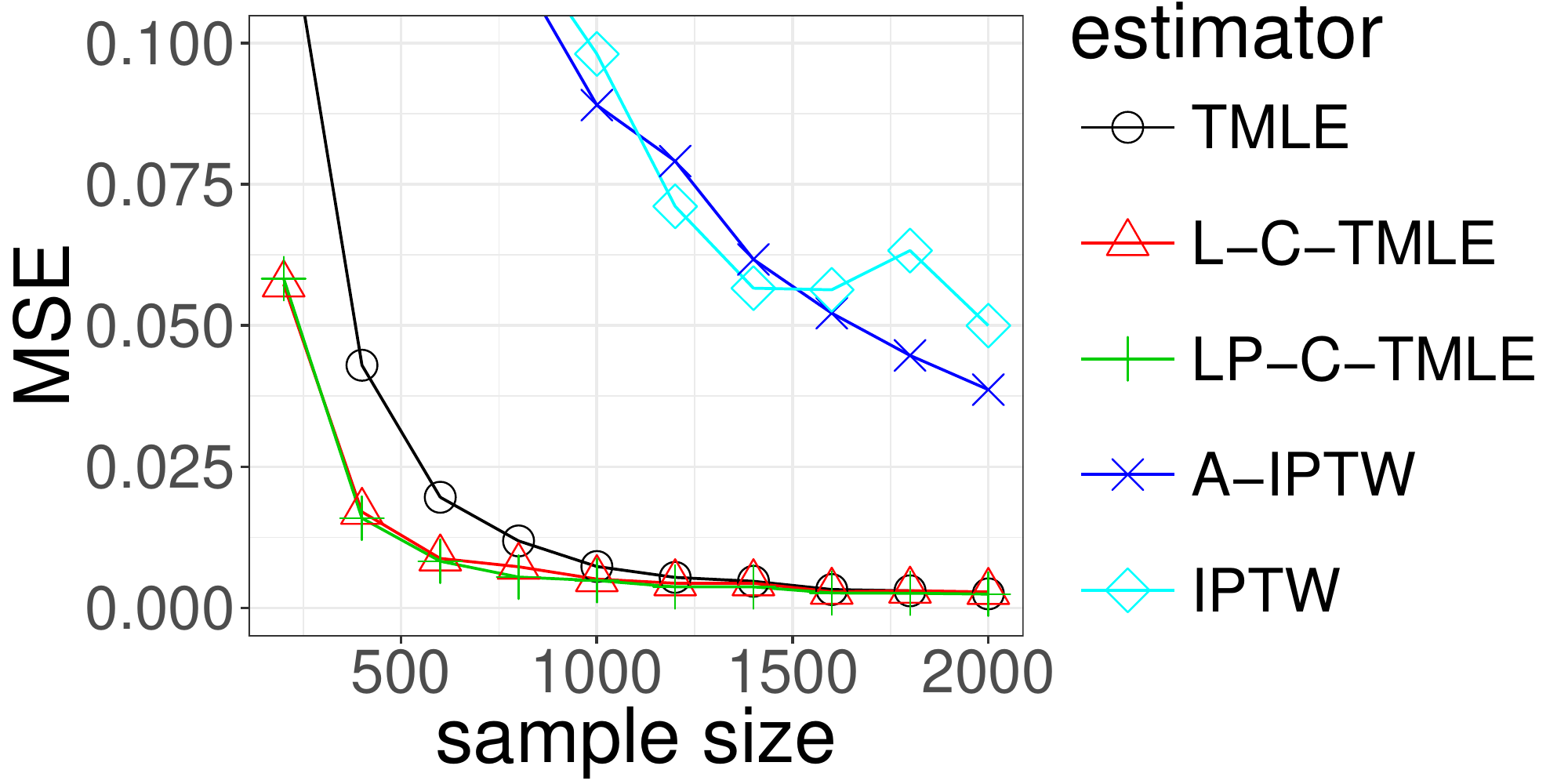}
    \caption{MSE for five  of the seven estimators.  \textit{MSE is multiplied
        by 100.} }
    \label{fig:plot_Nlog}
  \end{subfigure}\\
  \begin{subfigure}[t]{0.45\textwidth}
    \includegraphics[width=\textwidth]{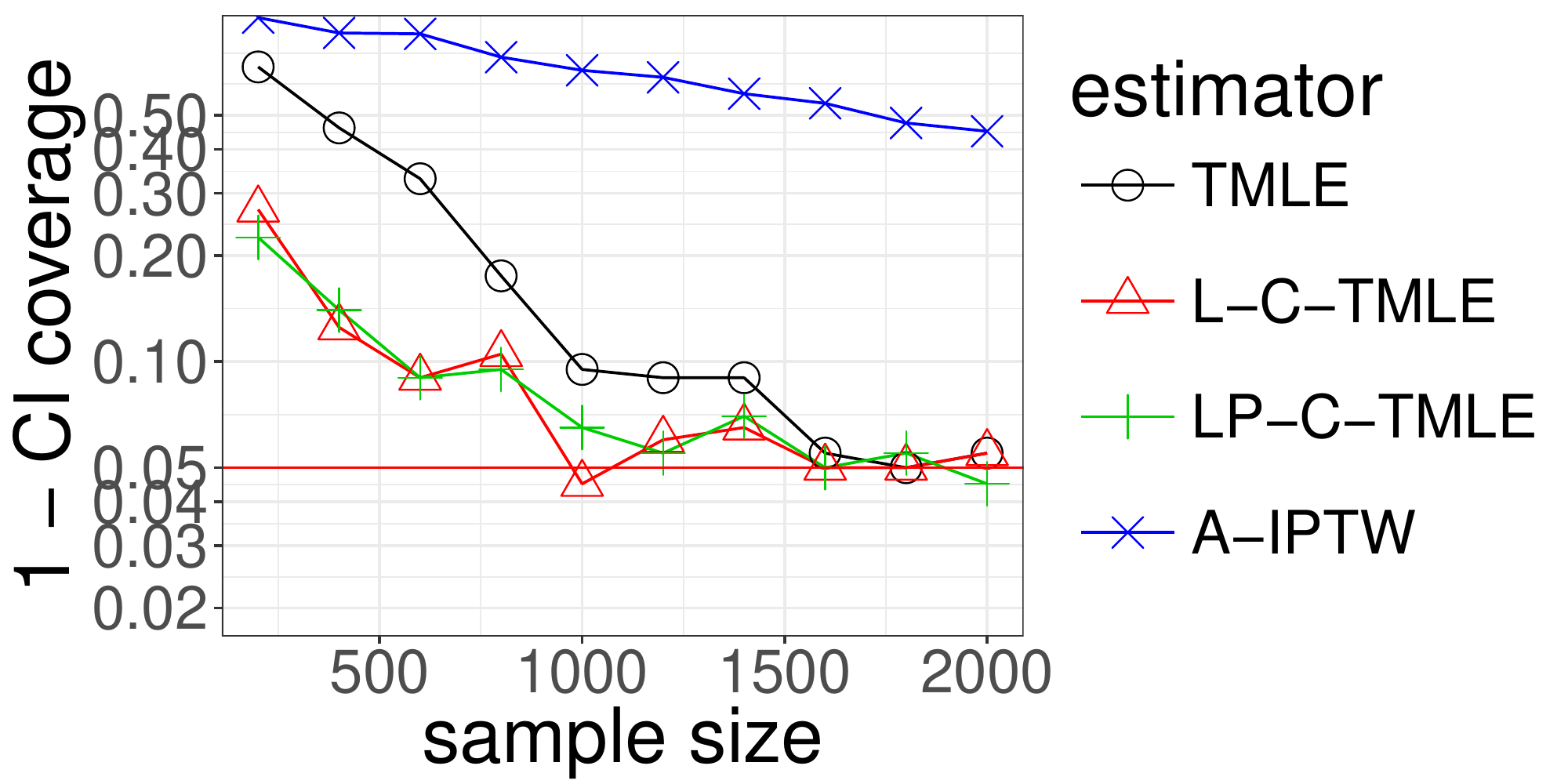}
    \caption{Coverage of 95\% CIs based on the double-robust estimators.}
    \label{fig:plot_Nlog_ci1}
  \end{subfigure}
    \hspace{10mm}
      \begin{subfigure}[t]{0.45\textwidth}
    \includegraphics[width=\textwidth]{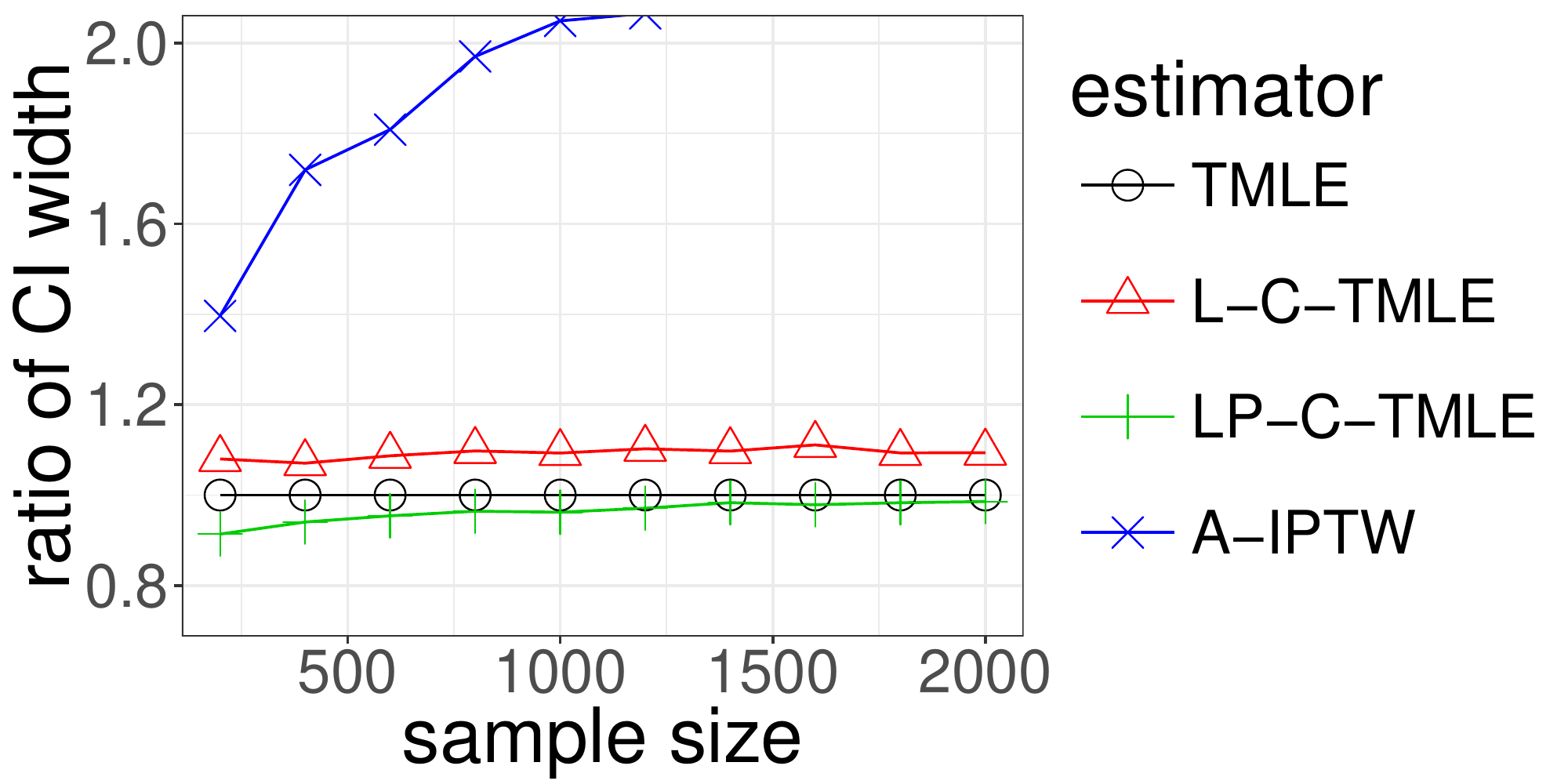}
    \caption{Relative width of 95\% CIs  based on the double-robust estimators
      w.r.t.  that of the plain TMLE, $\psi_{n,h_{n,\CV}}^{*}$.}
    \label{fig:plot_Nlog_ci2}
  \end{subfigure}
  \caption{\textbf{Scenario~3.}  We increase  the sample size $n$  from 200 to
    2000 and keep $p= \floor{7.6\log(n)}$.}
  \label{fig:Nlog}
\end{figure}

\begin{table}[p]
  \centering
   \begin{tabular}{l|l|rrrrrrr}
     \hline $n$  & & $\psi_{n}^{\text{unadj}}$ &  $\psi_{n}^{\text{G-comp}}$ &
                                                                             $\psi_{n}^{\text{IPTW}}$
     &  $\psi_{n}^{\text{A-IPTW}}$ &  $\psi_{n,h_{n,\CV}}^{*}$  &  \scriptsize{L-C-TMLE} &  \scriptsize{LP-C-TMLE}
     \\\hline
 200& bias & 1.244 & 1.190 & 0.427 & 0.634 & 0.314 & -0.009 & 0.014 \\ 
   & SE & 0.228 & 0.148 & 0.316 & 0.136 & 0.170 & 0.239 & 0.241 \\ 
 & MSE & 1.600 & 1.439 & 0.282 & 0.420 & 0.128 & 0.057 & 0.058 \\
 & ratio &&&  & 1.363 & 0.777 & 0.598 & 0.502 \\ \hline
 1000  & bias & 1.242 & 1.208 & 0.224 & 0.286 & 0.056 & -0.008 & -0.006 \\ 
   & SE & 0.114 & 0.077 & 0.219 & 0.086 & 0.065 & 0.071 & 0.069 \\ 
 & MSE & 1.555 & 1.465 & 0.098 & 0.089 & 0.007 & 0.005 & 0.005 \\
  & ratio &&& & 1.544 & 0.995 & 0.992 & 0.894 \\ \hline
 2000  & bias & 1.227 & 1.205 & 0.159 & 0.185 & 0.019 & -0.006 & -0.003 \\ 
   & SE & 0.074 & 0.050 & 0.157 & 0.066 & 0.046 & 0.053 & 0.050 \\ 
 & MSE & 1.511 & 1.453 & 0.050 & 0.039 & 0.003 & 0.003 & 0.002 \\
  & ratio &&& & 1.552 & 1.023 & 0.978 & 0.942 \\  \hline
  \end{tabular}
  \caption{\textbf{Scenario~3.} The  performance of  each estimator  at sample
    size  $n \in  \{200, 1000,  2000\}$, with  $p =  \floor{7.6\log(n)}$.  The
    columns named  L-C-TMLE and LP-C-TMLE  correspond to the  LASSO-C-TMLE and
    LASSO-PSEUDO-C-TMLE estimators, respectively.   Rows \textit{ratio} report
    the ratios of  the average of the SE estimates  across the $B$ repetitions
    to the empirical SE.  \textit{Bias and SE are multiplied by 10, and MSE is
      multiplied by 100.}}
  \label{table:Nlog}
\end{table}

Figures~\ref{fig:plot_N}, \ref{fig:plot_Nsqrt}  and \ref{fig:plot_Nlog} reveal
a general trend: MSE decreases as  sample size $n$ increases, despite the fact
that the  number of covariates $p$  also increases (at different  $n$-rates in
each   scenario).  Overall,   LASSO-C-TMLE  and   LASSO-PSEUDO-C-TMLE  perform
similarly and  better than TMLE;  TMLE outperforms IPTW, and  IPTW outperforms
A-IPTW.  Moreover,  the gap  between LASSO-C-TMLE, LASSO-PSEUDO-C-TMLE  on the
one hand  and TMLE on the  other hand \textit{(i)}~reduces as  sample size $n$
increases (in each scenario), and  \textit{(ii)}~reduces as $p$ decreases (for
each sample size $n$, across scenarios).

Judging by  Tables~\ref{table:N}, \ref{table:Nsqrt} and  \ref{table:Nlog}, the
unadjusted, G-comp, IPTW  and A-IPTW estimators are strongly  biased. The TMLE
estimator is  strongly biased too,  even for large  sample size $n$,  when the
number of covariates $p$ is not sufficiently small. Note however that the bias
of TMLE vanishes at sample size $n = 2000$ in scenario~2 (then, $p = 126$) and
at  sample  size   $n  \in  \{1000,  2000\}$  (then,  $p   \in  \{52,  57\}$).
Double-robustness   is  in   action.   In   contrast,  the   LASSO-C-TMLE  and
LASSO-PSEUDO-C-TMLE   estimators  are   both  essentially   unbiased  in   all
configurations.

Tables~\ref{table:N}, \ref{table:Nsqrt} and  \ref{table:Nlog} also reveal that
the variance  of the  TMLE estimator  tends to  be smaller  than those  of the
LASSO-C-TMLE  and LASSO-PSEUDO-C-TMLE  estimators,  those  last two  variances
being very similar. Moreover, the gap between them tends to diminish as sample
size $n$ increases, in all scenarios. 

We  now  turn  to Figures~\ref{fig:plot_N_ci1},  \ref{fig:plot_Nsqrt_ci1}  and
\ref{fig:plot_Nlog_ci1}. The LASSO-C-TMLE estimator  performs best in terms of
empirical  coverage,  followed by  the  LASSO-PSEUDO-C-TMLE,  TMLE and  A-IPTW
estimators,  in that  order.   At  moderate sample  size,  the superiority  of
LASSO-C-TMLE-based CIs over  the others is striking.  However,  even they fail
to provide  the wished  coverage except  when sample size  $n$ is  large (say,
larger than $1500$).

As a  side note, we  recall that if  $S$ is drawn  from the Binomial  law with
parameter  $(B,  q)   =  (200,  q)$,  then  $S  \leq   185$  with  probability
approximately 8\% for $q = 95\%$, 22\% for $q = 94\%$ and 43\% for $q = 93\%$.
In  this light,  an  empirical coverage  of  7.5\% is  not  that abnormal  for
$B = 200$ independent  CIs of exact coverage $q = 93\%$, and  even $q = 94\%$.
Moreover, we  anticipated to get conservative  CIs because of how  we estimate
the    asymptotic    variance    of   the    LASSO-C-TMLE    estimator,    see
Section~\ref{subsec:strategy}.   The  fact  that   the  ``ratio''  entries  of
Table~\ref{table:Nlog}, scenario~3, are that close to one for the LASSO-C-TMLE
estimator at sample size $n \in \{1000, 2000\}$ (not to mention at sample size
$n=2000$   in    Table~\ref{table:N},   scenario~1)   reveals    that   little
over-estimation of  the asymptotic variance  is at  play.  Finally, we  see in
Figures~\ref{fig:plot_N_ci2},           \ref{fig:plot_Nsqrt_ci2}           and
\ref{fig:plot_Nlog_ci2}   that  the   CIs  based   on  the   LASSO-C-TMLE  and
LASSO-PSEUDO-C-TMLE estimators are systematically  slightly wider and slightly
narrower than those based on the  TMLE estimator, all much narrower than those
based on the A-IPTW estimator.

\subsubsection*{Scenario~4:  keeping  $\boldsymbol{p}$  fixed  and  increasing
  $\boldsymbol{n}$.}

Figure~\ref{fig:fixp}  and  Table~\ref{table:fixp}   summarize  the  numerical
findings under scenario~4, where the number of covariates $p$ is set to 40 and
sample size $n$ goes from 200 to 2000 by steps of 200.

We   observe   the   same    trend   in   Figure~\ref{fig:plot_fixp}   as   in
Figures~\ref{fig:plot_N},  \ref{fig:plot_Nsqrt}  and \ref{fig:plot_Nlog}:  MSE
decreases   as  sample   size   $n$  increases;   overall,  LASSO-C-TMLE   and
LASSO-PSEUDO-C-TMLE perform  similarly and better than  TMLE; TMLE outperforms
IPTW, and  IPTW outperforms A-IPTW.   Moreover, the gap  between LASSO-C-TMLE,
LASSO-PSEUDO-C-TMLE  on the  one  hand and  TMLE on  the  other hand  vanishes
completely as sample size increases, whereas  it only got smaller in scenarios
1, 2, 3.

Judging  by Table~\ref{table:fixp},  the unadjusted,  G-comp, IPTW  and A-IPTW
estimators    are   strongly    biased    whereas    the   LASSO-C-TMLE    and
LASSO-PSEUDO-C-TMLE  estimators are  both essentially  unbiased even  at small
sample size $n=200$.  The TMLE estimator is strongly biased too at sample size
$n=200$, but much less so as $n$ increases, with no bias at all at $n = 2000$.
Again,  double-robustness is  in action.   Moreover,  there is  little if  any
difference between  the LASSO-C-TMLE, LASSO-PSEUDO-C-TMLE and  TMLE estimators
in   terms  of   bias,  SE   and   MSE  for   sample  size   $n  \in   \{1000,
2000\}$. 

Figure~\ref{fig:ci:fixp}  reveals that,  at sample  sizes $n  \geq 1000$,  the
empirical   coverage   of   the   CIs    based   on   the   LASSO-C-TMLE   and
LASSO-PSEUDO-C-TMLE estimators is satisfactory, and that CIs based on the TMLE
estimator may  provide more  coverage than wished.   By Table~\ref{table:fixp}
(\textit{ratio}  rows),  the   estimation  of  the  actual   variance  of  the
LASSO-C-TMLE   and   TMLE   estimator   is   quite   good   at   sample   size
$n \in  \{1000, 2000\}$. Apparently,  the variance of  the LASSO-PSEUDO-C-TMLE
estimator  is  under-estimated  at  sample  size  $n=1000$,  but  much  better
estimated at sample size $n=2000$.

\begin{figure}[p]
  \centering
  \begin{subfigure}[t]{0.45\textwidth}
    \includegraphics[width=\textwidth]{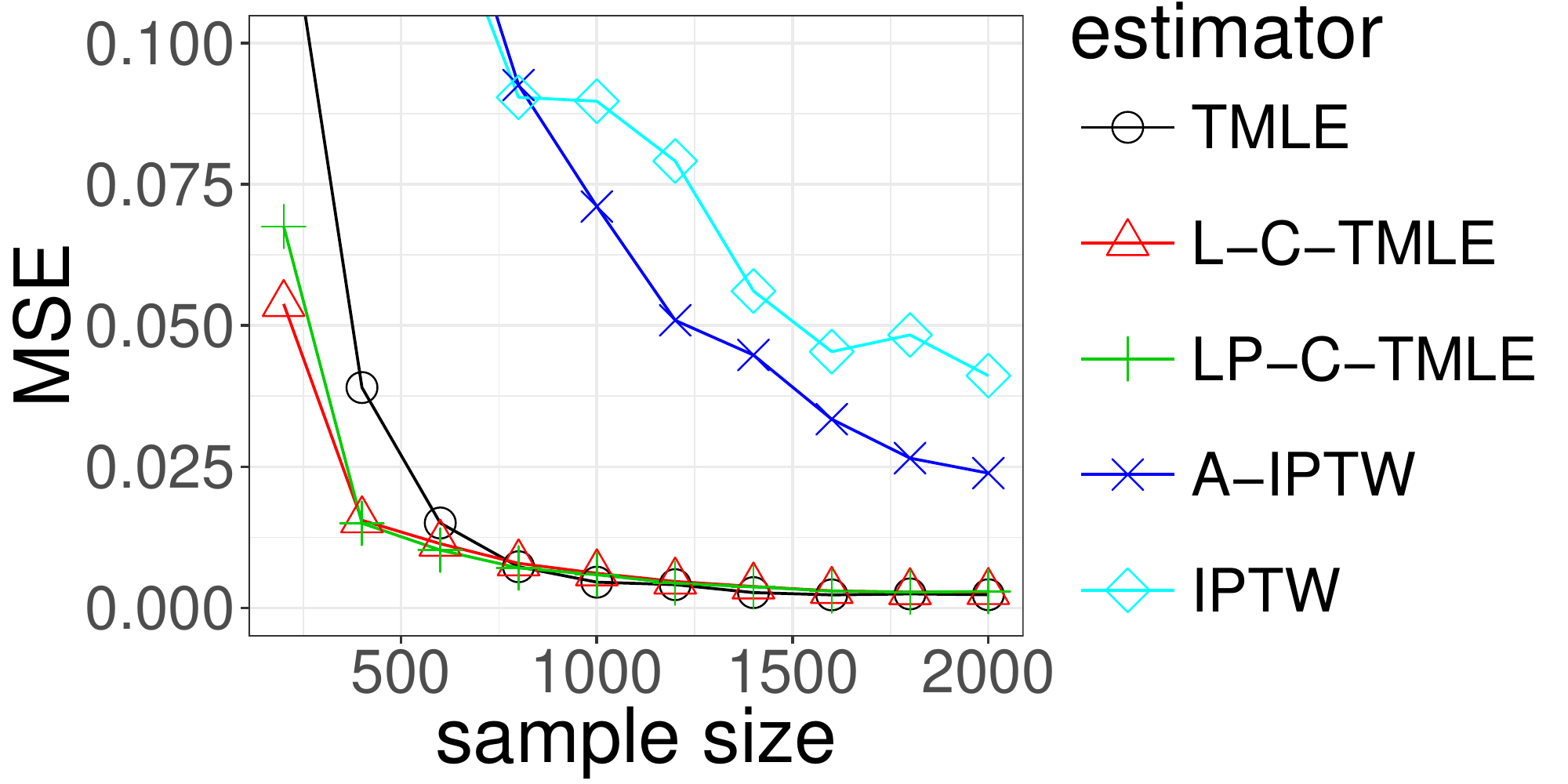}
    \caption{MSE for five  of the seven estimators.  \textit{MSE is multiplied
        by 100.} }
    \label{fig:plot_fixp}
  \end{subfigure}\\
  \begin{subfigure}[t]{0.45\textwidth}
    \includegraphics[width=\textwidth]{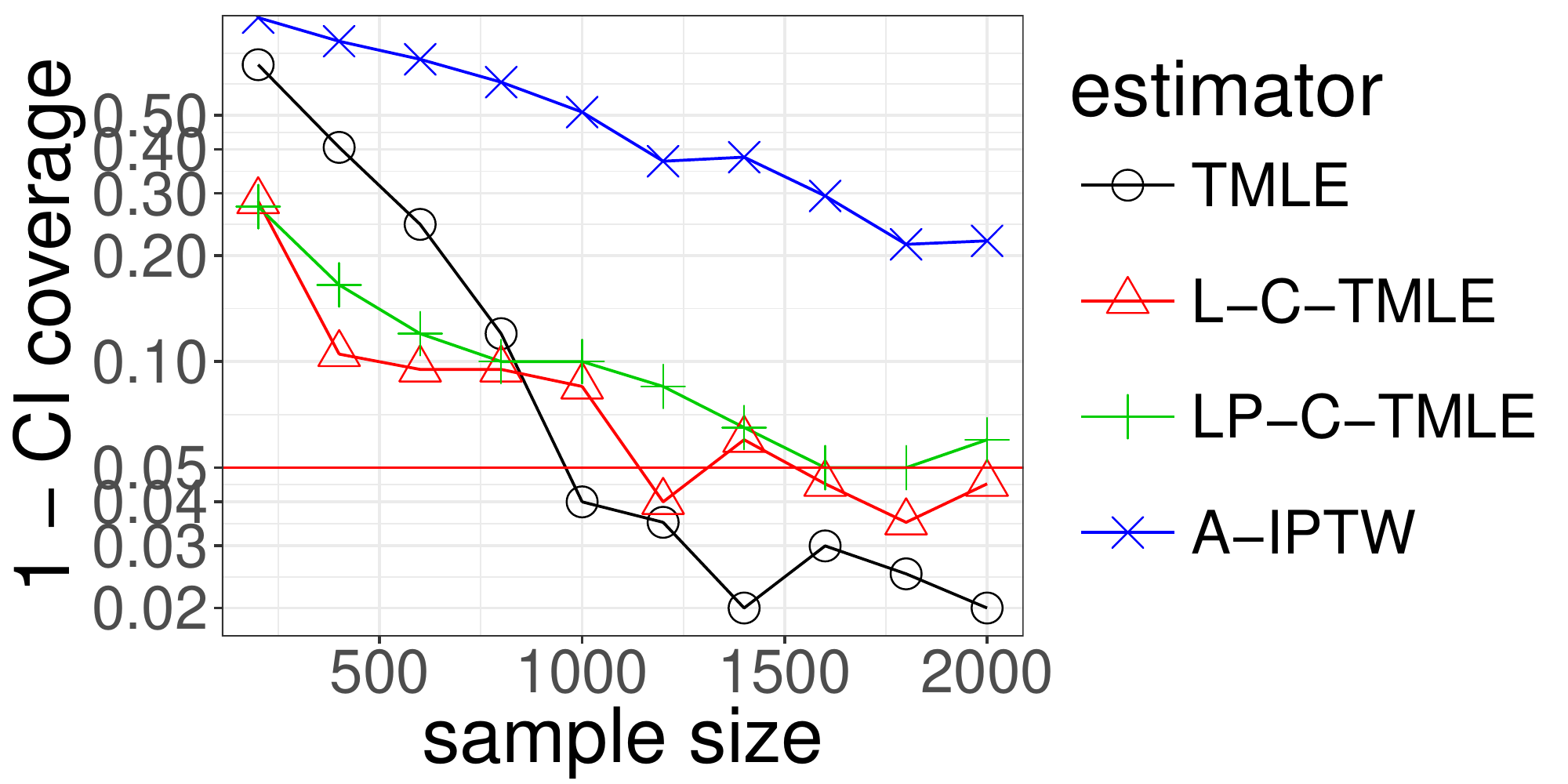}
    \caption{Coverage of 95\% CIs based on the double-robust estimators.}
    \label{fig:ci:fixp}
  \end{subfigure}
  \hspace{10mm}
  \begin{subfigure}[t]{0.45\textwidth}
    \includegraphics[width=\textwidth]{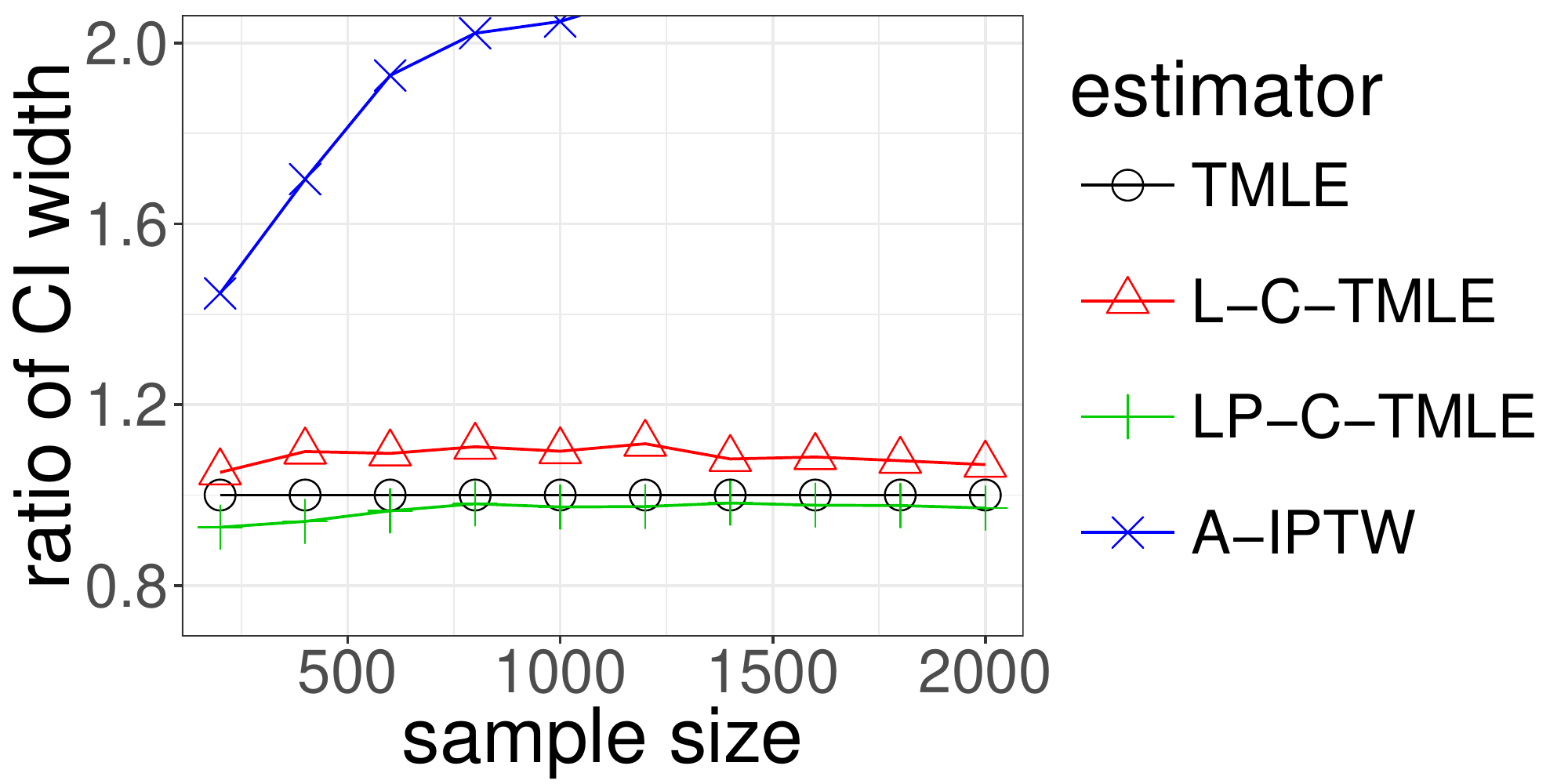}
    \caption{Relative  width  of  95\%  CIs  based  on  the  double-robust
      estimators     w.r.t.       that     of     the      plain     TMLE,
      $\psi_{n,h_{n,\CV}}^{*}$.}
  \end{subfigure}
  \caption{\textbf{Scenario~4.}  We  fix the number of  covariates $p=40$,
    and increase the sample size $n$ from 200 to 2000.}
  \label{fig:fixp}
\end{figure}

\begin{table}[p]
  \centering
  \begin{tabular}{l|l|rrrrrrrr}
    \hline $n$ & & $\psi_{n}^{\text{unadj}}$ & $\psi_{n}^{\text{G-comp}}$ & $\psi_{n}^{\text{IPTW}}$ & $\psi_{n}^{\text{A-IPTW}}$ & $\psi_{n,h_{n,\CV}}^{*}$ \scriptsize{L-C-TMLE} &  \scriptsize{LP-C-TMLE}
    \\ \hline
200 & bias & 1.286 & 1.215 & 0.485 & 0.649 & 0.317 & 0.020 & 0.020 \\ 
  & SE & 0.226 & 0.159 & 0.345 & 0.147 & 0.172 & 0.231 & 0.259 \\ 
& MSE & 1.705 & 1.501 & 0.354 & 0.443 & 0.130 & 0.054 & 0.068 \\
&  ratio &&& & 1.283 & 0.761 & 0.594 & 0.469 \\ \hline
 1000 & bias & 1.259 & 1.191 & 0.213 & 0.251 & 0.028 & -0.020 & -0.021 \\ 
  & SE & 0.109 & 0.073 & 0.211 & 0.090 & 0.062 & 0.076 & 0.074 \\ 
 & MSE & 1.597 & 1.425 & 0.090 & 0.071 & 0.005 & 0.006 & 0.006 \\
 & ratio &&& & 1.490 & 1.062 & 0.947 & 0.866 \\  \hline
 2000 & bias & 1.260 & 1.202 & 0.117 & 0.140 & -0.002 & -0.001 & -0.002 \\ 
  & SE & 0.080 & 0.049 & 0.165 & 0.066 & 0.046 & 0.049 & 0.048 \\ 
 & MSE & 1.595 & 1.448 & 0.041 & 0.024 & 0.002 & 0.002 & 0.002 \\
 & ratio &&& & 1.673 & 1.062 & 1.014 & 0.974 \\ 
  \end{tabular}
  \caption{\textbf{Scenario~4.} The  performance of  each estimator  at sample
    size  $n \in  \{200,  1000, 2000\}$,  with  $p =  40$.  The columns  named
    L-C-TMLE   and    LP-C-TMLE   correspond    to   the    LASSO-C-TMLE   and
    LASSO-PSEUDO-C-TMLE estimators, respectively.   Rows \textit{ratio} report
    the ratios of  the average of the SE estimates  across the $B$ repetitions
    to the empirical SE.  \textit{Bias and SE are multiplied by 10, and MSE is
      multiplied by 100.}}
  \label{table:fixp}
\end{table}

\subsubsection*{Scenario~5:  keeping  $\boldsymbol{n}$  fixed  and  increasing
  $\boldsymbol{p}$.}

Figure~\ref{fig:scen5}  and  Table~\ref{table:scen5} summarize  the  numerical
findings under scenario~5,  where the sample size  $n$ is set to  1000 and the
number of covariates $p$ ranges over $\{50, 75, 100, 150, 200\}$.

The take home message of Figure~\ref{fig:plot_ratio} is that, in terms of MSE,
the  LASSO-C-TMLE  and  LASSO-PSEUDO-C-TMLE  estimators  outperform  the  TMLE
estimator,      which      outperforms      the      IPTW      and      A-IPTW
estimators. Figure~\ref{fig:ci:scen5} further shows  that the above message is
also valid  when considering the  empirical coverage of  the CIs based  on the
different  estimators. As  the number  of  covariates $p$  increases, all  the
empirical  coverage  degrade.  However,  the  CIs  based on  the  LASSO-C-TMLE
estimator behave remarkably better than those based on the LASSO-PSEUDO-C-TMLE
estimator, which are themselves superior to those based on the TMLE estimator.

Examining  Table~\ref{table:scen5}  helps  to better  understand  the  general
pattern. The unadjusted,  G-comp, IPTW and A-IPTW estimators  are too strongly
biased to compete. The TMLE estimator  performs rather well when the number of
covariates  $p$  equals  50,  like the  LASSO-C-TMLE  and  LASSO-PSEUDO-C-TMLE
estimators. However, when $p \in \{100, 200\}$, then the TMLE estimator is too
biased to compete too -- even  double-robustness does not help \textit{yet} at
the moderate  sample size of  $n = 1000$.   In contrast, the  LASSO-C-TMLE and
LASSO-PSEUDO-C-TMLE   estimators  are   essentially   unbiased,  and   exhibit
relatively  small  variances  (compared  to  all  the  variances  reported  in
Tables~\ref{table:N},           \ref{table:Nsqrt},           \ref{table:Nlog},
\ref{table:fixp}. Finally, let us note that  the estimation of the variance of
the  LASSO-C-TMLE estimator  is rather  good (see  the \textit{ratio}  rows of
Table~\ref{table:scen5}),  as  opposed   to  that  of  the   variance  of  the
LASSO-PSEUDO-C-TMLE estimator.

\begin{figure}[p]
  \centering
  \begin{subfigure}[t]{0.45\textwidth}
    \includegraphics[width=\textwidth]{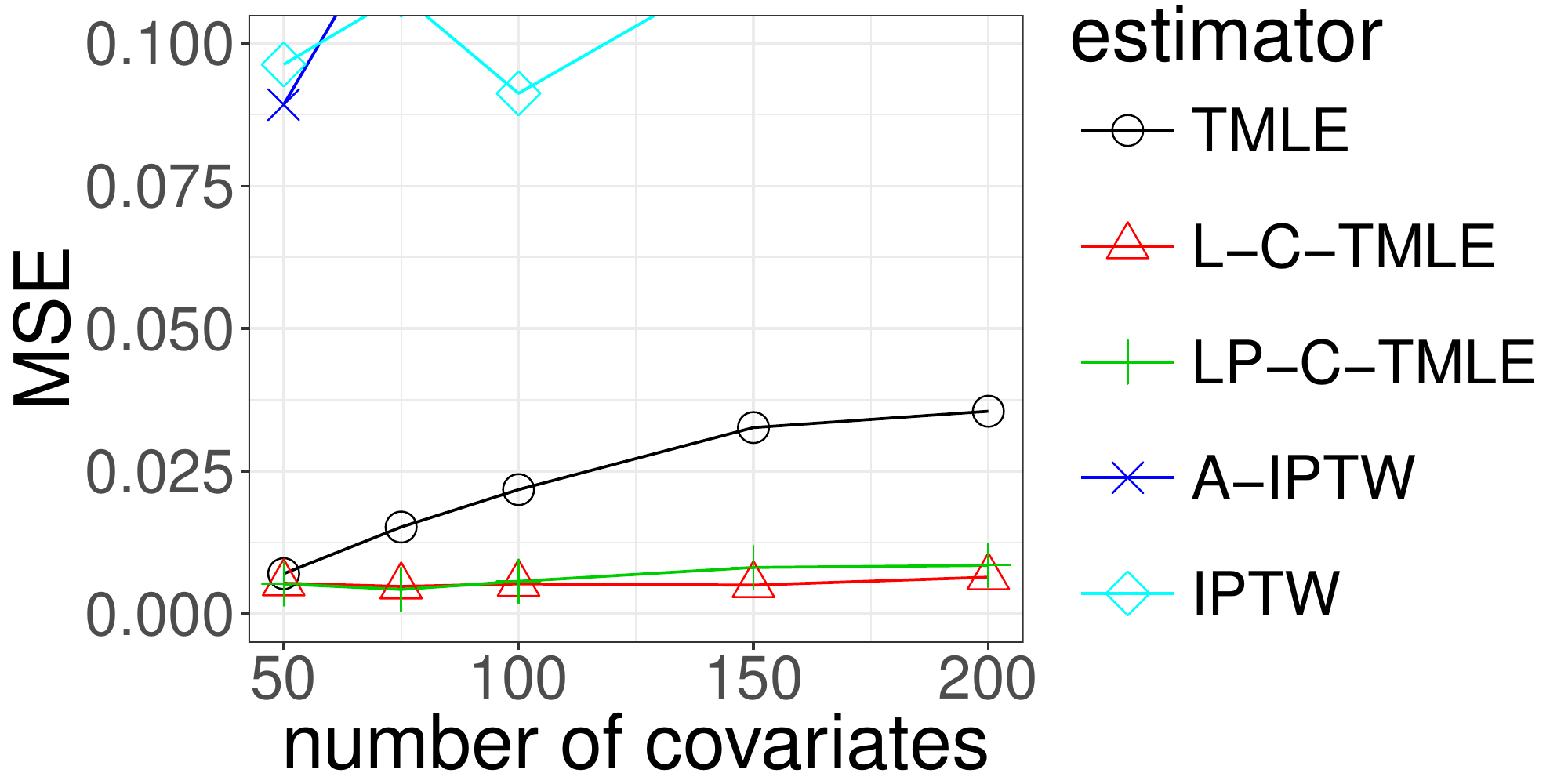}
    \caption{MSE for five  of the seven estimators.  \textit{MSE is multiplied
        by 100.} }
    \label{fig:plot_ratio}
  \end{subfigure}\\
  \begin{subfigure}[t]{0.45\textwidth}
    \includegraphics[width=\textwidth]{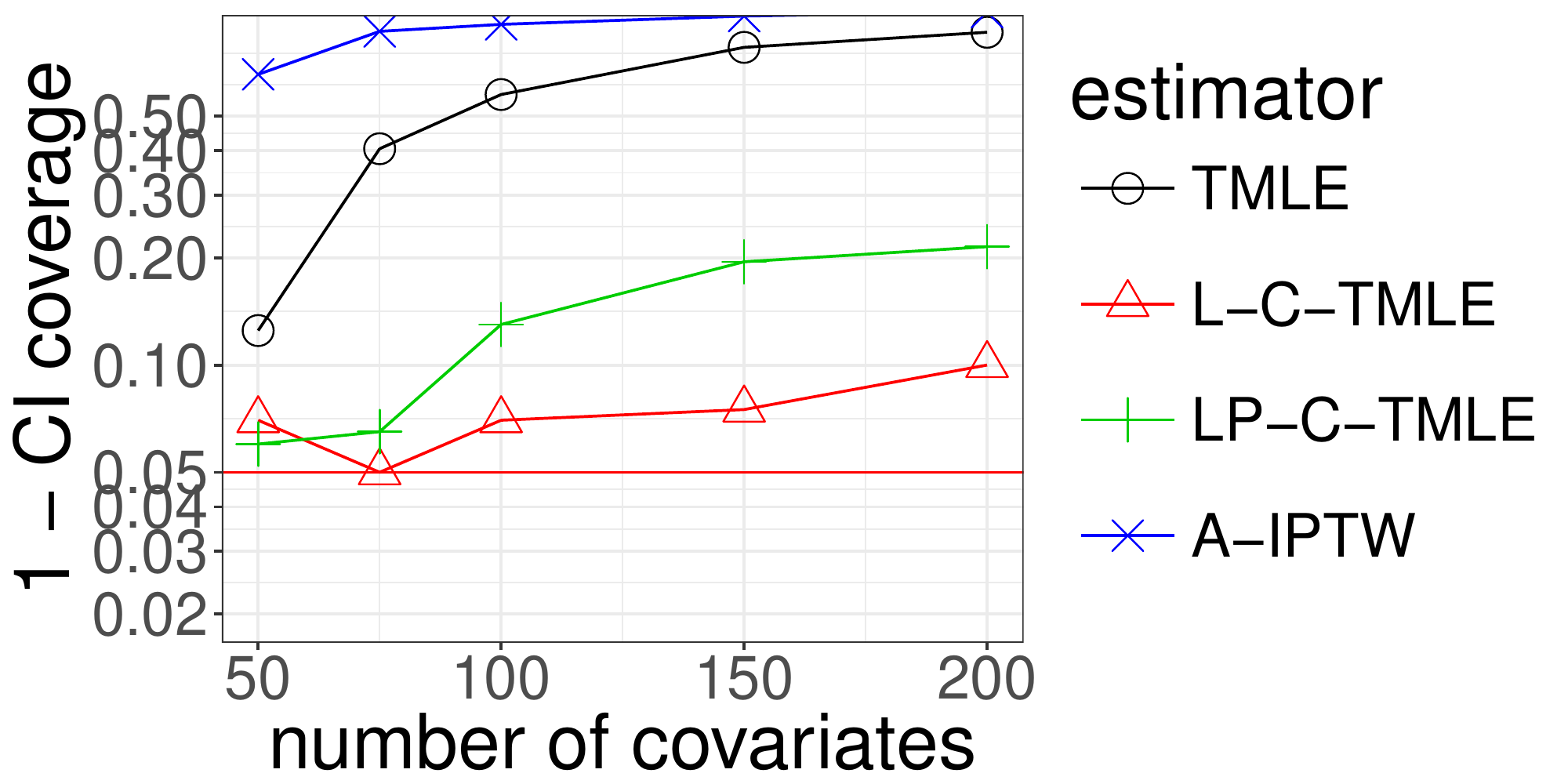}
    \caption{Coverage of 95\% CIs based on the double-robust estimators.}
    \label{fig:ci:scen5}
  \end{subfigure}
  \hspace{10mm}
  \begin{subfigure}[t]{0.45\textwidth}
    \includegraphics[width=\textwidth]{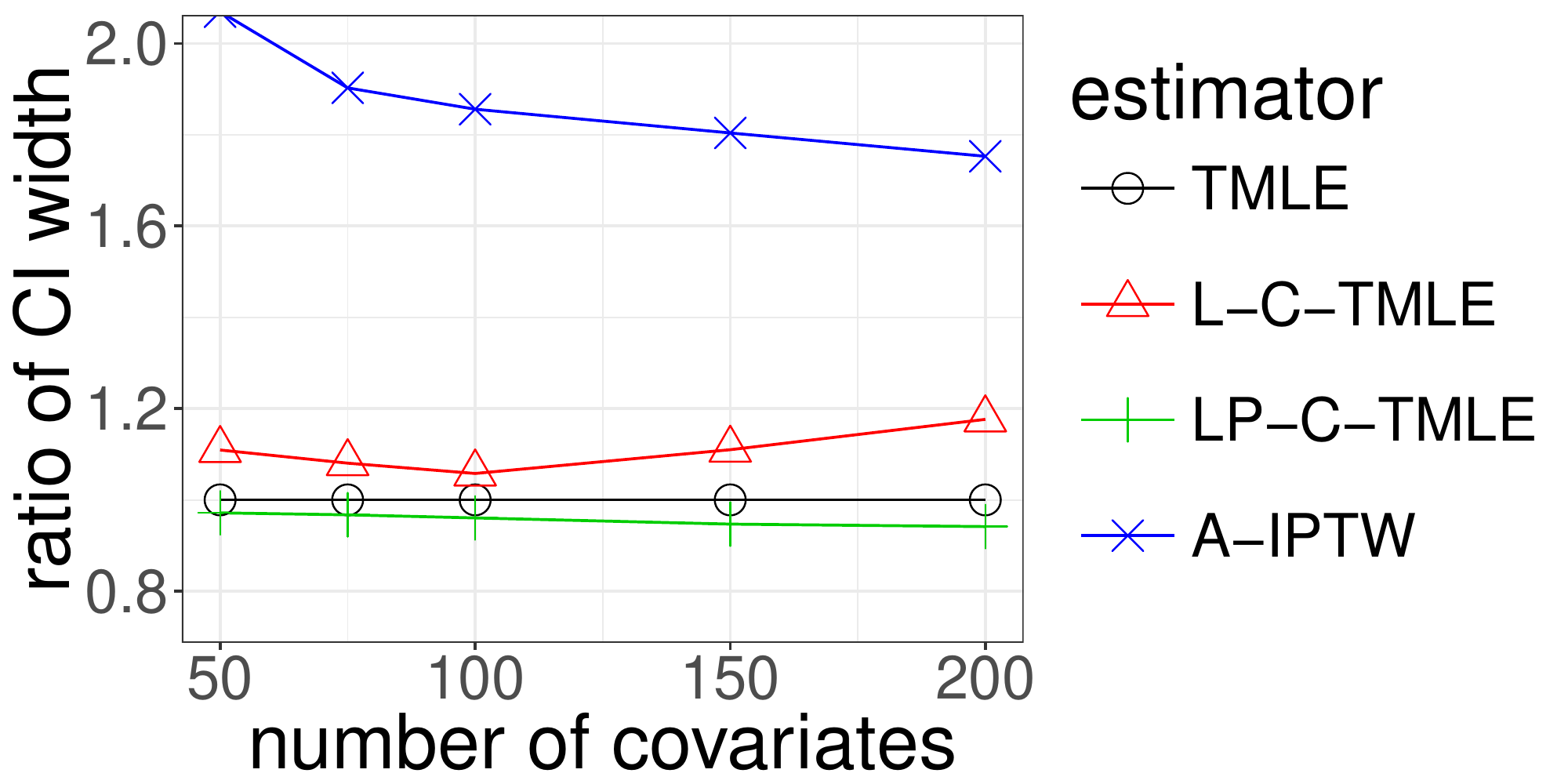}
    \caption{Relative  width  of  95\%  CIs  based  on  the  double-robust
      estimators     w.r.t.       that     of     the      plain     TMLE,
      $\psi_{n,h_{n,\CV}}^{*}$.}
  \end{subfigure}
  \caption{\textbf{Scenario~5.}  We  fix  the sample  size  $n=1000$,  and
    increase the number of covariates $p$ from 50 to 200.}
  \label{fig:scen5}
\end{figure}

\begin{table}[p]
  \centering
  \begin{tabular}{l|l|rrrrrrr}
    \hline $p$ & & $\psi_{n}^{\text{unadj}}$ & $\psi_{n}^{\text{G-comp}}$ & $\psi_{n}^{\text{IPTW}}$ & $\psi_{n}^{\text{A-IPTW}}$ & $\psi_{n,h_{n,\CV}}^{*}$ & \scriptsize{L-C-TMLE} &  \scriptsize{LP-C-TMLE}  \\
    \hline
50 & bias & 1.237 & 1.205 & 0.228 & 0.285 & 0.053 & -0.017 & -0.008 \\ 
   & SE & 0.107 & 0.072 & 0.211 & 0.090 & 0.065 & 0.071 & 0.072 \\ 
& MSE & 1.542 & 1.457 & 0.096 & 0.089 & 0.007 & 0.005 & 0.005 \\
& ratio &&&  & 1.472 & 0.983 & 0.998 & 0.864 \\ \hline
100   & bias & 1.179 & 1.199 & 0.252 & 0.357 & 0.130 & 0.005 & 0.013 \\ 
   & SE & 0.102 & 0.069 & 0.167 & 0.073 & 0.071 & 0.072 & 0.075 \\ 
& MSE & 1.402 & 1.443 & 0.091 & 0.133 & 0.022 & 0.005 & 0.006 \\
& ratio &&& & 1.561 & 0.867 & 0.896 & 0.791 \\ \hline
 200 & bias & 1.190 & 1.221 & 0.297 & 0.417 & 0.179 & 0.024 & 0.024 \\ 
   & SE & 0.107 & 0.067 & 0.159 & 0.067 & 0.060 & 0.077 & 0.089 \\ 
 & MSE & 1.428 & 1.494 & 0.114 & 0.178 & 0.036 & 0.006 & 0.009 \\
 & ratio &&& & 1.591 & 1.009 & 0.933 & 0.645 \\  \hline
  \end{tabular}
  \caption{\textbf{Scenario~5.} The  performance of  each estimator  at sample
    size $n = 1000$, with $p \in \{50, 100, 200\}$. The columns named L-C-TMLE
    and  LP-C-TMLE  correspond  to the  LASSO-C-TMLE  and  LASSO-PSEUDO-C-TMLE
    estimators, respectively.   Rows \textit{ratio}  report the ratios  of the
    average of  the SE estimates across  the $B$ repetitions to  the empirical
    SE.  \textit{Bias  and SE are multiplied  by 10, and MSE  is multiplied by
      100.} }
  \label{table:scen5}
\end{table}

\subsubsection*{Scenario  6:  keeping  $\boldsymbol{n}$  and  $\boldsymbol{p}$
  fixed and challenging the positivity assumption.}

In this  scenario, we study how  the level of posivitity  violation influences
the performance  of the estimators,  at small sample size  $n = 100$  and with
$p=50$         covariates,         by         progressively         increasing
$\delta  \in  \{0.5  +  k/10  :  0  \leq  k  \leq  15\}$.   Figure~\ref{fig:g}
illustrates how the  positivity violation is challenged.  We  recover the fact
that  $\delta  \mapsto  \Pi_{0,50,\delta}  (A =  1|W)$  is  increasing.   When
$\delta = 2$, the law is highly  skewed to 1, and the positivity assumption is
practically violated.   Figures~\ref{fig:plot_positivity}, \ref{fig:CI:scen6},
\ref{fig:ratio:CI:scen6}   and   Table~\ref{table:positivity}  summarize   the
numerical findings under scenario~6.

We see  in Figure~\ref{fig:scen6}  that, overall, the  TMLE estimator  is much
more affected than the LASSO-C-TMLE  and LASSO-PSEUDO-C-TMLE estimators by the
near violation of the positivity assumption at sample size $n = 500$, and that
the LASSO-C-TMLE and LASSO-PSEUDO-C-TMLE  estimators behave similarly in terms
of MSE  and empirical  coverage. Judging by  Table~\ref{table:positivity}, The
unadjusted, G-comp, IPTW,  A-IPTW and TMLE estimators are  too strongly biased
to  compete  with the  nearly  unbiased  LASSO-C-TMLE and  LASSO-PSEUDO-C-TMLE
estimators. The rather poor performance in  terms of empirical coverage of the
CIs based on the LASSO-C-TMLE and LASSO-PSEUDO-C-TMLE estimators may be due to
the apparent failure in estimating well their variance (see the \textit{ratio}
rows of Table~\ref{table:positivity}). 

\begin{figure}[p]
  \centering
  \begin{subfigure}[t]{0.45\textwidth}
    \includegraphics[width=\textwidth]{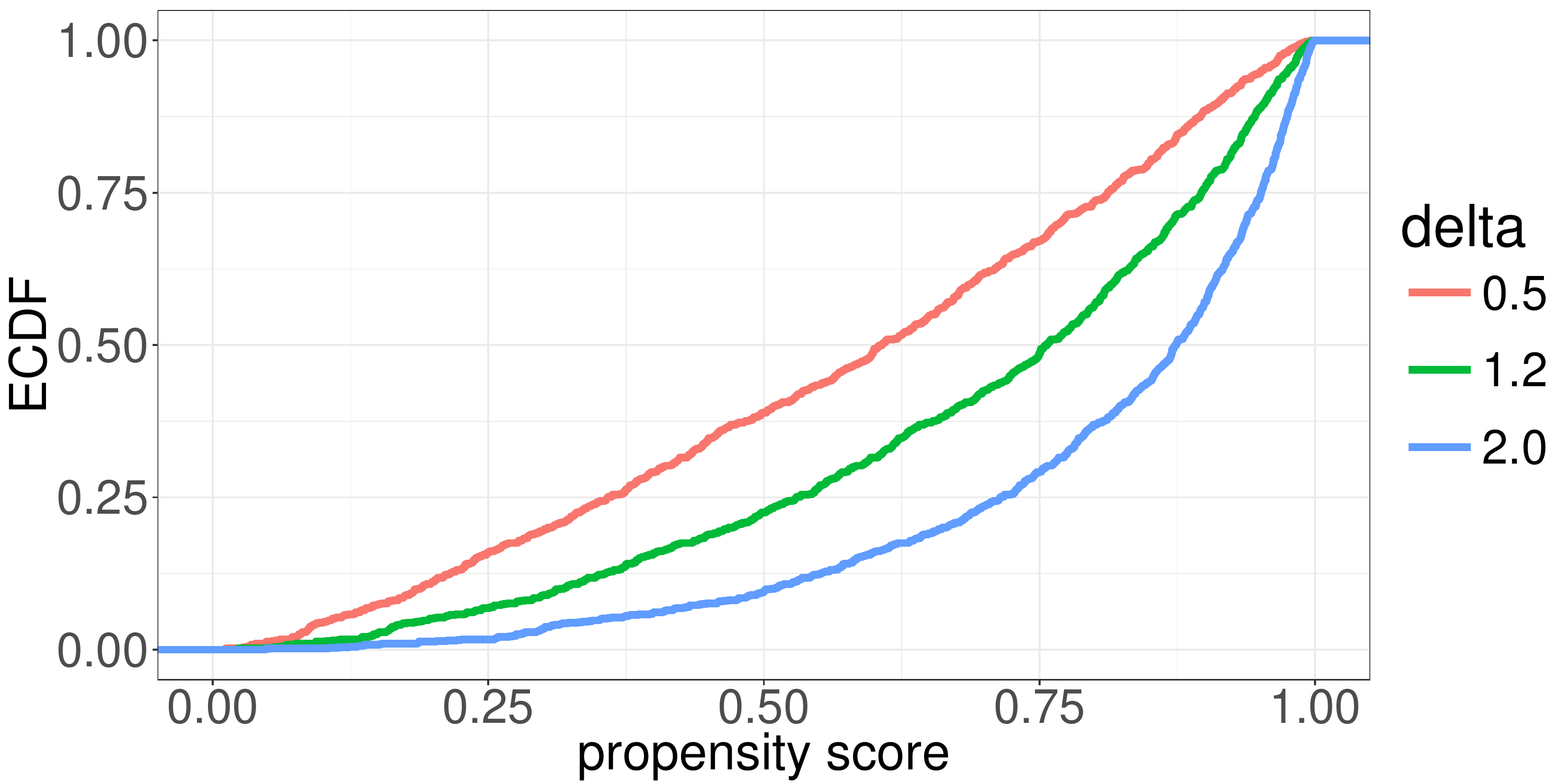}
    \caption{For  every $\delta  \in \{0.5,  1.2, 2\}$,  we simulate  $n=1000$
      observations $(W_{1},  A_{1}, Y_{1})$,  \ldots, $(W_{n},  A_{n}, Y_{n})$
      from                     $\Pi_{0,50,\delta}$,                    compute
      $\{\Pi_{0,50,\delta}(A=1 | W=W_i) : 1 \leq i \leq n\}$, and finally plot
      the corresponding empirical cumulative distribution.  }
    \label{fig:g}
  \end{subfigure}
  \hspace{10mm}
  \begin{subfigure}[t]{0.45\textwidth}
    \includegraphics[width=\textwidth]{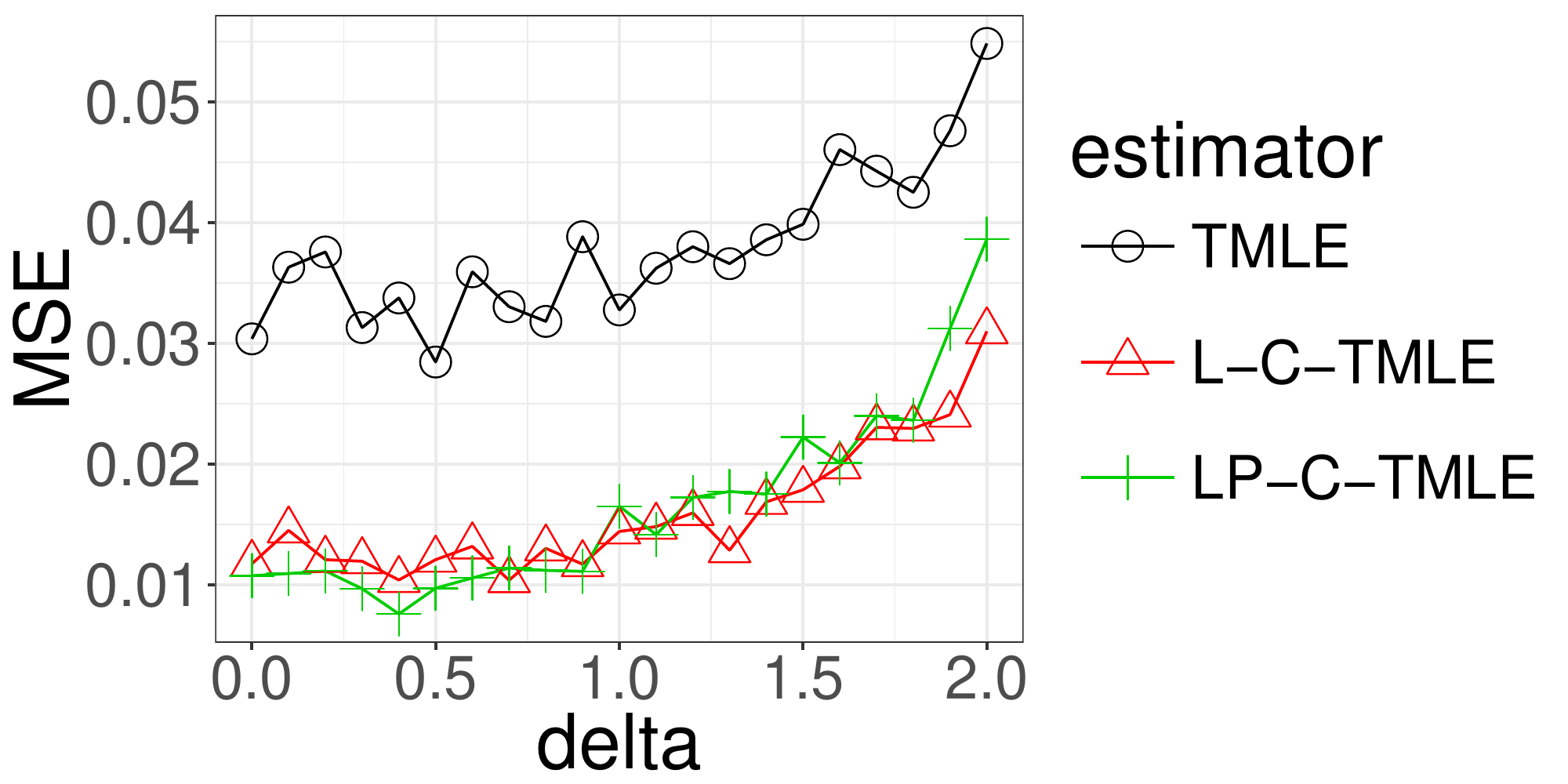}
    \caption{MSE for three of the  seven estimators. \textit{MSE is multiplied
        by 100.} }
    \label{fig:plot_positivity}
  \end{subfigure}\\
  \begin{subfigure}[t]{0.45\textwidth}
    \includegraphics[width=\textwidth]{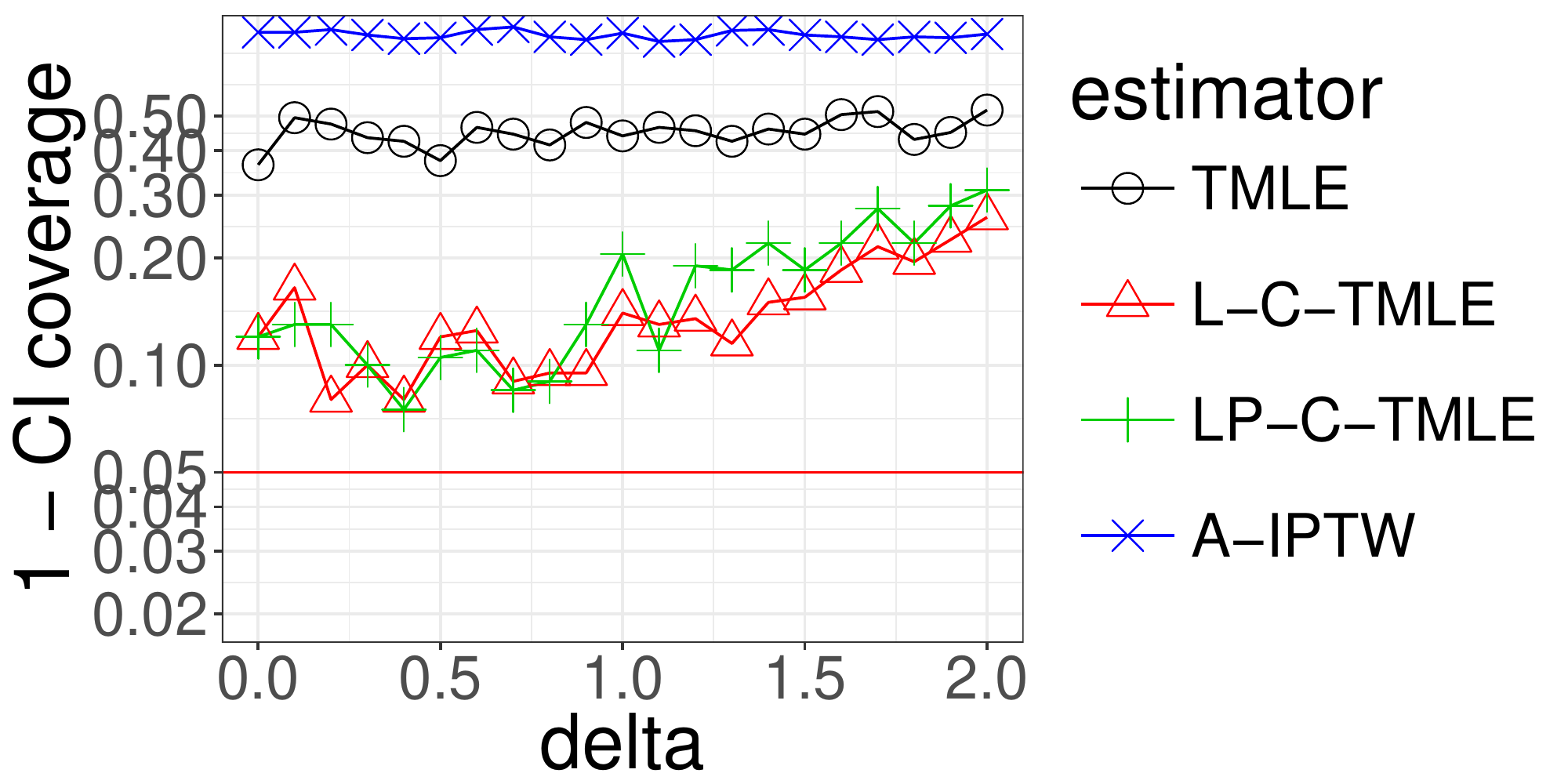}
    \caption{Coverage of 95\% CIs based on the double-robust estimators.}
    \label{fig:CI:scen6}
  \end{subfigure}
     \hspace{10mm}
    \begin{subfigure}[t]{0.45\textwidth}
    \includegraphics[width=\textwidth]{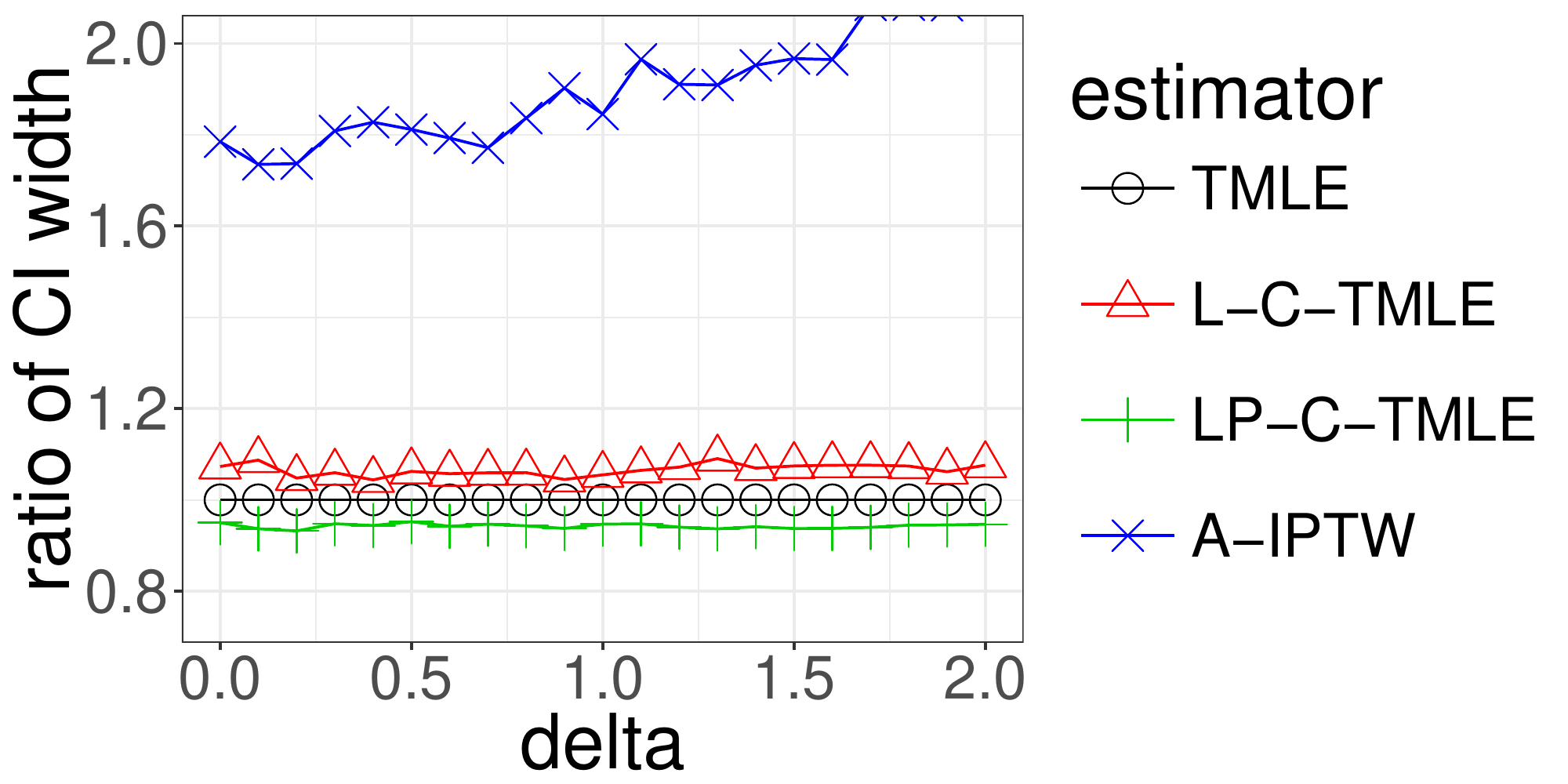}
    \caption{Relative width of 95\% CIs  based on the double-robust estimators
      w.r.t.  that of the plain TMLE, $\psi_{n,h_{n,\CV}}^{*}$.}
    \label{fig:ratio:CI:scen6}
    \end{subfigure}
    \caption{\textbf{Scenario~6.}   We   fix  $n=500,   p  =  50$,   and  vary
      $\delta \in \{0.5 + k/10: 0 \leq k \leq 15\}$.  As the MSEs for IPTW and
      A-IPTW  are  too  large,  we  only  plot the  MSEs  of  the  plain  TMLE
      $\psi_{n,h_{n,\CV}}^{*}$  and  the  two   collaborative  TMLEs  to  ease
      comparisons.}
    \label{fig:scen6}
\end{figure}

\begin{table}[p]
  \centering
  \begin{tabular}{l|l|rrrrrrrr}
    \hline $\delta$ & & $\psi_{n}^{\text{unadj}}$ & $\psi_{n}^{\text{G-comp}}$
    &     $\psi_{n}^{\text{IPTW}}$      &     $\psi_{n}^{\text{A-IPTW}}$     &
                                                                               $\psi_{n,h_{n,\CV}}^{*}$
    & \scriptsize{L-C-TMLE} &  \scriptsize{LP-C-TMLE} \\ \hline 
1.0 & bias & 1.283 & 1.252 & 0.995 & 0.484 & 0.145 & 0.006 & 0.009 \\ 
    & SE &  0.170 & 0.105 & 0.319 & 0.117 & 0.108 & 0.120 & 0.128 \\ 
    & MSE & 1.675 & 1.579 & 1.092 & 0.248 & 0.033 & 0.014 & 0.017 \\ 
        & ratio   &&& & 1.385 & 0.816 & 0.774 & 0.650 \\ \hline
    2.0 & bias & 1.391 & 1.340 & 1.620 & 0.625 & 0.185 & 0.053 & 0.063 \\ 
    & SE & 0.223 & 0.142 & 0.409 & 0.154 & 0.143 & 0.168 & 0.186 \\ 
    & MSE & 1.984 & 1.817 & 2.791 & 0.414 & 0.055 & 0.031 & 0.039 \\ 
    & ratio   &&&& 1.283 & 0.650 & 0.597 & 0.474 \\ \hline
  \end{tabular}
  \caption{\textbf{Scenario~6.} The  performance of  each estimator  at sample
    size  $n=500$, with  $p=50$ and  $\delta \in  \{1,2\}$. The  columns named
    L-C-TMLE   and    LP-C-TMLE   correspond    to   the    LASSO-C-TMLE   and
    LASSO-PSEUDO-C-TMLE estimators, respectively.   Rows \textit{ratio} report
    the ratios of  the average of the SE estimates  across the $B$ repetitions
    to the empirical SE.  \textit{Bias and SE are multiplied by 10, and MSE is
      multiplied by 100.} }
  \label{table:positivity}
\end{table}

\section{Secondary simulation study: LASSO-C-TMLE as a fine-tuning procedure} 
\label{sec:transfer}

In  this shorter  section, we  describe  a second,  less ambitious  simulation
study. Its aim is to evaluate the interest in using the LASSO-C-TMLE procedure
as a fine-tuning procedure.  Specifically we  wish to investigate, in the same
context   as  in   Section~\ref{sec:experiments},  how   the  rivals   of  the
LASSO-C-TMLE  estimator  that  also  rely   on  the  estimation  of  $\bG_{0}$
(\textit{i.e.},  the IPTW,  A-IPTW, TMLE  and LASSO-PSEUDO-C-TMLE  estimators)
perform when  they are provided  with the estimator $\bG_{n,\hnk}$  indexed by
the data-adaptive, targeted, fine-tune parameter $\hnk$. 

We thus  choose to repeat independently  $B = 200$ times  the following steps:
for each number of covariates $p \in \{100,200\}$,
\begin{enumerate}
\item simulate  a data set of  $n = 1000$ independent  observations drawn from
  $\Pi_{0,p,0}$;
\item derive the LASSO-C-TMLE estimator of Sections~\ref{sec:lasso:ctmle}
\item       derive       the      LASSO-PSEUDO-C-TMLE       estimator       of
  Section~\ref{sec:lasso:pseudo:ctmle} as  well as the competing  IPTW, A-IPTW
  and     TMLE     estimators     \textit{exactly}     as     presented     in
  Section~\ref{subsec:compet}, \textit{and also  using $\bG_{n,\hnk}$ in place
    of  $\bG_{n,h_{n}}$  (LASSO-PSEUDO-C-TMLE)  and  $\bG_{n,h_{n,\CV}}$  (the
    others)}.
\end{enumerate}

The results are reported in  Table~\ref{tab:transfer}. A clear pattern emerges
from Table~\ref{tab:transfer}:  the bias is systematically  reduced when using
$\bG_{n,\hnk}$   in   place   of   $\bG_{n,h_{n,\CV}}$   or   $\bG_{n,h_{n}}$.
Nevertheless,  the  MSE of  the  IPTW  estimator  increases, with  a  two-fold
increase when  the number of  covariates $p =  200$.  In contrast,  the A-IPTW
estimator benefits  more from the substitution,  with a stark decrease  of the
MSE on  top of that of  the bias, the latter  being still far too  large. This
makes only more  remarkable the fact that the TMLE  estimator greatly benefits
from  the substitution  on all  fronts,  bias and  MSE. On  the contrary,  the
benefit for the  LASSO-PSEUDO-C-TMLE estimator is not  convincing. In summary,
$\bG_{n,\hnk}$ is targeted  even ``out of context'',  \textit{i.e.}, even when
it is  used to  build a plain  TMLE estimator as  opposed to  the full-fledged
C-TMLE estimator.

\begin{table}[p]
  \centering
  \begin{tabular}{l|l|rrrrrr}
    \cline{1-6}
    $p$    &  &  $\psi_{n}^{\text{IPTW}}$   &    $(\psi_{n}^{\text{IPTW}})'$    &  $\psi_{n}^{\text{A-IPTW}}$ & $(\psi_{n}^{\text{A-IPTW}})'$ \\ 
    \cline{1-6}
    100 & bias & 0.252 & 0.078 & 0.357 & 0.151 & \\ 
           & SE & 0.167 & 0.325 & 0.073 & 0.167 \\ 
           & MSE  & 0.091 & 0.112 & 0.133 & 0.050 \\
    200 & bias  & 0.297 & 0.106 & 0.417 & 0.152 \\ 
           & SE  & 0.159 & 0.480 & 0.067 & 0.150 \\ 
           & MSE  & 0.114 & 0.242 & 0.178 & 0.045 \\ 
    \hline
    \hline
    $p$ & & $\psi_{n,h_{n,\CV}}^{*}$ & $\psi_{n,\hnk}^{*}$ & \scriptsize{L-C-TMLE} &  \scriptsize{LP-C-TMLE}
                                                           & \scriptsize{LP-C-TMLE}$'$ &\\ \hline
100 & bias & 0.130 & 0.017 & 0.005 & 0.013 & -0.000 \\ 
   & SE & 0.071 & 0.101 & 0.072 & 0.075 & 0.094 \\ 
   & MSE & 0.022 & 0.011 & 0.005 & 0.006 & 0.009 \\
  \hline
  200 & bias & 0.179 & -0.037 & 0.024 & 0.024 & -0.072 \\ 
   & SE & 0.060 & 0.140 & 0.077 & 0.089 & 0.148 \\ 
   & MSE & 0.036 & 0.021 & 0.006 & 0.009 & 0.027 \\ 


\end{tabular}
\caption{\textbf{Using   LASSO-C-TMLE  as   a  fine-tuning   procedure.}   The
  performance   of    each   estimator   at   sample    size   $n=1000$   with
  $p \in \{100,  200\}$. The prime symbol indicates the  use of $\bG_{n,\hnk}$
  as  an   estimator  of   $\bG_{0}$  in   place  of   $\bG_{n,h_{n,\CV}}$  or
  $\bG_{n,h_{n}}$.  \textit{Bias and  SE  are  multiplied by  10,  and MSE  is
    multiplied by 100.}}
  \label{tab:transfer}
\end{table}

\section{Discussion}
\label{sec:discuss}

We study the inference of the value of a smooth statistical parameter at a law
$P_{0}$ from which we sample $n$ independent observations, in situations where
\textit{(i)}  we  rely  on  a  machine  learning  algorithm  fine-tuned  by  a
real-valued parameter $h$ to estimate  the $G$-component $\bG_{0}$ of $P_{0}$,
possibly  consistently,  and  \textit{(ii)}  the   product  of  the  rates  of
convergence of  the estimators of  the $Q$-  and $G$-components of  $P_{0}$ to
their targets may be slower than the convenient $o(1/\sqrt{n})$.  A plain TMLE
with an $h$ chosen by cross-validation  would typically not lend itself to the
construction  of a  CI,  because  the selection  of  $h$  would trade-off  its
empirical bias with something akin to  the empirical variance of the estimator
of $\bG_{0}$ as opposed to that of  the TMLE.  We develop a collaborative TMLE
procedure that succeeds in achieving  the relevant trade-off: under high-level
empirical processes conditions, and if there exists an oracle $h$ that makes a
bulky   remainder   term  asymptotically   Gaussian,   then   the  C-TMLE   is
asymptotically  Gaussian hence  amenable to  building a  CI provided  that its
asymptotic variance can be estimated too.

The  construction of  the  C-TMLE  and the  main  result  about its  empirical
behavior are illustrated  with the inference of the  average treatment effect,
both theoretically and numerically. In the simulation study, the $G$-component is
estimated by  the LASSO, and  $h$ is the bound  on the $\ell^{1}$-norm  of the
candidate  coefficients.  Overall,  the  resulting  LASSO-C-TMLE estimator  is
superior to all  its competitors, including a plain  TMLE estimator. Evaluated
in terms of empirical bias, standard error, mean squared error and coverage of
CIs, the  superiority is striking in  small and moderate sample  sizes.  It is
also strong  when the number of  covariates increases, or when  the positivity
assumption  is  increasingly  challenged,   thus  making  the  inference  task
progressively even more delicate.

The simulation study suggests that the CIs  based on the C-TMLE do not provide
the wished coverage, especially in small  sample sizes. Obviously, this may be
explained by the need for the C-TMLE estimator to reach its asymptotic regime.
More subtly,  this may also  be related to high-level  assumption \textbf{A4},
that  states the  existence of  an oracle  $h$ making  a bulky  remainder term
asymptotically Gaussian.  The assumption may fail to hold in practice. We will
devote  future  research  to  understanding  better  \textbf{A4}  and  finding
strategies to avoir relying on it.

In conclusion, we believe that the present study further demonstrates the high
versatility  and   potential  of  the  collaborative   targeted  minimum  loss
estimation methodology. For (relative) simplicity, we focused on the inference
of  a   smooth,  real-valued   statistical  parameter  from   independent  and
identically  distributed  observations,  assuming that  the  machine  learning
algorithm is fine-tuned by a  real-valued parameter.  Our instantiation of the
collaborative targeted minimum loss estimation  methodology can be extended to
other  statistical parameters,  sampling schemes,  and fine-tuning  of machine
learning algorithms.

\appendix

\section{Proofs}

Sections~\ref{subsec:lem:A4},                    \ref{subsec:theo:high:level},
\ref{subsec:lem:easy}   and   \ref{subsec:theo:specific}  respectively   prove
Lemma~\ref{lem:A4},  Theorem~\ref{theo:high:level},  Lemma~\ref{lem:easy}  and
Corollary~\ref{theo:specific}.

\subsection{Proof of Lemma~\ref{lem:A4}}
\label{subsec:lem:A4}

\begin{proof}
  By \eqref{eq:remainder:intro},
  \begin{eqnarray*}
    \Psi(P_{n,h_{n}}^{*}) - \Psi(P_{0})   +  P_{0}   D^{*}  (Q_{n,h_{n}}^{*},
    \bG_{n,h_{n}}) 
    &=& \Rem_{20} (Q_{n,h_{n}}^{*},  \bG_{n,h_{n}}),\\
    \Psi(P_{0}) - \Psi(P_{0}) + P_{0} D^{*} (Q_{1}, \bG_{n,h_{n}}) 
    &=& \Rem_{20} (Q_{1},  \bG_{n,h_{n}}),
  \end{eqnarray*}
  hence, by \eqref{eq:asymp:exp:intro} and \eqref{eq:hla3:three},
  \begin{align*}
    P_{0} 
    \big(D^{*} (Q_{n,h_{n}}^{*}, \bG_{n,h_{n}}) - 
    & D^{*} (Q_{1}, \bG_{n,h_{n}}) \big) \\ 
    &  =   \Rem_{20}  (Q_{n,h_{n}}^{*},   \bG_{n,h_{n}})  -   \Rem_{20}  (Q_{1},
      \bG_{n,h_{n}}) - \left(\Psi(P_{n,h_{n}}^{*})  - \Psi(P_{0})\right) \\
    &=- (P_{n} - P_{0}) D^{*} (P_{0}) + o_{P}     (1/\sqrt{n}).
  \end{align*}
  But  \eqref{eq:hla2:one}  and  $\Phi_{0}  (\bG_{0}) =  P_{0}  D^{*}  (Q_{1},
  \bG_{0}) = 0$ also imply   that
  \begin{eqnarray*}
  -\Phi_{0} (\bG_{n,h_{n}}) + \Phi_{0} (\bG_{0}) 
  & = & P_{0} \left(D^{*} (Q_{n,h_{n}}^{*}, \bG_{n,h_{n}}) - D^{*} (Q_{1},
        \bG_{n,h_{n}})\right) \\
  && + (P_{n}  - P_{0}) \left(D^{*} (Q_{n,h_{n}}^{*},  \bG_{n,h_{n}}) - D^{*}
     (Q_{1}, \bG_{0})\right)\\ 
  && + (P_{n} - P_{0}) D^{*} (Q_{1}, \bG_{0}) + o_{P} (1/\sqrt{n})
  \end{eqnarray*}
  which, combined with the previous display and \eqref{eq:hla3:one}, yield
  \begin{equation*}
    \Phi_{0} (\bG_{n,h_{n}}) - \Phi_{0} (\bG_{0})  = (P_{n} - P_{0}) \left(D^{*}
      (P_{0})   -  D^{*}   (Q_{1},  \bG_{0})\right)   +  o_{P}(1/\sqrt{n})   =
    o_{P}(1/\sqrt{n}) 
  \end{equation*}
  showing  that  \eqref{eq:hla4}  is  satisfied with  $\th_{n}  =  h_{n}$  and
  $\Delta(P_{1}) \equiv 0$. 
\end{proof}

\subsection{Proof of Theorem~\ref{theo:high:level}}
\label{subsec:theo:high:level}

\begin{proof}
  The proof unfolds in two parts.

  \textit{Step  one:  extracting the  would-be  first  order term.}   Equality
  \eqref{eq:hla2:one} in \textbf{A2} rewrites as
  \begin{eqnarray*}
    o_{P}(1/\sqrt{n})
    &=& P_{n} D^{*} (Q_{n,h_{n}}^{*}, \bG_{n,h_{n}}) \\
    &=& (P_{n} -  P_{0}) D^{*} (Q_{n,h_{n}}^{*}, \bG_{n,h_{n}})  + P_{0} D^{*}
        (Q_{n,h_{n}}^{*}, \bG_{n,h_{n}}) \\ 
    &=&  \left[(P_{n}  - P_{0})  D^{*}  (Q_{1},  \bG_{0})  + (P_{n}  -  P_{0})
        \left(D^{*}   (Q_{n,h_{n}}^{*},   \bG_{n,h_{n}})   -   D^{*}   (Q_{1},
        \bG_{0})\right)\right] \\
    && + \left[P_{0} D^{*} (Q_{n,h_{n}}^{*}, \bG_{0}) + P_{0}
       \left(D^{*} (Q_{n,h_{n}}^{*}, \bG_{n,h_{n}}) - D^{*}(Q_{n,h_{n}}^{*}, 
       \bG_{0})\right)\right]. 
  \end{eqnarray*}
  The  second  term  of  the  sum  between  the  first  pair  of  brackets  is
  $o_{P}(1/\sqrt{n})$ by \eqref{eq:hla3:one} in \textbf{A3}.  As for the first
  term  between the  second pair  of brackets,  \eqref{eq:remainder:intro} and
  \eqref{eq:remainder:bound:intro} entail that it satisfies
  \begin{equation*}
    P_{0}    D^{*}     (Q_{n,h_{n}}^{*},    \bG_{0})    =     \Psi(P_{0})    -
    \Psi(P_{n,h_{n}}^{*}) + \Rem_{20} (Q_{n,h_{n}}^{*}, \bG_{0}) = \Psi(P_{0})    -
    \Psi(P_{n,h_{n}}^{*}),
  \end{equation*}
  where  we  also  use  the   fact  that  $\Psi(P_{n,h_{n}}^{*})$  depends  on
  $P_{n,h_{n}}^{*}$ only through $Q_{n,h_{n}}^{*}$. Thus, it holds that 
  \begin{multline}
    \label{eq:theo:step:one}
    \Psi(P_{n,h_{n}}^{*})  -  \Psi(P_{0})  -  (P_{n} -  P_{0})  D^{*}  (Q_{1},
    \bG_{0})  + o_{P}  (1/\sqrt{n})  \\=  P_{0} \left(D^{*}  (Q_{n,h_{n}}^{*},
      \bG_{n,h_{n}}) - D^{*}(Q_{n,h_{n}}^{*}, \bG_{0})\right) \equiv T_{1,n}.
  \end{multline}

  Let   us  now   study   $T_{1,n}$,  the   right-hand   side  expression   in
  \eqref{eq:theo:step:one}. It rewrites as
  \begin{multline*}
    T_{1,n} = P_{0} \left(D^{*} (Q_{1}, \bG_{n,h_{n}}) - D^{*}(Q_{1},
      \bG_{0})\right) \\
    +   P_{0}    \left(D^{*}   (Q_{n,h_{n}}^{*},   \bG_{n,h_{n}})    -   D^{*}
      (Q_{n,h_{n}}^{*},    \bG_{0})\right)    -    P_{0}    \left(D^{*}(Q_{1},
      \bG_{n,h_{n}}) - D^{*}(Q_{1}, \bG_{0})\right).
  \end{multline*}
  Consider  the three  terms in  the right-hand  side of  the above  equation.
  Combining  \eqref{eq:remainder:intro}, \eqref{eq:remainder:bound:intro}  and
  the  fact that  $\Psi(P) =  \Psi(P')$ whenever  $P$ and  $P'$ have  the same
  $Q$-component   reveals   that  the   first   and   third  terms   equal   both
  $\Phi_{0}(\bG_{n,h_{n}})          -          \Phi_{0}(\bG_{0})$          and
  $\Rem_{20} (Q_{1},  \bG_{n,h_{n}})$.  For  similar reasons, the  second term
  equals  $\Rem_{20} (Q_{n,h_{n}}^{*},  \bG_{n,h_{n}})$.  Therefore,  by using
  successively \eqref{eq:hla3:three} in  \textbf{A3} then \eqref{eq:hla4} from
  \textbf{A4}, we obtain that
  \begin{eqnarray}
    \notag
    T_{1,n}
    &=& \Phi_{0}(\bG_{n,h_{n}})  - \Phi_{0}(\bG_{0})  +  \left(\Rem_{20}
        (Q_{n,h_{n}}^{*},      \bG_{n,h_{n}})     -      \Rem_{20}     (Q_{1},
        \bG_{n,h_{n}})\right)\\ 
    \notag
    &=&    \big[\Phi_{0}(\bG_{n,\th_{n}})     -    \Phi_{0}(\bG_{0})\big]    +
        \big[\Phi_{0}(\bG_{n,h_{n}}) -  \Phi_{0}(\bG_{n,\th_{n}})\big] + o_{P}
        (1/\sqrt{n})\\ 
    \label{eq:theo:step:two} 
    &=& (P_{n} -  P_{0})  \Delta(P_{1})  + \big[\Phi_{0}(\bG_{n,h_{n}})  -
        \Phi_{0}(\bG_{n,\th_{n}})\big] + o_{P} (1/\sqrt{n}).  
  \end{eqnarray}

  \textit{Step two: showing  that the would-be first order  term is complete.}
  The rest of the proof consists in  showing that the term between brackets in
  \eqref{eq:theo:step:two},  say   $T_{2,n}$,  is   $o_{P}(1/\sqrt{n})$.   The
  inequality below follows from the  definition of $\Phi_{0}$ and the triangle
  inequality, and the equality from \eqref{eq:hla5:one} in \textbf{A5}:
  \begin{eqnarray*}
    |T_{2,n}| 
    &\leq&   \left|P_{n}   \left(D^{*}(Q_{1},   \bG_{n,\th_{n}})   -
           D^{*}(Q_{1}, \bG_{n,h_{n}})\right)\right| \\
    &&+ \left| (P_{n} -  P_{0}) \left(D^{*}(Q_{1}, \bG_{n,h_{n}}) - D^{*}(Q_{1},
       \bG_{n,\th_{n}})\right)\right| \\
    &=& \left|P_{n}   \left(D^{*}(Q_{1},   \bG_{n,\th_{n}})   -
        D^{*}(Q_{1}, \bG_{n,h_{n}})\right)\right| + o_{P} (1/\sqrt{n}).
  \end{eqnarray*}
  Therefore,  it suffices  to  prove  that the  absolute  value  in the  above
  right-hand     side     expression    is     $o_{P}(1/\sqrt{n})$.      Under
  \textbf{A1}$(Q_{1},h_{n},c_{5})$  (guaranteed  by  \textbf{A5}),  for  every
  $1 \leq i \leq n$, the Taylor-Lagrange inequality yields
  \begin{equation*}
    \left|\left(D^{*}(Q_{1},   \bG_{n,\th_{n}})   -
        D^{*}(Q_{1},    \bG_{n,h_{n}})   -    (\th_{n}    -   h_{n})    \times
        \partial_{h_{n}}  D^{*}(Q_{1},  \bG_{n,\Cdot})\right)(O_{i})\right|
    \lesssim (\th_{n} - h_{n})^{2}
  \end{equation*}
  hence, by convexity, 
  \begin{equation*}
    \left|P_{n}\left(D^{*}(Q_{1},   \bG_{n,\th_{n}})   -
        D^{*}(Q_{1},    \bG_{n,h_{n}})   -    (\th_{n}    -   h_{n})    \times
        \partial_{h_{n}} D^{*}(Q_{1}, \bG_{n,\Cdot})\right)\right| 
    \lesssim (\th_{n} - h_{n})^{2}.    
  \end{equation*}
  Since  $P_{n}\partial_{h_{n}}   D^{*}(Q_{n,h_{n}}^{*},  \bG_{n,\Cdot})$  and
  $(\th_{n} -  h_{n})$ are  both $o_{P}(1/n^{1/4})$ by  \eqref{eq:hla2:two} in
  \textbf{A2} and \textbf{A5}, we get
  \begin{eqnarray*}
    &&P_{n}     \left(D^{*}(Q_{1},     \bG_{n,\th_{n}})     -     D^{*}(Q_{1},
       \bG_{n,h_{n}})\right) \\ 
    &=&    (\th_{n}    -    h_{n})   \times    P_{n}    \left(\partial_{h_{n}}
        D^{*}(Q_{n,h_{n}}^{*}, \bG_{n,\Cdot}) - \partial_{h_{n}} D^{*}(Q_{1},
        \bG_{n,\Cdot})\right) + o_{P}(1/\sqrt{n})\\
    &=& (\th_{n} - h_{n}) \times (P_{n} - P_{0}) \left(\partial_{h_{n}}
        D^{*}(Q_{n,h_{n}}^{*}, \bG_{n,\Cdot}) - \partial_{h_{n}} D^{*}(Q_{1},
        \bG_{n,\Cdot})\right) \\
    &&   +    (\th_{n}   -   h_{n})   \times    P_{0}   \left(\partial_{h_{n}}
       D^{*}(Q_{n,h_{n}}^{*},  \bG_{n,\Cdot}) -  \partial_{h_{n}} D^{*}(Q_{1},
       \bG_{n,\Cdot})\right) + o_{P}(1/\sqrt{n}). 
  \end{eqnarray*}
  In light  of \eqref{eq:hla5:two}  and \eqref{eq:hla5:three}  in \textbf{A5},
  the righ-hand side  expression is $o_{P} (1/\sqrt{n})$.   This completes the
  proof:  $T_{2,n}  =   o_{P}  (1/\sqrt{n})$,  hence  \eqref{eq:theo:step:two}
  rewrites as
  \begin{equation*}
    T_{1,n} = (P_{n} - P_{0}) \Delta(P_{1}) + o_{P} (1/\sqrt{n}),
  \end{equation*}
  and       \eqref{eq:theo:high:level}        finally       follows       from
  \eqref{eq:theo:step:one}.
\end{proof}

\subsection{Proof of Lemma~\ref{lem:easy}}
\label{subsec:lem:easy}

\begin{proof}
  Set   $\bq_{n}  \equiv   \bQ_{n}   (1,  \cdot)   -   \bQ_{n}  (0,   \cdot)$,
  $\bq_{1}   \equiv  \bQ_{1}   (1,   \cdot)  -   \bQ_{1}   (0,  \cdot)$,   and
  $\psi_{1}      \equiv      P_{0}       \bq_{1}$.       Using      inequality
  $(a+b)^{2} \leq  2 (a^{2} + b^{2})$  (valid for all real  numbers $a,b$), we
  first remark that
  \begin{eqnarray*}
    P_{0} (\bq_{n} - \bq_{1})^{2} 
    &\leq& 2 P_{0} [\bQ_{n}(1,\cdot) - \bQ_{1}(1,\cdot)]^{2} + 2 P_{0}
           [\bQ_{n}(0,\cdot) - \bQ_{1}(0,\cdot)]^{2} \\ 
    &=& 2 P_{0} (\bQ_{n} - \bQ_{0})^{2}/\ell\bG_{0} \lesssim P_{0} (\bQ_{n} -
        \bQ_{0})^{2}. 
  \end{eqnarray*}
  Therefore, it also  holds that $P_{0} (\bq_{n} - \bq_{1})^{2}  = o_{P} (1)$.
  Second, we decompose the difference $\psi_{n} - \psi_{1}$ as
  \begin{equation*}
    \psi_{n} -  \psi_{1} =  P_{n} \bq_{n}  - P_{0} \bq_{1}  = (P_{n}  - P_{0})
    (\bq_{n}  -  \bq_{1})  +  (P_{n}  - P_{0})  \bq_{1}  +  P_{0}  (\bq_{n}  -
    \bq_{1}). 
  \end{equation*}
  Lemma~19.24 in~\citep{VdV98} guarantees that the first term in the above RHS
  expression is $o_{P} (1/\sqrt{n})$.  Because $\bq_{1}$ is uniformly bounded,
  the  standard  central  limit  theorem (for  sequences  of  independent  and
  identically distributed, real-valued random  variables with finite variance)
  implies that  the second  term is $O_{P}  (1/\sqrt{n})$. Finally,  the third
  term  is $o_{P}(1)$  by  the  Cauchy-Schwarz inequality  and  the remark  we
  previously made. In summary, $\psi_{n} - \psi_{1} = o_{P} (1)$, as stated.
\end{proof}

\subsection{Proof of Corollary~\ref{theo:specific}}
\label{subsec:theo:specific}

The proof of Corollary~\ref{theo:specific} uses  repeatedly the fact that some
specific   random    functions   fall   in   $P_{0}$-Donsker    classes   with
$P_{0}$-probability  tending  to  one.   Specifically, the  proof  will  refer
several times to the following lemma (its proof is deferred to the end of this
section).

\begin{lemma}
  \label{lem:Donsker}
  Suppose  that  the  assumptions of  Corollary~\ref{theo:specific}  are  met.
  Then,       with       $P_{0}$-probability       tending       to       one,
  $\bQ_{n,\hnk}^{*}(1,\cdot)           -           \bQ_{n,\hnk}^{*}(0,\cdot)$,
  $D^{*}                                                    (P_{n,\hnk}^{*})$,
  $D^{*}  (Q_{1},   \bG_{n,\hnk})  -   D^{*}  (Q_{1},   \bG_{n,\th_{n}})$  and
  $\partial_{\hnk}  D^{*}  (Q_{n,\hnk}^{*}, \bG_{n,\Cdot})  -  \partial_{\hnk}
  D^{*} (Q_{n,\hnk}^{*}, \bG_{n,\Cdot})$ also fall in $P_{0}$-Donsker classes.
\end{lemma}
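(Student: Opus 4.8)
The plan is to derive every one of the four random functions from the elementary building blocks $\bQ_{n,\hnk}^{*}$, $\bG_{n,\hnk}$, $\bG_{n,\th_{n}}$ and $\bG_{n,\hnk}'$ — which fall in $P_{0}$-Donsker classes (with $P_{0}$-probability tending to one) by assumption \textbf{C4} — through operations that preserve the Donsker property. The toolbox I would invoke consists of the classical preservation results \citep{VdV98}: a single square-integrable function is a trivial Donsker class; finite sums of Donsker classes are Donsker; the pointwise product of two \emph{uniformly bounded} Donsker classes is Donsker; the image of a Donsker class under a fixed Lipschitz map is Donsker; and adding a fixed (possibly sample-determined but $O$-constant) function leaves the property intact. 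Before applying them I would record the boundedness facts supplied by \textbf{C1}: since $\bG_{0}$ and every $\bG_{n,h}$ take values in $[C_{1}, 1-C_{1}]$, so does $\ell\bG_{n,h}$, whence $1/\ell\bG_{n,h}$ and $1/(\ell\bG_{n,h})^{2}$ are bounded and are Lipschitz images of $\bG_{n,h}$ on $[C_{1},1-C_{1}]$; moreover $Y, \bQ \in [0,1]$ and $|\bG_{n,\hnk}'| \leq C_{2}$. These bounds are exactly what is needed so that the product-preservation result applies at each stage.

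With the toolbox and the bounds in place, each function is treated in turn. The contrast $\bQ_{n,\hnk}^{*}(1,\cdot) - \bQ_{n,\hnk}^{*}(0,\cdot)$ is obtained from the Donsker class containing $\bQ_{n,\hnk}^{*}$ by the fixed linear (hence Lipschitz) map sending a function $q$ to $(w \mapsto q(1,w)-q(0,w))$, and is therefore Donsker. For $D^{*}(P_{n,\hnk}^{*})$ I would read off the efficient influence curve from \eqref{eq:EIC}: it is the sum of $\frac{2A-1}{\ell\bG_{n,\hnk}}(Y-\bQ_{n,\hnk}^{*})$ — a product of the bounded Donsker class generated by $1/\ell\bG_{n,\hnk}$ with the bounded difference $Y-\bQ_{n,\hnk}^{*}$, itself Donsker — the contrast already handled, and the $O$-constant $-\Psi(P_{n,\hnk}^{*})$; sums and products of uniformly bounded Donsker classes remain Donsker. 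For $D^{*}(Q_{1}, \bG_{n,\hnk}) - D^{*}(Q_{1}, \bG_{n,\th_{n}})$ the terms depending on $Q_{1}$ alone cancel, leaving $(2A-1)(Y-\bQ_{1}(A,W))\bigl(1/\ell\bG_{n,\hnk} - 1/\ell\bG_{n,\th_{n}}\bigr)$; here $\bQ_{1}$ is fixed, so the multiplier is a bounded fixed function, while the bracketed difference lies in a Donsker class as the difference of two Lipschitz images of the Donsker elements $\bG_{n,\hnk}$ and $\bG_{n,\th_{n}}$.

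The fourth listed function is, as written, the difference of a term with itself; I read it as the intended $\partial_{\hnk} D^{*}(Q_{n,\hnk}^{*}, \bG_{n,\Cdot}) - \partial_{\hnk} D^{*}(Q_{1}, \bG_{n,\Cdot})$, the object actually appearing in assumption \textbf{A5} and in the proof of Theorem~\ref{theo:high:level}. The only genuine computation in the whole lemma is the explicit form of $\partial_{h}D^{*}$. Differentiating \eqref{eq:EIC} in $h$ and using $\frac{d}{dt}\ell\bG_{t}(A,W) = (2A-1)\bG_{t}'(W)$ together with $(2A-1)^{2}=1$ gives $\partial_{h}D^{*}(Q, \bG_{n,\Cdot})(O) = -(Y-\bQ(A,W))\bG_{n,h}'(W)/\bigl(\ell\bG_{n,h}(A,W)\bigr)^{2}$, the $\Psi$-term dropping out because it does not depend on $\bG$. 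Subtracting the $Q_{1}$-version then produces $(\bQ_{n,\hnk}^{*}-\bQ_{1})(A,W)\,\bG_{n,\hnk}'(W)/\bigl(\ell\bG_{n,\hnk}(A,W)\bigr)^{2}$, a product of the bounded Donsker class generated by $\bQ_{n,\hnk}^{*}$, the bounded Donsker element $\bG_{n,\hnk}'$, and the bounded Lipschitz image $1/(\ell\bG_{n,\hnk})^{2}$ of $\bG_{n,\hnk}$; it is therefore Donsker.

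The step I expect to demand the most care is not any single preservation invocation but the bookkeeping that keeps all factors uniformly bounded so that the product rule for Donsker classes can legitimately be applied at each stage — this is precisely what \textbf{C1} guarantees — together with the elementary but easy-to-botch differentiation yielding the closed form of $\partial_{h}D^{*}$. Once that identity is in hand, the fourth assertion reduces, like the three others, to a product of bounded Donsker building blocks, and the lemma follows.
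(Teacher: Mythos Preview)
Your proposal is correct and follows essentially the same route as the paper: write each of the four functions explicitly in terms of the building blocks $\bQ_{n,\hnk}^{*}$, $\bG_{n,\hnk}$, $\bG_{n,\th_{n}}$, $\bG_{n,\hnk}'$ from \textbf{C4}, exploit the uniform bounds from \textbf{C1}, and then invoke standard Donsker-preservation results (the paper cites \citep[Theorem~2.10.6]{vdVW96}). You are also right that the fourth function as stated is identically zero and should read $\partial_{\hnk} D^{*}(Q_{n,\hnk}^{*}, \bG_{n,\Cdot}) - \partial_{\hnk} D^{*}(Q_{1}, \bG_{n,\Cdot})$; your closed form for $\partial_{h}D^{*}$ is in fact more accurate than the one recorded in the paper (the denominator should be $(\ell\bG_{n,h})^{2}$, not $\ell\bG_{n,h}$), though this does not affect the Donsker argument.
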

We can now present the proof of Corollary~\ref{theo:specific}.
\begin{proof}[Proof of Corollary~\ref{theo:specific}]
  There is no obvious counterpart  in \textbf{C1}, \textbf{C2} and \textbf{C4}
  to   \eqref{eq:hla3:one}   and    \eqref{eq:hla3:three}   appearing   within
  \textbf{A3}.   Yet,  under  \textbf{C2}  and  \textbf{C4},  $\bG_{n,  \hnk}$
  consistently estimates $\bG_{0}$ and $\bQ_{n,\hnk}^{*}$ converges to a limit
  $\bQ_{1}$ that  may differ  from $\bQ_{0}$.  Moreover,  since $\bG_{n,\hnk}$
  and $\bG_{0}$ are bounded away from zero by \textbf{C1}, we have
  \begin{equation}
    \label{eq:cvg:Dstar}
    P_{0} (D^{*} (P_{n,\hnk}^{*}) - D^{*} (P_{1}))^{2} = o_{P} (1),
  \end{equation}
  where $P_{1} \in  \xM$ is any element  of model $\xM$ of which  the $Q$- and
  $\bG$-components  equal $Q_{1}  \equiv (Q_{W,0},  \bQ_{1})$ and  $\bG_{0}$ (see
  proof below).  By Lemma~\ref{lem:Donsker}, $D^{*} (P_{n,\hnk}^{*})$ falls in
  a $P_{0}$-Donsker  class with $P_{0}$-probability  tending to one.   It thus
  holds that
  \begin{equation*}
    (P_{n} - P_{0}) (D^{*} (P_{n,\hnk}^{*}) - D^{*} (P_{1})) = o_{P} (1/\sqrt{n}),
  \end{equation*}
  as requested in \eqref{eq:hla3:one} of \textbf{A3}. In addition, the following
  convergence also occurs,
  \begin{equation}
    \label{eq:hla3:three:spec}
    \Rem_{20}   (\bQ_{n,\hnk}^{*},    \bG_{n,\hnk})   -    \Rem_{20}   (\bQ_{1},
    \bG_{n,\hnk}) = o_{P} (1/\sqrt{n}),
  \end{equation}
  as  requested in  \eqref{eq:hla3:three}  of \textbf{A3}  (see proof  below).
  Consequently, \textbf{C1}, \textbf{C2} and \textbf{C4} imply \textbf{A3}.

  Let  us now  turn to  assumption  \textbf{A5}.  To  alleviate notation,  let
  $\bG_{n,h}''(W)$ be the second order  derivative of $t \mapsto \bG_{n,t}(W)$
  at   $h   \in   \xT$   under   \textbf{C1}.    Given   the   definition   of
  $D^{*}      (Q_{1},      \bG_{n,t})      (O)$,      see      \eqref{eq:EIC},
  $t \mapsto  D^{*} (Q_{1}, \bG_{n,t})  (O)$ is twice differentiable  on $\xT$
  and, for each $h \in \xT$,
  \begin{equation*}
    \partial_{h}^{2}  D^{*} (Q_{1},  \bG_{n,\Cdot}) (O)  = (Y  - \bQ_{1}(A,W))
    \times    \left(\frac{\bG_{n,h}''(W)}{\ell\bG_{n,h}(A,W)^{2}}   -    2(2A-1)
      \frac{\bG_{n,h}'(W)^{2}}{\ell\bG_{n,h}(A,W)^{3}}\right).  
  \end{equation*}
  Obviously, under \textbf{C1}, there exists  a universal constant $C_{3} > 0$
  such      that     the      supremum     in      $h     \in      \xT$     of
  $\partial_{h}^{2} D^{*} (Q_{1}, \bG_{n,\Cdot}) (O)$ is $P_{0}$-almost surely
  smaller        than         $C_{3}$.         Consequently,        assumption
  \textbf{A1}$(\bQ_{1}, \hnk, C_{3})$ is met.  In addition, we show below that
  \eqref{eq:hla5:one}, \eqref{eq:hla5:two} and \eqref{eq:hla5:three} hold true
  whenever \textbf{C1} to \textbf{C4} are met.

  In summary, \textbf{A2} is satisfied by construction of $\psi_{n,\hnk}^{*}$,
  see \eqref{eq:two:eqns:solved:recur};  \textbf{A4} is assumed to  hold true;
  \textbf{A3}  and \textbf{A5}  are met.   Thus, Theorem~\ref{theo:high:level}
  applies and implies the result stated in Corollary~\ref{theo:specific}. This
  completes the proof.
\end{proof}

\begin{proof}[Proof of \eqref{eq:cvg:Dstar}]
  Suppose that  the assumptions  of Corollary~\ref{theo:specific} are  met and
  recall  decomposition~\eqref{eq:EIC}.   To   alleviate  notation,  introduce
  $\bQ_{n}                      \equiv                      \bQ_{n,\hnk}^{*}$,
  $\bq_{n}  \equiv  \bQ_{n,\hnk}^{*}(1,\cdot)- \bQ_{n,\hnk}^{*}(0,\cdot)$  and
  $\bG_{n} \equiv  \bG_{n,\hnk}$. By Lemma~\ref{lem:Donsker},  $\bq_{n}$ falls
  in a  $P_{0}$-Donsker class with  $P_{0}$-probability tending to  one. Using
  repeatedly inequality $(a+b)^{2} \leq 2 (a^{2} + b^{2})$, we obtain
  \begin{eqnarray*}
    P_{0}(D^{*} (P_{n,\hnk}^{*}) - D^{*} (P_{1}))^{2} 
    &\lesssim&      P_{0}      (\ell\bG_{n}     -      \ell\bG_{0})^{2}      +
               P_{0}(\bQ_{n}\ell\bG_{0} - \bQ_{1}\ell\bG_{n})^{2} \\
    && + P_{0}(\bq_{n} - \bq_{1})^{2} + (\psi_{n,\hnk}^{*} - \psi_{0})^{2}\\ 
    &\lesssim& P_{0} (\bG_{n} - \bG_{0})^{2} + P_{0}(\bQ_{n} - \bQ_{1})^{2} \\
    && + P_{0}(\bq_{n} - \bq_{1})^{2} + (\psi_{n,\hnk}^{*} - \psi_{0})^{2}\\ 
    &=& P_{0}(\bq_{n}  - \bq_{1})^{2}  + (\psi_{n,\hnk}^{*} -  \psi_{0})^{2} +
        o_{P} (1). 
  \end{eqnarray*}
  The  assumptions of  Lemma~\ref{lem:easy} are  met too.   Therefore, we  can
  retrieve the bound
  \begin{equation*}
    P_{0}(\bq_{n} - \bq_{1})^{2} \lesssim P_{0}(\bQ_{n} - \bQ_{1})^{2} = o_{P}
    (1) 
  \end{equation*}
  from    its    proof    and    assert    that    its    conclusion    holds:
  $(\psi_{n,\hnk}^{*} -  \psi_{0}) = o_{P}  (1)$. This completes the  proof of
  \eqref{eq:cvg:Dstar}.
\end{proof}

\begin{proof}[Proof of \eqref{eq:hla3:three:spec}]
  Suppose that the assumptions of Corollary~\ref{theo:specific} are met. In view
  of \eqref{eq:remainder}, we have
  \begin{eqnarray*}
    T_{n} 
    &\equiv& |\Rem_{20}   (\bQ_{n,\hnk}^{*},    \bG_{n,\hnk})   -    \Rem_{20}   (\bQ_{1},
             \bG_{n,\hnk})| \\
    &=&           \left|E_{P_{0}}          \left[(2A-1)           \left(1          -
        \frac{\ell\bG_{0}(A,         W)}{\ell\bG_{n,\hnk}(A,        W)}\right)
        (\bQ_{n,\hnk}^{*} (A,W) - \bQ_{1} (A,W))\right]\right|.
  \end{eqnarray*}
  Therefore, the  Cauchy-Schwarz inequality and equality  $(\ell\bG_{n,\hnk} -
  \ell\bG_{0})^{2} = (\bG_{n,\hnk} - \bG_{0})^{2}$ yield
  \begin{equation*}
    T_{n}^{2}   \lesssim  P_{0}   (\bG_{n,\hnk}   -   \bG_{0})^{2}  \times   P_{0}
    (\bQ_{n,\hnk}^{*} - \bQ_{1})^{2} = o_{P} (1/n),
  \end{equation*}
  which completes the proof of \eqref{eq:hla3:three:spec}.
\end{proof}

\begin{proof}[Proof    of    \eqref{eq:hla5:one}    in    the    context    of
  Section~\ref{sec:ctmle_con}] 
  Suppose that the assumptions of Corollary~\ref{theo:specific} are met. Using
  \textbf{C1} and the Taylor-Lagrange inequality yields
  \begin{equation*}
    |\bG_{n,\th_{n}} (W) - \bG_{n,\hnk}(W)| \lesssim |\th_{n} - \hnk| 
  \end{equation*}
  hence
  $P_{0} (\bG_{n,\th_{n}} - \bG_{n,\hnk})^{2}  \lesssim (\th_{n} - \hnk)^{2} =
  o_{P} (1)$ (with much to spare). Now, observe that
  \begin{multline*}
    \left(D^{*} (Q_{1}, \bG_{n,\hnk})  - D^{*} (Q_{1}, \bG_{n,\th_{n}})\right)
    (O)       =      (Y       -       \bQ_{1}      (A,W))       (2A-1)\\\times
    \left(\frac{1}{\ell\bG_{n,\hnk}(A,W)}                                    -
      \frac{1}{\ell\bG_{n,\th_{n}}(A,W)}\right),
  \end{multline*}
  which evidently implies the upper bound
  \begin{eqnarray*}
    \left|\left(D^{*} (Q_{1}, \bG_{n,\hnk})  - D^{*} (Q_{1}, \bG_{n,\th_{n}})\right)
    (O)\right|
    &\lesssim& |\ell\bG_{n,\hnk}(A,W) - \ell\bG_{n,\th_{n}} (A,W)|\\
    &=& |\bG_{n,\hnk}(W)  - \bG_{n,\th_{n}} (W)|.
  \end{eqnarray*}
  Therefore,
  $P_{0} (D^{*} (Q_{1}, \bG_{n,\hnk})  - D^{*} (Q_{1}, \bG_{n,\th_{n}}))^{2} =
  o_{P}   (1)$.    Furthermore,    Lemma~\ref{lem:Donsker}   guarantees   that
  $D^{*} (Q_{1},  \bG_{n,\hnk}) - D^{*}  (Q_{1}, \bG_{n,\th_{n}})$ falls  in a
  $P_{0}$-Donsker  class with  $P_{0}$-probability tending  to one.   The same
  argument as the one that lead to \eqref{eq:asymp:exp:intro}
  in   Section~\ref{subsec:select:uncoop}   thus   completes  the   proof   of
  \eqref{eq:hla5:one}.
\end{proof}

\begin{proof}[Proof    of    \eqref{eq:hla5:two}    in    the    context    of
  Section~\ref{sec:ctmle_con}] 
  Suppose that  the assumptions  of Corollary~\ref{theo:specific} are  met. In
  view of \eqref{eq:EIC}, we have
  \begin{eqnarray}
    \label{eq:partial:star}
    \partial_{\hnk} D^{*} (Q_{n,\hnk}^{*}, \bG_{n,\Cdot}) (O)
    &=& \frac{2A-1}{\ell\bG_{n,\hnk} (A,W)} \bG_{n,\hnk}'(W)
        (Y - \bQ_{n,\hnk}^{*}(A,W)),\\
    \label{eq:partial:one}
    \partial_{\hnk} D^{*} (Q_{1}, \bG_{n,\Cdot}) (O)
    &=& \frac{2A-1}{\ell\bG_{n,\hnk} (A,W)} \bG_{n,\hnk}'(W)
        (Y - \bQ_{1}(A,W)),
  \end{eqnarray}
  hence 
  \begin{equation}
    \label{eq:partial:diff}
    \left|\left(\partial_{\hnk} D^{*} (Q_{n,\hnk}, \bG_{n,\Cdot})
        - \partial_{\hnk} D^{*} (Q_{1}, \bG_{n,\Cdot})\right)\right|
    \lesssim |\bQ_{1} - \bQ_{n,\hnk}^{*}|.
  \end{equation}
  Therefore, the Cauchy-Schwarz inequality implies the bound
  \begin{multline*}
    \left((\hnk   -    \th_{n})   \times    P_{0}\left(\partial_{\hnk}   D^{*}
        (Q_{n,\hnk},   \bG_{n,\Cdot})   -    \partial_{\hnk}   D^{*}   (Q_{1},
        \bG_{n,\Cdot})\right)\right)^{2} \\ 
    \lesssim   (\hnk  -   \th_{n})^{2}   \times   P_{0}  (\bQ_{n,\hnk}^{*}   -
    \bQ_{1})^{2} = o_{P} (1/n), 
  \end{multline*}
  thus completing the proof of \eqref{eq:hla5:two}.
\end{proof}

\begin{proof}[Proof    of    \eqref{eq:hla5:three}    in    the    context    of
  Section~\ref{sec:ctmle_con}] 
  Suppose that  the assumptions  of Corollary~\ref{theo:specific} are  met. By
  Lemma~\ref{lem:Donsker},
  $\partial_{\hnk}  D^{*}  (Q_{n,\hnk}^{*}, \bG_{n,\Cdot})  -  \partial_{\hnk}
  D^{*}  (Q_{1},  \bG_{n,\Cdot})$  falls   in  a  $P_{0}$-Donsker  class  with
  $P_{0}$-probability tending to one.   In view of \eqref{eq:partial:diff}, it
  holds that
  \begin{equation*}
    P_{0}\left(\partial_{\hnk} D^{*} (Q_{n,\hnk}, \bG_{n,\Cdot})
      -  \partial_{\hnk}  D^{*}  (Q_{1}, \bG_{n,\Cdot})\right)^{2}  =  o_{P}
    (1). 
  \end{equation*}
  The same argument as the one that lead to \eqref{eq:asymp:exp:intro}
  in   Section~\ref{subsec:select:uncoop}   thus   completes  the   proof   of
  \eqref{eq:hla5:three}.
\end{proof}

\begin{proof}[Proof of Lemma~\ref{lem:Donsker}]
  We proceed by order of appearance in the statement of the lemma. First, note
  that
  \begin{equation*}
    \bQ_{n,\hnk}^{*}(1,W)      -      \bQ_{n,\hnk}^{*}(0,W)      =      (2A-1)
    \bQ_{n,\hnk}^{*}(A,W). 
  \end{equation*}
  Second,    derive    from    \eqref{eq:EIC}   the    explicit    forms    of
  $D^{*}        (P_{n,\hnk}^{*})$,        $D^{*}(Q_{1},        \bG_{n,\hnk})$,
  $D^{*}(Q_{1}, \bG_{n,\th_{n}})$, and of the difference of the two last ones.
  Third,         recall         the          explicit         forms         of
  $\partial_{\hnk}     D^{*}      (Q_{n,\hnk}^{*},     \bG_{n,\Cdot})$     and
  $D^{*}(Q_{1},   \bG_{n,\hnk})$   given    in   \eqref{eq:partial:star}   and
  \eqref{eq:partial:one},  and  derive from  them  that  of their  difference.
  Thanks  to  \textbf{C4}  and   the  above  explicit  forms,  straightforward
  applications of \citep[Theorem~2.10.6]{vdVW96} yield the result.
\end{proof}

We conclude this article on a  final remark about \As{4}{$P_{n}, k$}.  Suppose
that \As{2}{$P_{n},  k$} is met. If,  in light of \textbf{C1},  we also assume
that $t  \mapsto \bG_{n,t} (W)$ is  twice differentiable in a  neighborhood of
$h_{n,k}$, then the assumptions of the implicit function theorem are satisfied
and $h \mapsto \varepsilon_{n,h,k}$ is differentiable around $h_{n,k}$.

\bibliographystyle{abbrvnat}
\bibliography{references}

\end{document}